\newtheorem{definition}{Definition}[section]
\newtheorem{theorem}{Theorem}[section]
\newtheorem{lemma}{Lemma}[section]
\newtheorem{corollary}{Corollary}[section]
\newtheorem{proposition}{Proposition}[section]
\newbox\ProofSym
\newcommand{\R}{\mathbb{R}}
\newcommand{\J}{\mathbb{J}}
\newcommand{\D}{\mathcal{D}}
\newcommand{\CC}{\mathbb{C}}
\newcommand{\del}{\mathrm{Del}}
\newcommand{\vor}{\mathrm{Vor}}
\newcommand{\ST}{\mathit{SplitT}}
\newcommand{\LMS}{\mathrm{LMS}}
\newcommand{\LIS}{\mathrm{LIS}}
\newcommand{\brac}[1]{\bigl[ #1 \bigr]}
\newcommand{\E}[1] {\mathrm{E}\bigl[#1\bigr]}
\newcommand{\cancel}[1] {}
\begin{document}

\title{A Generalization of Self-Improving Algorithms\thanks{Research of Cheng, Jin and Wong are supported by Research Grants Council, Hong Kong, China (project no.~16200317). Research of Chiu supported by ERC StG 757609. The conference version of this manuscript appeared in Proceeding of the 36th International Symposium on Computational Geometry, 2020, 29:1--29:13~\cite{GSI-socg-20}.}}

\author{Siu-Wing Cheng\thanks{Department of Computer Science and Engineering, HKUST, Hong Kong, China.}
  \quad \quad Man-Kwun Chiu\thanks{Institut f\"ur Informatik, Freie Universit\"at Berlin, Berlin, Germany}
  \quad \quad Kai Jin\thanks{School of Intelligent Systems Engineering, Sun Yat-Sen University, Shenzhen, China}
  \quad\quad Man Ting Wong\footnotemark[2]}

\maketitle

\begin{abstract}
Ailon~et~al.~[SICOMP'11] proposed self-improving algorithms for sorting and Delaunay triangulation (DT) when the input instances $x_1,\cdots,x_n$ follow some unknown \emph{product distribution}.  That is, $x_i$ comes from a fixed unknown distribution $\D_i$, and the $x_i$'s are drawn independently.  After spending $O(n^{1+\varepsilon})$ time in a learning phase, the subsequent expected running time is $O((n+ H)/\varepsilon)$, where $H \in \{H_\mathrm{S},H_\mathrm{DT}\}$, and $H_\mathrm{S}$ and $H_\mathrm{DT}$ are the entropies of the distributions of the sorting and DT output, respectively.  In this paper, we allow dependence among the $x_i$'s under the \emph{group product distribution}.  There is a hidden partition of $[1,n]$ into groups; the $x_i$'s in the $k$-th group are fixed unknown functions of the same hidden variable $u_k$; and the $u_k$'s are drawn from an unknown product distribution.  We describe self-improving algorithms for sorting and DT under this model when the functions that map $u_k$ to $x_i$'s are well-behaved.  After an $O(\mathrm{poly}(n))$-time training phase, we achieve $O(n + H_\mathrm{S})$ and $O(n\alpha(n) + H_\mathrm{DT})$ expected running times for sorting and DT, respectively, where $\alpha(\cdot)$ is the inverse Ackermann function.

\textbf{keywords} expected running time, entropy, sorting, Delaunay triangulation.
\end{abstract}

\maketitle

\section{Introduction}
\label{sect:introduction}

Ailon et al.~\cite{SICOMP11selfimp} proposed \emph{self-improving algorithms} for sorting and Delaunay triangulation (DT).  The setting is that the input is drawn from an unknown but fixed distribution $\D$.  The goal is to automatically compute some auxiliary structures in a \emph{training phase}, so that these structures allow an algorithm to achieve an expected running time potentially better than the worst-case optimum in the subsequent \emph{operation phase}, where the expectation is taken over the distribution $\D$.  The expected running time in the operation phase is known as the \emph{limiting complexity}.

This model is attractive for two reasons.  First, it addresses the criticism that worst-case time complexity may not be relevant because worst-case input may occur rarely, if at all.  Second, it is more general than some previous average-case analyses that deal with distributions that have simple, compact formulations such as the uniform, Poisson, and Gaussian distributions.  There is still a constraint in the work of  Ailon et al.~\cite{SICOMP11selfimp}: $\D$ must be a \emph{product distribution}, meaning that each input item follows a distribution and two distinct input items are independently drawn from their respective distributions.

Self-improving algorithms under product distributions have been proposed for sorting, DT, 2D coordinatewise maxima, and 2D convex hull.  For sorting, Ailon et al.~showed that a limiting complexity of $O((n  + H_\mathrm{S})/\varepsilon)$ can be achieved for any $\varepsilon \in (0,1)$, where $H_\mathrm{S}$ denotes the entropy of the distribution of the sorting output.  This limiting complexity is optimal in the comparison-based model by Shannon's theory~\cite{cover06}.   The training phase uses $O(n^{\varepsilon})$ input instances and runs in $O(n^{1+\varepsilon})$ time.  The probability of achieving the stated limiting complexity is at least $1 - 1/n$.  Ailon et al.~\cite{SICOMP11selfimp} also proposed a self-improving algorithm for DT.  The performance of the training phase is the same.  The limiting complexity is $O((n  + H_\mathrm{DT})/\varepsilon)$, where $H_\mathrm{DT}$ denotes the entropy of the distribution of the Delaunay triangulations.  Self-improving algorithms for 2D coordinatewise maxima and convex hulls have been developed by Clarkson et al.~\cite{clarkson14}.  The limiting complexities for the 2D maxima and 2D convex hull problems are $O(\mathrm{OptM} + n)$ and $O(\mathrm{OptC} + n\log\log n)$, respectively, where OptM and OptC are the minimum expected depths of linear decision trees for the maxima and convex hull problems.

It is natural to allow dependence among input items.  However, some restriction is necessary because Ailon et al.~showed that $\Omega(2^{n\log n})$ bits of storage are necessary for optimally sorting $n$ numbers if there is no restriction on the input distribution.  In~\cite{Algorithmica-20}, two extensions are considered for sorting.  The first extension assumes that there is a hidden partition of $[1,n]$ into groups $G_k$'s, the input numbers with indices in $G_k$ follow unknown linear functions of a common parameter $u_k$, and the parameters $u_1, u_2, \cdots$ follow a product distribution.  A limiting complexity of $O((n + H_\mathrm{S})/\varepsilon)$ can be achieved after a training phase that processes $O(n^\varepsilon)$ instances in $O(n^2\log^3 n)$ time.
The second extension assumes that the input is a hidden mixture of product distributions and an upper bound $m$ is given on the number of distributions in the mixture.  A limiting complexity of $O((n\log m + H_\mathrm{S})/\varepsilon)$ can be achieved after a training phase that processes $O(mn\log(mn))$ instances in $O(mn\log^2(mn) + m^\varepsilon n^{1+\varepsilon}\log(mn))$ time.

In this paper, we revisit the problems of sorting and DT and allow dependence among input items.  We assume that the following conditions hold:
 \begin{enumerate}
 \item[(i)] There is a hidden partition of $[1,n]$ into groups $G_1, G_2, \cdots$ such that for each $G_k$, there is a hidden parameter $u_k$ such that for all $i \in G_k$, the $i$-th input item $x_i$ is equal to $h_{i,k}(u_k)$ for some unknown function $h_{i,k}$.
     For sorting, the parameters $u_1,u_2,\ldots$ belong to $\R$ and $h_{i,k}(u_k) \in \R$. For DT, the parameters $u_1,u_2,\ldots$ belong to $\R^2$ and $h_{i,k}(u_k) \in \R^2$.
 \item[(ii)] The parameters $u_1, u_2, \ldots$ follow a product distribution.
 \end{enumerate}
 We call such an input distribution a \emph{group product distribution}.  The groups $G_k$'s are not given and they have to be learned in the training phase.
 Our generalization has the following features.
\begin{itemize}
	\item For sorting, we do not assume any specific formulation of the functions $h_{i,k}$'s.
    Neither is any oracle given for evaluating them.
    We only assume that the graph of each $h_{i,k}$ is a connected curve with at most $c_0$ extrema, where $c_0$ is a known constant,
      and the graphs of two distinct $h_{i,k}$ and $h_{j,k}$ intersect in $O(1)$ points.
    Our algorithm does not reconstruct or approximate the $h_{i,k}$'s.
	
	\item For DT, we assume that, for each $i \in G_k$, there are bivariate polynomials $h_{i,k}^x$ and $h_{i,k}^y$ in $u_k$ that give the $x$- and $y$-coordinates of the input point $x_i$, respectively.
          These polynomials have degrees no more than a fixed constant.
          No further information about these polynomials is given.
          Depending on the distribution of $u_k$, it may be impossible to reconstruct the polynomials $h_{i,k}^x$ and $h_{i,k}^y$ using the input data.
\end{itemize}

\noindent Let $\alpha(\cdot)$ be the inverse Ackermann function.  We prove that an optimal $O(n + H_\mathrm{S})$ limiting complexity for sorting and a nearly optimal $O(n\alpha(n) + H_\mathrm{DT})$ limiting complexity for DT can be achieved with probability at least $1 - O(1/n)$ after a polynomial-time training phase.  The training   takes $\tilde{O}(n^3)$ time for sorting and $\tilde{O}(n^{10})$ time for DT.

We use several new techniques to obtain our results.  To learn the hidden partition for sorting, we need to test if two indices $i$ and $j$ are in the same group.  By collecting $x_i$'s and $x_j$'s from some instances, we can reduce the test to finding the longest monotonic subsequence (LMS) among the points $(x_i,x_j)$'s.  We establish a threshold such that $i$ and $j$ are in the same group with very high probability if and only if  the LMS has length greater than or equal to the threshold.  In the operation phase, we will set up $O(n)$ ordered intervals from left to right, and we need to sort the subset of an input instance $I$ within an interval.  Under the group product distribution, there can be a large subset $I' \subseteq I$ from the same group that reside in an interval, which does not happen in the case of product distribution.  We do not have enough time to sort $I'$ from scratch.  Instead, we need to recognize that $I'$ gives a result similar to what we have seen in the training phase.  To this end, we must be able to ``read off'' the answer from some precomputed information in the training phase in order to beat the worst-case bound.  We construct a trie in the training phase which will be used in the operation phase to efficiently return an encoding of $I'$ that is equivalent to the \emph{Lehmer code}~\cite{lehmer60}.  Given such an encoding of $I'$, we can sort $I'$ in linear time.  This trie structure is essential for achieving the optimal limiting complexity.

The hidden partition for DT seems harder to learn given the 2D nature of the problem.  We employ tools from algebraic geometry to do so, which is the reason for requiring the $h_{i,k}^x$'s and $h_{i,k}^y$'s to be bivariate polynomials of fixed degree.  In the operation phase, we need to compute the Voronoi diagram of the subsets of $I$ inside the triangles of a canonical Delaunay triangulation.  Under the group product distribution, a large subset of $I$ may fall into the same triangle, which does not happen in the case of product distribution.  The challenge is to compute certain information and store it in a compact way in the training phase so that we can ``read off'' the Voronoi diagram needed.
The split tree fits this role nicely because the Delaunay triangulation can be computed from it in linear expected time~\cite{DTbySort,CK95}.  We expand the trie structure used for sorting to record different split trees that are generated in the training phase to facilitate the DT computation in the operation phase.


\section{Self-improving sorter}

Let $\D$ denote the group product distribution of the input instances.  We use $I = (x_1,x_2,\ldots,x_n)$ to denote an input instance.  We use  $G_1, G_2, \ldots$ to denote the groups in the hidden partition of $[1,n]$.  Each group $G_k$ is governed by a hidden parameter $u_k\in \R$, and for each $i \in G_k$, there is a hidden function $h_{i,k}$ that determines $x_i = h_{i,k}(u_k)$.  We do not impose any particular formulation of the $h_{i,k}$'s as long as they satisfy the following properties:
\begin{itemize}
	\item For every $G_k$ and every $i\in G_k$, the graph of $h_{i,k}$ is a connected curve that has at most $c_0$ extrema for a known constant $c_0$.
	\item For every $G_k$ and every distinct pair $i,j\in G_k$, the graphs of $h_{i,k}$ and $h_{j,k}$ intersect in $O(1)$ points.
	\item For every $G_k$, every $i\in G_k$, and every $c\in \R$, $\Pr\bigl[h_{i,k}(u_k)=c\bigr]=0$.
\end{itemize}
Although $G_k$ is a subset of $[1,n]$, for convenience, we sometimes use ``input number in $G_k$'' to mean ``input number whose index belongs to $G_k$'' when there is no confusion.

\subsection{Hidden partition}

We first learn the hidden partition of $[1,n]$.  Given a sequence $\sigma$ of real numbers, let $\LMS(\sigma)$ be the length of the \emph{longest monotone subsequence} of $\sigma$ (either increasing or decreasing), and let $\LIS(\sigma)$ be the length of the \emph{longest increasing subsequence} of $\sigma$.

We describe how to test if the indices 1 and 2 belong to the same group.  The other index pairs can be handled in the same way.
Let $\ell =\max\bigl\{100^3,\, (90\ln(4n^3))^2,\, (6c_0+3)^2\bigr\}$, where $c_0$ is the constant upper bound on the number of extrema of $h_{i,k}$.  Take $\ell$ instances (from the group product distribution $\D$).  Let $I_1, I_2, \cdots, I_\ell$ denote these instances sorted in increasing order of their first items.  That is, $x_1^{(1)} < \cdots < x_1^{(\ell)}$, where $x_1^{(i)}$ denotes the first item in $I_i$.  Similarly, $x_2^{(i)}$ denotes the second item in $I_i$.  Then, compute $\LIS\bigl(x_2^{(1)},\ldots,x_2^{(\ell)}\bigr)$ and $\LIS\bigl(x_2^{(\ell)},\ldots,x_2^{(1)}\bigr)$ in $O(\ell\log \ell)$ time by some folklore method.  The larger of the two is $\LMS\bigl(x_2^{(1)},\ldots,x_2^{(\ell)}\bigr)$.  If $\LMS\bigl(x_2^{(1)},\ldots,x_2^{(\ell)}\bigr)\geq \ell/(2c_0+1)$, report that 1 and 2 are in the same group.  Otherwise, report that 1 and 2 are in different groups.


We show that the above test works correctly with high probability.  The following result is obtained by applying the first theorem in~\cite{KIM-LIS} and the setting that $\ell \geq \max \{100^3,\,(90\ln(4n^3))^2\}$.

\begin{lemma}[\cite{KIM-LIS}]
	\label{lem:perm}
	If $\sigma$ is a permutation of $[1,\ell]$ drawn uniformly at random, then
        \[\Pr\Bigl[\LIS(\sigma)\geq 3\sqrt{\ell}\Bigr] \leq \exp \bigl(-\sqrt \ell/90\bigr) = O(1/n^3).\]
\end{lemma}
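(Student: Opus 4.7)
The plan is to derive the lemma as a direct application of the first theorem of~\cite{KIM-LIS}, which is a concentration inequality for the length of the longest increasing subsequence of a uniformly random permutation of $[1,\ell]$. Recall that the typical value of $\LIS$ on a random permutation of $[1,\ell]$ is $\approx 2\sqrt{\ell}$, so the threshold $3\sqrt{\ell}$ corresponds to a fixed constant-factor upward deviation from the mean. Kim's theorem gives, in the regime of sufficiently large $\ell$, an explicit exponential tail bound of the form $\exp(-\Omega(\sqrt{\ell}))$ for such a deviation, with explicitly computable constants. My first step is to locate the precise formulation in~\cite{KIM-LIS}, substitute the threshold $3\sqrt{\ell}$, and read off the exponent constant, which (after tracking Kim's constants) turns out to be $1/90$; the hypothesis $\ell \geq 100^3$ is what places us in the regime where this explicit constant is valid.

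Granted this, the inequality $\Pr[\LIS(\sigma) \geq 3\sqrt{\ell}] \leq \exp(-\sqrt{\ell}/90)$ follows immediately. The remaining step is the elementary reduction to $O(1/n^3)$. Here I would use the second lower bound in the definition of $\ell$, namely $\ell \geq (90\ln(4n^3))^2$, which yields $\sqrt{\ell} \geq 90\ln(4n^3)$ and hence $\sqrt{\ell}/90 \geq \ln(4n^3)$. Exponentiating negatively gives $\exp(-\sqrt{\ell}/90) \leq 1/(4n^3) = O(1/n^3)$, closing the chain.

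The only real obstacle is bookkeeping Kim's constants: Kim's theorem is often stated in an asymptotic form, and one must extract the explicit constant $1/90$ in the exponent together with the explicit threshold $\ell \geq 100^3$ beyond which the bound is valid. The two numerical thresholds in the definition of $\ell$ are tuned precisely to (i) unlock Kim's explicit constants and (ii) convert the resulting exponential into the target $O(1/n^3)$ bound. Since the lemma is a verbatim specialization of a cited result, no further mathematical content is needed beyond this bookkeeping and the one-line algebraic conversion above.
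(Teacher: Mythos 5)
Your proposal is correct and matches the paper's treatment: the paper also cites Lemma~\ref{lem:perm} as a direct specialization of the first theorem in~\cite{KIM-LIS}, with the lower bound $\ell \geq 100^3$ supplying the regime where Kim's explicit constants hold and $\ell \geq (90\ln(4n^3))^2$ giving $\sqrt{\ell}/90 \geq \ln(4n^3)$, hence $\exp(-\sqrt{\ell}/90) \leq 1/(4n^3) = O(1/n^3)$. No discrepancy.
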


We are ready to show the correctness of our test procedure.

\begin{lemma}\label{lemma:learn-parititon-ST}
If the indices $1$ and $2$ belong to the same group, $\LMS\bigl(x_2^{(1)},\ldots,x_2^{(\ell)}\bigr)$ $\geq \ell/(2c_0+1)$; otherwise,
$\Pr\Bigl[\LMS\bigl(x_2^{(1)},\ldots,x_2^{(\ell)}\bigr)\geq \ell/(2c_0+1)\Bigr] = O(1/n^3)$.
\end{lemma}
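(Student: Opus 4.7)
I would handle the same-group and different-groups cases separately, because they require fundamentally different arguments (deterministic vs.\ probabilistic).

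For the same-group case, suppose $1,2\in G_k$, governed by the common parameter $u_k$. Each of $h_{1,k}$ and $h_{2,k}$ has at most $c_0$ extrema, so the union of their critical points (at most $2c_0$ points) partitions $\R$ into at most $2c_0+1$ intervals on each of which \emph{both} $h_{1,k}$ and $h_{2,k}$ are monotone. Let $u_k^{(i)}$ denote the hidden parameter value realized in instance $I_i$. By pigeonhole, some such interval $J$ contains $u_k^{(i)}$ for at least $\lceil \ell/(2c_0+1)\rceil$ indices~$i$. On $J$, both $x_1^{(i)}=h_{1,k}(u_k^{(i)})$ and $x_2^{(i)}=h_{2,k}(u_k^{(i)})$ are monotone in $u_k^{(i)}$, so $x_2^{(i)}$ is a monotone function of $x_1^{(i)}$ on this set of indices. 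Hence, reading the corresponding values of $x_2^{(i)}$ in the order induced by sorting the $x_1^{(i)}$'s yields a monotone subsequence of length at least $\lceil \ell/(2c_0+1)\rceil$, giving the deterministic bound $\LMS\bigl(x_2^{(1)},\ldots,x_2^{(\ell)}\bigr)\geq \ell/(2c_0+1)$.

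For the different-groups case, let $1\in G_a$ and $2\in G_b$ with $a\neq b$. Since the $u_k$'s follow a product distribution, $u_a$ and $u_b$ are independent, so across the $\ell$ iid instances the sequence $(x_1^{(i)})$ is independent of $(x_2^{(i)})$. The third property in the model ($\Pr[h_{i,k}(u_k)=c]=0$ for every $c$) implies that almost surely all $x_1^{(i)}$'s are distinct and all $x_2^{(i)}$'s are distinct, so the rank permutation $\pi$ of $(x_2^{(i)})$ induced by sorting the $x_1^{(i)}$'s is a.s.\ well defined. By independence, conditioning on the (a.s.\ distinct) multiset of $x_2^{(i)}$ values and on the ordering of the $x_1^{(i)}$'s, $\pi$ is a uniform random permutation of $[1,\ell]$. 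Since $\LMS(\pi)=\max(\LIS(\pi),\LIS(\pi^{\mathrm{rev}}))$ and $\pi^{\mathrm{rev}}$ is also uniformly distributed, Lemma~\ref{lem:perm} and a union bound give $\Pr\bigl[\LMS\geq 3\sqrt{\ell}\,\bigr]=O(1/n^3)$. The choice $\ell\geq (6c_0+3)^2$ forces $3\sqrt{\ell}\leq \ell/(2c_0+1)$, so the same $O(1/n^3)$ bound applies at the threshold $\ell/(2c_0+1)$.

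The subtlest step is arguing the uniformity of $\pi$ in the second case: both independence of $u_a$ from $u_b$ \emph{and} the no-atom property are essential, since atoms in the distribution of $x_2$ would place positive probability on ties and thereby bias the induced ordering away from uniform. The same-group argument is essentially a pigeonhole plus the observation that a composition of two monotone functions is monotone, and I do not expect any real obstacle there.
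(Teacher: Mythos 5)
Your proof is correct and follows essentially the same route as the paper: pigeonhole over the intervals between the critical points of $h_{1,k}$ and $h_{2,k}$ for the same-group case, and the observation that the rank sequence of $x_2$ (sorted by $x_1$) is a uniform random permutation in the different-groups case, then Lemma~\ref{lem:perm} plus a union bound. The only difference is that you spell out more explicitly why the rank permutation is uniform (independence of $u_a,u_b$ together with the no-atom assumption) and why a monotone-in-$u_k$ composition yields a monotone subsequence, steps the paper states tersely; your filled-in justifications are sound.
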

\begin{proof}
	Suppose that 1 and 2 belong to group $G_k$.  Then, $x_1=h_{1,k}(u_k)$ and $x_2=h_{2,k}(u_k)$.  Let $t_1,\ldots,t_m$, where $m\leq 2c_0$, be the values of $u_k$ at the extrema of $h_{1,k}$ and $h_{2,k}$.  Let $t_0=-\infty$ and let $t_{m+1}=+\infty$.  By the pigeonhole principle, there exists $j\in [0,m]$ such that
	$\bigl|\bigl\{u_{k}^{(1)},\ldots u_{k}^{(\ell)}\bigr\}\cap [t_j,t_{j+1})\bigr| \geq \ell/(m+1)\geq \ell/(2c_0+1)$, and both $h_{1,k}$ and $h_{2,k}$ are monotonic in $[t_j,t_{j+1})$.  It follows that $\LMS\bigl(x_2^{(1)},\ldots,x_2^{(\ell)}\bigr)\geq \ell/(2c_0+1)$.
	
	Suppose that 1 and 2 belong to different groups.  The distribution of $x_2^{(1)},\cdots,x_2^{(\ell)}$ is the same as the uniform distribution of the permutations of $[1,\ell]$.  Lemma~\ref{lem:perm} implies that
  \[\Pr\bigl[\LIS\bigl(x_2^{(1)},\ldots,x_2^{(\ell)}\bigr)\geq 3\sqrt \ell\bigr] \leq O(1/n^3).\]
Symmetrically,
  \[\Pr\bigl[\LIS\bigl(x_2^{(\ell)},\ldots,x_2^{(1)}\bigr)\geq 3\sqrt \ell\bigr] \leq O(1/n^3).\]
As $\LMS\bigl(x_2^{(1)},\ldots,x_2^{(\ell)}\bigr)=\max\bigl\{\LIS\bigl(x_2^{(1)},\ldots,x_2^{(\ell)}\bigr),\LIS\bigl(x_2^{(\ell)},\ldots,x_2^{(1)}\bigr)\bigr\}$, by the union bound, we get $\Pr\bigl[\LMS\bigl(x_2^{(1)},\ldots,x_2^{(\ell)}\bigr)\geq 3\sqrt \ell\bigr]\leq O(1/n^3)$.

Since $\ell \geq (6c_0+3)^2$, we have $3\sqrt \ell \leq \ell/(2c_0+1)$. Hence,
  \[\Pr\bigl[\LMS\bigl(x_2^{(1)},\ldots,x_2^{(\ell)}\bigr)\geq \ell/(2c_0+1)\bigr] = O(1/n^3).\]
\end{proof}

There are $O(n^2)$ index pairs to check, each taking $O(\ell\log \ell)$ time.
We conclude that:
\begin{corollary}
	\label{cor:sort-group}
	The partition of $\, [1,n]$ into groups can be learned in $\tilde{O}(n^2)$ time using $\tilde{O}(n^2)$ input instances.   The probability of success is at least $1 - O(1/n)$.
\end{corollary}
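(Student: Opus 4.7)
The plan is to apply the pairwise test developed above (based on computing $\LMS(x_j^{(1)}, \ldots, x_j^{(\ell)})$ after sorting the $\ell$ training instances by $x_i$) to every one of the $\binom{n}{2}$ index pairs $(i,j)$ in $[1,n]$, and then assemble the results into a partition via union-find. For each pair I would either draw or reuse $\ell = \tilde{O}(1)$ training instances, run the two $\LIS$ computations in $O(\ell \log \ell)$ time, and record whether the test reports ``same group''. A single sweep through all pairs, initializing each index in its own singleton set and unioning whenever the test reports ``same group'', will produce the claimed partition.

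Correctness will follow from Lemma~\ref{lemma:learn-parititon-ST} combined with a union bound. When $i$ and $j$ lie in the same group the test cannot err, and when they lie in different groups the test errs with probability $O(1/n^3)$. Since there are $O(n^2)$ pairs, the probability that any test errs is at most $O(n^2)\cdot O(1/n^3) = O(1/n)$, so with probability at least $1 - O(1/n)$ every verdict is correct, the pairwise relation is exactly the ``same group'' equivalence relation, and union-find recovers the hidden partition. The union bound needs no independence among the tests, so it applies whether we draw fresh instances per pair or share a single batch of $\ell$ instances.

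For the resource accounting, each test costs $O(\ell \log \ell) = O(\mathrm{polylog}(n))$ and there are $O(n^2)$ tests, giving total time $\tilde{O}(n^2)$; the union-find bookkeeping contributes only $O(n^2 \alpha(n))$, which is absorbed. The total number of input instances is at most $\ell\cdot\binom{n}{2} = \tilde{O}(n^2)$ if one prefers independent draws per pair (and in fact $\tilde{O}(1)$ instances suffice if the batch is shared). The main thing to verify---and in my view the only subtlety, since nothing here is hard---is that when $i$ and $j$ lie in different groups the sequence $x_j^{(1)}, \ldots, x_j^{(\ell)}$ sorted by $x_i$ really is a uniformly random permutation, so that Lemma~\ref{lem:perm} applies. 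This is where the continuity hypothesis $\Pr[h_{i,k}(u_k)=c]=0$ enters: it ensures that the $x_i^{(k)}$'s and the $x_j^{(k)}$'s are almost surely distinct, and independence of the groups $G_k$ containing $i$ and $j$ then makes the rank of $x_j^{(k)}$ among the sorted $x_i^{(k)}$'s uniform over $[1,\ell]$. Once this is noted, the corollary is immediate.
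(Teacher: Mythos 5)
Your proposal matches the paper's own (very brief) derivation: run the pairwise LMS test from Lemma~\ref{lemma:learn-parititon-ST} on all $O(n^2)$ index pairs, union-bound the $O(1/n^3)$ false-positive probabilities to get overall success probability $1-O(1/n)$, and observe that each pair costs $O(\ell\log\ell)=\tilde{O}(1)$ time and $\ell=\tilde{O}(1)$ instances. Your side remarks---that the one-sidedness of the test error makes the union-find assembly automatic, and that the union bound tolerates reusing a single batch of instances across all pairs---are correct and consistent with what the paper implicitly assumes.
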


\subsection{The $V$-list}

Following~\cite{SICOMP11selfimp}, we define a \emph{$V$-list}, $(v_0,v_1,\cdots,v_{n+1})$, in the training phase as follows.
Take $\lambda=\lceil n^2\ln n\rceil$ instances $I_1,\ldots,I_\lambda$.
Let $y_1 < \cdots < y_{\lambda n}$ be the numbers in these instances sorted in increasing order.
Define $v_r = y_{\lambda r}$ for all $r \in [1,n]$. Moreover, define $v_0=-\infty$ and $v_{n+1}=+\infty$.
In the following, we take $[v_0,c)$ to be $(-\infty,c)$ for all $c\in \R$.  We use $I \sim \D$ to indicate that $I$ is drawn from the group product distribution $\D$.

\begin{lemma}
	\label{lemma:construct-V-sorting}
	It holds with probability at least $1-1/n^{192}$ that for every $r \in [0,n]$,
        \[\mathrm{E}_{I\sim \D}[{|I \cap [v_r,v_{r+1})|}]=O(1).\]
\end{lemma}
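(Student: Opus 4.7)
The plan is to establish a uniform Glivenko--Cantelli type bound between the cumulative expected count
\[
F(x) := \mathrm{E}_{I\sim\D}\bigl[|I\cap(-\infty,x]|\bigr] = \sum_{i=1}^n \Pr[x_i\leq x]
\]
and its empirical counterpart over the $\lambda$ training instances, and then read off the lemma by evaluating at $x=v_r$.

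First I would rewrite $\mathrm{E}_{I\sim\D}[|I\cap[v_r,v_{r+1})|] = F(v_{r+1})-F(v_r)$. By construction the interval $[v_r,v_{r+1})=[y_{\lambda r},y_{\lambda(r+1)})$ contains exactly $\lambda$ of the $\lambda n$ training samples, i.e.\ a $1/n$ empirical fraction. To make the informal conclusion ``so the true fraction is $\Theta(1/n)$'' rigorous, it suffices to show that, with probability at least $1-1/n^{192}$, the uniform bound $|F(v_r)-r|\leq c$ holds for every $r\in[0,n+1]$ for some absolute constant $c$; the lemma then follows by subtraction with the $O(1)$ constant equal to $2c+1$, and the boundary cases $r=0$, $r=n$ are handled via $F(-\infty)=0$ and $F(+\infty)=n$.

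The concentration step cannot treat the individual $x_i$'s inside a single training instance as independent, because of the group structure. Instead I would apply Hoeffding to the $[0,n]$-valued per-instance counts $B_k(x):=|I_k\cap(-\infty,x]|$, which are genuinely independent across the $\lambda$ i.i.d.\ training instances and each have mean $F(x)$:
\[
\Pr\!\Bigl[\bigl|\textstyle\sum_{k=1}^\lambda B_k(x)-\lambda F(x)\bigr|\geq c\lambda\Bigr] \leq 2\exp\!\bigl(-2c^2\lambda/n^2\bigr) \leq 2n^{-2c^2},
\]
where the last step uses $\lambda\geq n^2\ln n$. A union bound over the $\lambda n = O(n^3\log n)$ training values $\{y_1,\ldots,y_{\lambda n}\}$ costs a factor of $O(n^3\log n)$, so choosing $c$ to be a sufficiently large absolute constant (for instance $c=10$) drives the total failure probability below $1/n^{192}$. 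On the good event, specializing to $x=v_r=y_{\lambda r}$ and using $\sum_k B_k(v_r)=\lambda r$ by definition of $v_r$ gives $|F(v_r)-r|\leq c$ for every $r$, which is exactly what the first paragraph reduced the lemma to.

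The main conceptual subtlety is precisely the loss of independence among the $n$ items inside an instance, which forbids the per-item Chernoff bound that suffices under a product distribution. The remedy---Hoeffding at the instance level with bounded range $[0,n]$---produces only rate $1/n^2$ in the exponent per batch, and it is the quadratic training size $\lambda=\Theta(n^2\log n)$ that exactly compensates, making the empirical-versus-true gap tighter than $1/n$ per item after normalization and hence $O(1)$ after summing the $n$ items, which is what the lemma asserts. The polynomial failure rate $n^{-192}$ is simply the bookkeeping target determined by later uses; the same scheme with a larger $c$ would produce any prescribed inverse-polynomial failure probability.
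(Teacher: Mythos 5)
Your framing — control the cumulative mean count $F(x)$ uniformly over the training values via instance-level Hoeffding, then read off the lemma by subtracting at consecutive $V$-list entries — is a cleaner reduction than the paper's, and the key insight (the group structure defeats per-item concentration but not per-instance concentration, since $B_1(x),\ldots,B_\lambda(x)$ are still i.i.d.\ in $[0,n]$ for a \emph{fixed} $x$) is exactly right. The Hoeffding arithmetic and the final subtraction $F(v_{r+1})-F(v_r)\le 2c+1$ are also correct.

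However, there is a genuine gap in the union bound step. You apply Hoeffding at a fixed $x$, obtaining $\Pr\bigl[|\sum_k B_k(x)-\lambda F(x)|\ge c\lambda\bigr]\le 2n^{-2c^2}$, and then ``union bound over the $\lambda n$ training values $\{y_1,\ldots,y_{\lambda n}\}$.'' But $y_j$ is a random variable determined by the same instances $I_1,\ldots,I_\lambda$ that appear inside $\sum_k B_k(\cdot)$, so the fixed-$x$ tail bound does not directly control $\Pr\bigl[|\sum_k B_k(y_j)-\lambda F(y_j)|\ge c\lambda\bigr]$; the events you are union bounding are not of the form your Hoeffding inequality addresses. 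The paper handles precisely this issue with a leave-out device: it indexes the boundary values by their original (deterministic) position $z_i$ in the training stream (so $z_i$ belongs to the fixed instance $I_{\lceil i/n\rceil}$), conditions on the one or two instances that determine the interval endpoints, and applies Hoeffding only to the remaining instances, which are i.i.d.\ conditionally. You can repair your argument the same way: re-index the union bound by the deterministic training-stream index $i\in[1,\lambda n]$, condition on $I_{\lceil i/n\rceil}$ so that $z_i$ becomes fixed, apply Hoeffding to the other $\lambda-1$ instances, and absorb the one removed instance (which contributes at most $n\ll\lambda$ to the sum). With that fix the rest of your computation — choosing $c$ a constant, $\lambda\ge n^2\ln n$ killing the $n^2$ in the exponent, and $\{y_j\}=\{z_i\}$ as sets so the bound covers every $v_r$ — goes through and recovers the lemma.
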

\begin{proof}
    Let $z_1,\ldots,z_n$ denote $x_1,\ldots,x_n$ in $I_1$, let $z_{n+1},\ldots,z_{2n}$ denote $x_1,\ldots,x_n$ in $I_2$, $\ldots$, and let  $z_{(\lambda -1)n+1},\ldots,z_{\lambda n}$ denote $x_1,\ldots,x_n$ in $I_\lambda$.  Define $z_0 = -\infty$ and $z_{\lambda n+1} = +\infty$.

    Consider any distinct pair $i,j \in [0,\lambda n+1]$.  Let $m_{ij}$ be the number of instances among $I_1,\ldots,I_\lambda$ that contain neither $z_i$ nor $z_j$.  Note that $\lambda-2\leq m_{ij} \leq \lambda$.  Denote by $I_{s_1},\cdots,I_{s_{m_{ij}}}$ the $m_{ij}$ instances that contain neither $z_i$ nor $z_j$.  Define the following random variables:
   \begin{align*}
   \forall \, a \in [1,m_{ij}], \quad Y^{(i,j)}_a  & =\left\{\begin{array}{ll}
    \bigl|I_{s_a} \cap [z_i,z_j)\bigr|, & \text{if $z_i<z_j$,} \\ [0.5em]
    \bigl|I_{s_a} \cap [z_j,z_i)\bigr|,  & \text{if $z_j<z_i$.}
    \end{array}\right. \\[1ex]
	Y^{(i,j)} & = Y^{(i,j)}_1+\ldots+Y^{(i,j)}_{m_{ij}}.
    \end{align*}
    The expectations $\E{Y^{(i,j)}_a}$ and $\E{Y^{(i,j)}}$ are taken over the distribution of $I_{s_1}, \ldots, I_{s_{m_{ij}}}$ induced by $\D$.  Given the sample $I_1,\ldots,I_\lambda$, we call~$(i,j)$ a \emph{good pair} if $\E{Y^{(i,j)}} \leq 11 \lambda \text{ or } Y^{(i,j)}>\lambda$.
    \begin{quote}
    	\begin{proposition}
    		\label{pro:good-pair}
    		With probability at least $1 - n^{-199}$, $(i,j)$ is a good pair.
    	\end{proposition}
    	\begin{proof}
			For every $a \in [1,m_{ij}]$, let $X_a=\frac{1}{n} Y^{(i,j)}_a$.  Let $X=X_1+\ldots+X_{m_{ij}}=\frac{1}{n}Y^{(i,j)}$.  Note that $X_1,\ldots,X_{m_{ij}}$ are independent and each lies in the range $[0,1]$.  By Hoeffding's inequality~\cite{Hoeffding-inequality}, for any $\beta>0$, it holds that $\Pr\bigl[\bigl|X-\mathrm{E}[X]\bigr|\geq \beta m_{ij}\bigr] < 2e^{-2m_{ij}\beta^2}$.  Setting $\beta=10 \mathrm{E}[X]/(11 m_{ij})$,
        		\[
        			\Pr\bigl[\bigl|X-\mathrm{E}[X]\bigr|\geq 10\mathrm{E}[X]/11 \bigr]< 2e^{-2m_{ij}(10 \mathrm{E}[X])^2/(11 m_{ij})^{2}}.
        		\]
    		Therefore,
    			$
    				\Pr\bigl[X<  \mathrm{E}[X]/11 \bigr] < 2e^{-200\mathrm{E}[X]^2/(121 m_{ij})}.
    			$
    		When $\mathrm{E}[X] > 11\lambda/n$,
        		\begin{align*}
        			\Pr[X< \lambda/n]  & \leq \Pr\bigl[X< \mathrm{E}[X]/11\bigr] < 2e^{-200\times 121 \lambda^2 / (121 n^2 m_{ij})} \\
        	                                       			& =2e^{-200 \lambda^2/(n^2 m_{ij})}\leq 2e^{-200 \lambda/n^2}<2n^{-200}.
        		\end{align*}
    		In other words, $\Pr\bigl[Y^{(i,j)}<\lambda \bigr] < 2n^{-200}$ when $\E{Y^{(i,j)}}>11\lambda$.  Hence, it holds with probability at least $1-n^{-199}$ that $(i,j)$ is a good pair.
    	\end{proof}
	\end{quote}
	
	There are at most $(\lambda n+2)^2 < n^{7}$ pairs of distinct indices from $[0,\lambda n+1]$.  By the union bound, it holds with probability at least $1-n^{-192}$ that all these pairs are good.  Assume in the following that all these pairs are indeed good.

    Consider two consecutive elements $v_r$ and $v_{r+1}$ in the $V$-list.  They correspond to $z_i$ and $z_j$, respectively, for some distinct $i,j \in [0,\lambda n+1]$.  By the construction of the $V$-list,  $z_i<z_j$ and the range $[z_i,z_j)$ contains fewer than $\lambda$ points among $\{z_k\mid k\neq i, k\neq j,k\in [1,\lambda n]\}$.  Therefore, fewer than $\lambda$ points from $I_{s_1},\ldots,I_{s_{m_{ij}}}$ fall in $[z_i,z_j)$, that is, $Y^{(i,j)}<\lambda$.  This implies that $\E{Y^{(i,j)}} \leq 11\lambda$ because $(i,j)$ is a good pair by assumption.

   Consider $\mathrm{E}_{I\sim \D}\bigl[|I \cap [v_r,v_{r+1})|\bigr] = \mathrm{E}_{I\sim \D}\bigl[|I \cap [z_i,z_j)|\bigr]$.  By the group product distribution model, both $\Pr \bigl[h_{a,k}(u_k) = z_i\bigr]$ and $\Pr \bigl[h_{a,k}(u_k) = z_j\bigr]$ are equal to zero for every $G_k$ and every $a \in G_k$.  This implies that $\Pr \bigl[z_i \in I\bigr]$ and $\Pr \bigl[z_j \in I\bigr]$ are equal to zero.  It follows that $\mathrm{E}_{I \sim \D}\bigl[|I \cap [z_i,z_j)|\bigr] = \E{Y^{(i,j)}_a}$ for every $a \in [1,m_{ij}]$.  As a result,
   \[
   		\mathrm{E}_{I \sim \D}\bigl[|I \cap [z_i,z_j)|\bigr] = \frac{\E{Y^{(i,j)}}}{m_{ij}} \leq \frac{\E{Y^{(i,j)}}}{\lambda - 2} \leq \frac{11\lambda}{\lambda -2} = O(1).
   	\]
\end{proof}


\subsection{Trie}

In the operation phase, we distribute the numbers in an instance $I$ to the intervals $[v_r,v_{r+1})$'s, sort the numbers in each interval, and concatenate the results.  We need an efficient method to distribute the numbers.  In~\cite{Algorithmica-20}, the functions $h_{i,k}$'s are linear,  so for every $k$, each $h_{i,k}$ can be reformulated as a linear function in another input number in $G_k$, say $x_1$.  These linear functions of $x_1$ together with the horizontal lines at the $v_r$'s form an arrangement of lines for $G_k$.  Cutting vertically through the arrangement vertices, we partition $\R$ into disjoint ranges of $x_1$.   The sorted order of the $x_i$'s in $G_k$ is invariant within each range of $x_1$.  Therefore, our problem reduces to finding the range that contains $x_1$.  In this paper, we cannot compute such an arrangement because we do not know the formulations of the $h_{i,k}$'s.  A different method is needed.

We define two functions below.
\begin{itemize}
	
	\item $b: \bigcup_{m=1}^n \R^m \rightarrow \bigcup_{m=1}^n [0,n]^m$ such that for every $m \in [1,n]$ and every vector $(z_1,\cdots,z_m)\in \R^m$, $b(z_1,\cdots,z_m)=(r_1,\cdots,r_m)$ such that
      for all $i \in \brac{1,m}$, $z_i \in [v_{r_i}, v_{r_i+1})$.

      \item $\pi: \bigcup_{m=1}^n \R^m \rightarrow \bigcup_{m=1}^n [0,n]^m$ such that for every $m \in [1,n]$ and every vector $(z_1,\cdots,z_m)\in \R^m$,
      $\pi(z_1,\cdots,z_m) = (j_1,\cdots, j_m)$ such that
      for all $i \in \brac{1,m}$,
      $j_i = 0$ if $z_i = \min_{a \in [1,i]} z_a$; otherwise, $j_i$ is the index of the largest element in $\{z_1,\cdots,z_{i-1}\}$ that is less than $z_i$.

\end{itemize}
For every subset $J\subseteq \brac{1,n}$, let $I|_J$ denote the subsequence of $I$ whose index set is $J$.  The output of $b(I|_{G_k})$ tells us the intervals $[v_r,v_{r+1})$'s that contain the input numbers in $I|_{G_k}$.  The output of $\pi(z_1,\ldots,z_m)$ is equivalent to the Lehmer code of $z_1,\cdots,z_{m}$~\cite{lehmer60}.  We can store $(z_1,\cdots,z_{m})$ by a doubly linked list $L$, and the output of $\pi(z_1,\ldots,z_m)$ is a list of pointers to entries in $L$.
Given $\pi(I|_{G_k})$, it is easy to sort $I|_{G_k}$ in  $O(|G_k|)$ time.

Our goal in this section is to show that we can build two tries to support efficient retrieval of  $b(I|_{G_k})$ and $\pi(I|_{G_k})$.  We will show that the expected query times of the tries are $O(|G_k|)$ plus the entropies of the distributions of $b(I|_{G_k})$ and $\pi(I|_{G_k})$, respectively.

For every subset $S \subseteq \mathbb{R}^m$, define $b(S) = \{b(z) : z \in S\}$ and $\pi(S) = \{\pi(z) : z\in S\}$.  We first bound the cardinalities of $b(S)$ and $\pi(S)$ when $S = \{I|_{G_k} : I \sim \D\}$.

\begin{lemma}
	\label{lem:sort-size}
	Let $S = \{I|_{G_k} : I \sim \D\}$.  Let $m=|G_k|$.  Then,
        \[|b(S)| = O(nm) \text{ and }|\pi(S)| = O(m^2).\]
\end{lemma}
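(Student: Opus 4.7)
The plan is to exploit the fact that all coordinates of $I|_{G_k}$ are determined by the single hidden parameter $u_k$. Writing $G_k = \{i_1,\ldots,i_m\}$, the vector $I|_{G_k} = \bigl(h_{i_1,k}(u_k),\ldots,h_{i_m,k}(u_k)\bigr)$ lives on a curve $\Gamma\subset\R^m$ parameterized by $u_k\in\R$. Both $b$ and $\pi$ are piecewise constant along $\Gamma$, so $|b(S)|$ and $|\pi(S)|$ are bounded by the number of intervals into which $\R$ is partitioned by the \emph{critical values} of $u_k$ at which $b$ or $\pi$, respectively, can change. The proof reduces to counting these critical values using the hypotheses on the functions $h_{i,k}$.

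For $|b(S)|$, I would observe that $b\bigl(I|_{G_k}\bigr)$ changes as $u_k$ moves only when some coordinate $h_{i,k}(u_k)$ crosses one of the $n$ finite boundaries $v_1,\ldots,v_n$ of the $V$-list. Fix such a pair $(i,r)$ with $i\in G_k$ and $r\in[1,n]$. Because the graph of $h_{i,k}$ has at most $c_0$ extrema, it is split into at most $c_0+1$ monotone arcs, so the equation $h_{i,k}(u_k)=v_r$ has at most $c_0+1=O(1)$ solutions. Summing over the $m$ choices of $i\in G_k$ and the $n$ choices of $r$ gives $O(nm)$ critical values of $u_k$, hence at most $O(nm)+1=O(nm)$ maximal open intervals on which $b\bigl(I|_{G_k}\bigr)$ is constant. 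This yields $|b(S)|=O(nm)$.

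For $|\pi(S)|$, I would first note that $\pi(z_1,\ldots,z_m)$ is determined solely by the relative order of the coordinates (it encodes, for each $i$, the index of the largest strictly smaller predecessor in $z_1,\ldots,z_{i-1}$). Hence $\pi\bigl(I|_{G_k}\bigr)$ can only change when two coordinates swap, i.e.\ when $h_{i,k}(u_k)=h_{j,k}(u_k)$ for some distinct pair $i,j\in G_k$. By the second hypothesis on the $h_{i,k}$'s, the graphs of $h_{i,k}$ and $h_{j,k}$ meet in $O(1)$ points, so each of the $\binom{m}{2}$ unordered pairs contributes $O(1)$ critical values, giving $O(m^2)$ in total and therefore $|\pi(S)|=O(m^2)$.

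The argument is mostly bookkeeping once the ``parametrize by $u_k$ and count crossings'' viewpoint is adopted, so I do not expect a serious obstacle. The only subtle point is to be clear that $\pi$ depends on the sorted order alone, so that the sole events triggering a change in $\pi$ are pairwise equalities $h_{i,k}(u_k)=h_{j,k}(u_k)$; ties with the third condition on $h_{i,k}$ (zero point-mass probability) can be used to ignore parameter values where three or more coordinates coincide, since such coincidences are a measure-zero subset of the already finite set of pairwise crossings and do not increase the count asymptotically.
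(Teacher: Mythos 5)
Your proof is correct and takes essentially the same approach as the paper: the paper also partitions the $u_k$-axis (phrased as ``vertical slabs'' in the $(u_k, y)$-plane) at the $O(m^2)$ pairwise intersections of the graphs of the $h_{i,k}$'s for $\pi$, and at the $O(nm)$ intersections of those graphs with the horizontal lines $y=v_r$ for $b$, noting that each function is invariant on each resulting slab. Your ``critical values of $u_k$'' are precisely the abscissae of those intersections, so the two arguments are the same counting in different clothing; the closing remark about three-way coincidences and zero point mass is unnecessary since, as you note, such degeneracies can only reduce the number of distinct critical values.
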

\begin{proof}
	Without loss of generality, assume that $G_k = [1,m]$.  By the properties of the $h_{i,k}$'s, there are $O(m^2)$ intersections among the $h_{i,k}$'s for $i \in [1,m]$.  The vertical lines through these intersections divide $\R^2$  into $O(m^2)$ slabs.  The output of $\pi$ is invariant when $u_k$ is restricted to one of these slabs.  Therefore, there are $O(m^2)$ possible outcomes for $\pi(S)$, that is, $|\pi(S)| = O(m^2)$.
	
	Add horizontal lines $y=v_r$ for $r \in [1,n]$.  There are $O(nm)$ intersections between these horizontal lines and the graphs of $h_{i,k}$'s mentioned above.  The vertical lines through these intersections define $O(nm)$ slabs.  The output of $b$ is invariant when $u_k$ is restricted to any one of these slabs.  Thus, there are $O(nm)$ possible outcomes for $b(S)$, i.e., $|b(S)| = O(nm)$.
\end{proof}

Next, we prove a technical result about the entropy of the output of a function whose image is a discrete set with a polynomial size.  It will be used in analyzing the performance of the auxiliary structures that we will organize for efficient retrieval of the output of $b$ and $\pi$.

\begin{lemma}\label{lem:trie-runningtime}
Let $J$ be a subset of $\,[1,n]$.  Let $f$ be a function on $I|_J$.  Assume that there is a known value $t_0 \in \bigl[n,n^{O(1)}\bigr]$ such that $\bigl|\bigl\{ f(I|_J) : I \sim \D\bigr\}\bigr| \leq t_0$.  For every  $N \geq \lceil t_0\ln t_0 \ln n \rceil$ and every sample $I_1, \ldots, I_N$ drawn independently from $\D$, it holds with probability at least $1 - t_0^{-2}$ that
\begin{equation*}
	\sum_{i=1}^{M} \rho_i \log (1/\tilde{\rho}_i) + \rho_0 |J| \log  n = H_J(f) + O(|J|),
\end{equation*}
where
\begin{itemize}
	\item $H_J(f)$ denotes the entropy of $f(I|_J)$ under $I \sim \D$,
	\item $\beta_1, \dots, \beta_M$ denote the distinct outcomes in $\bigl\{ f(I_s|_J) : s \in [1,N]\bigr\}$,
	\item $\tilde{\rho}_i = \frac{1}{N} \cdot$ number of occurrences of $\beta_i$ among $f(I_1|_J),\ldots,f(I_N|_J)$ for $i \in [1,M]$,
	\item $\rho_i=\Pr_{I \sim \D} \bigl[f(I|_J)=\beta_i\bigr]$ for $i \in [1,M]$, and
	\item $\rho_0 = \Pr_{I \sim \D} \bigl[f(I|_J)\notin \left\{\beta_1,\ldots,\beta_M\right\}\bigr]$.
\end{itemize}
\end{lemma}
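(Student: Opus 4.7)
The plan is to treat $\sum_{i=1}^{M} \rho_i \log(1/\tilde{\rho}_i)$ as a cross-entropy estimate of $H_J(f)$, and to show that the $\rho_0|J|\log n$ term is calibrated precisely to absorb the contribution from outcomes never seen in the sample. Fix a threshold $T = 36\ln t_0 / N$. By the multiplicative Chernoff bound, every outcome $\beta$ of $f$ with true probability $\rho_\beta \geq T$ satisfies $\tilde{\rho}_\beta \in [\rho_\beta/2,\,3\rho_\beta/2]$ (and in particular appears among $\beta_1,\ldots,\beta_M$) except with probability $\leq 2t_0^{-3}$. A union bound over the at most $t_0$ possible outcomes yields an event $\mathcal{E}$ of probability at least $1 - 2t_0^{-2}$ on which both properties hold simultaneously; condition on $\mathcal{E}$ throughout.

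On $\mathcal{E}$, every unseen outcome has $\rho < T$, so $\rho_0 \leq t_0 T = 36/\ln n$, which immediately gives $\rho_0|J|\log n = O(|J|)$ since $\log n / \ln n = O(1)$. It remains to show that $\sum_{i=1}^{M} \rho_i \log(1/\tilde{\rho}_i) = H_J(f) + O(1)$. Split the seen indices into \emph{frequent} ($\rho_i \geq T$) and \emph{rare} ($\rho_i < T$). For frequent indices, $|\log(1/\tilde{\rho}_i) - \log(1/\rho_i)| \leq \log 2$, so their contribution to the cross-entropy equals the matching piece of $H_J(f)$ up to an additive error $\leq \sum_i \rho_i \log 2 = O(1)$. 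For a rare seen index, $\rho_i < T$ and $\tilde{\rho}_i \geq 1/N$ give $\rho_i\log(1/\tilde{\rho}_i) \leq T \log N$; since $x\log(1/x)$ is increasing on $(0, 1/e)$, the same bound (up to constants) applies to $\rho_i \log(1/\rho_i)$. Summing over at most $t_0$ rare indices and using $N = \Theta(t_0 \ln t_0 \ln n)$ together with $t_0 \leq n^{O(1)}$, so $\log N = O(\log n)$, the total is $O(t_0 T \log N) = O(\log N / \ln n) = O(1)$. The identical estimate applied to the unseen outcomes bounds $H_J(f) - \sum_{i=1}^M \rho_i \log(1/\rho_i)$ by $O(1)$. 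Assembling these four pieces yields the claimed equality.

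The main delicacy I anticipate is calibrating the Chernoff threshold $T$: it must be large enough that a union bound over at most $t_0$ outcomes leaves failure probability $\leq t_0^{-2}$ (forcing $T = \Omega(\ln t_0 / N)$), yet small enough that both $t_0 T \log N$ (controlling the rare-seen and unseen entropy contributions) and $t_0 T \cdot |J|\log n$ (controlling the slack term) stay within the allotted error. The coupling $N \geq \lceil t_0 \ln t_0 \ln n\rceil$ paired with the assumption $t_0 \in [n, n^{O(1)}]$ is what makes both requirements tight at the same time; it is also what lets the slack term $\rho_0 |J| \log n$ play its intended role of tolerating a polynomially-sized outcome space without having to observe every outcome during training.
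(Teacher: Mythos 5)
Your proposal is correct and follows essentially the same strategy as the paper's proof: fix a probability threshold of order $\ln t_0/N$, apply a Chernoff bound and a union bound over the at most $t_0$ outcomes to get a high-probability event on which all sufficiently likely outcomes have accurate empirical frequencies, and then split the sum into a heavy part (matching $H_J(f)$ up to $O(1)$), a light-but-seen part and the unseen mass $\rho_0$ (each contributing $O(1)$ or $O(|J|)$). One small calibration note: your union bound gives failure probability $2t_0^{-2}$, a factor of $2$ short of the stated $t_0^{-2}$, but raising the threshold constant from $36$ to, say, $48$ repairs this without disturbing any of the downstream $O(1)$ bounds; also, your two-sided concentration $\tilde{\rho}_i\in[\rho_i/2,3\rho_i/2]$ is slightly stronger than needed (the paper uses only the lower-tail bound $\tilde{\rho}_i\geq\rho_i/8$, which suffices since only $\log(1/\tilde{\rho}_i)$ being too large is harmful), but it does buy you the matching lower bound on the cross-entropy, making the stated "$=$" honest rather than a one-sided "$\leq$" as in the paper's actual argument.
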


\begin{proof}
%
Let $t$ be the cardinality of $\bigl\{ f(I|_J) : I \sim \D \bigr\}$.  Note that $t \leq t_0$.  Let $\beta_{M+1}$, $\beta_{M+2}$, $\ldots,$ $\beta_t$ denote the outcomes in $\{f(I|_J):I\sim \D\} \setminus \{\beta_1,\ldots,\beta_M\}$.  For every $i\in[M+1,t]$, define $\rho_i = \Pr_{I\sim \D}\bigl[f(I|_J) = \beta_i\bigr]$ and $\tilde{\rho}_i=0$.  Define $\Delta=8/(t\ln n)$.

First, we prove a claim that for every $i \in [1,t]$, if $\rho_i > \Delta$, then $\Pr[\tilde{\rho}_i \geq \rho_i/8]  \geq 1- t_0^{-3}$.  Define $X_i = \tilde{\rho}_i N$ for $i \in [1,t]$.  We have
\[
	\E{X_i} = \rho_i N > \Delta N \geq \frac{8t_0\ln t_0 \ln n}{t\ln n} \geq 8\ln t_0.
\]
For $s \in [1,N]$, define $X_{is} = 1$ if $f(I_s|_J) = \beta_i$ and $X_{is} =0$ otherwise.  Then, $X_i = \sum_{s=1}^N X_{is}$.  Since $X_{i1},\ldots,X_{iN}$ are independent, by the Chernoff bound,
\begin{align*}
\Pr\bigl[\tilde{\rho}_i < \rho_i/8\bigr] & = \Pr\bigl[\tilde{\rho}_iN < \rho_iN/8\bigr] =
\Pr\bigl[X_i < \E{X_i}/8 \bigr] \\
& = \Pr\bigl[X_i <(1-7/8)\E{X_i}\bigr] \leq \exp\left(-\frac{1}{2}\Bigl(\frac{7}{8}\Bigr)^2\E{X_i}\right)  \\
& < \exp\left(-\frac{49 \ln t_0}{16}\right)< t_0^{-3}.
\end{align*}
This completes the proof of the claim.

By our claim and the union bound, it holds with probability at least $1-t  t_0^{-3} \geq 1-t_0^{-2}$ that
\begin{equation}\label{eqn:proof-rho-8}
  \forall \, i \in [1,t], \quad \rho_i > \Delta \Rightarrow \tilde{\rho}_i\geq \rho_i/8.
\end{equation}
Assume that \eqref{eqn:proof-rho-8} holds for the rest of this proof.  Then,
\begin{align}
	\sum_{\substack{i\in [1,M] \\ \rho_i>\Delta}} \rho_i\log(1/\tilde{\rho}_i)
     & \leq  \sum_{\substack{i \in [1,M] \\ \rho_i>\Delta}}  \rho_i\log(8/\rho_i)
	 = \sum_{\substack{i \in [1,M]\\ \rho_i>\Delta}} \rho_i\log(1/\rho_i)  + O(1) \nonumber \\
	&  \leq  H_J(f) + O(1).  \label{eqn:proof-rho-9}
\end{align}
Take $\beta_i$ for an arbitrary $i \in [1,M]$.  We have $\tilde{\rho}_i \geq 1/N$ as the frequency of $\beta_i$ is at least one.
Thus, $\log(1/\tilde{\rho}_i)\leq \log N = O(\log n)$. Therefore,
\begin{align}
	\sum_{\substack{i\in [1,M] \\ \rho_i \leq \Delta}} \rho_i \log (1/\tilde{\rho}_i)
     & \leq  \sum_{\substack{i\in [1,M] \\ \rho_i \leq \Delta}} \Delta \cdot O(\log n) \nonumber \\
     & =  \frac{8M}{t\ln n} \cdot O(\log n) = O(1).  \label{eqn:proof-rho-10}
\end{align}
For every $i \in [M+1,t]$, by \eqref{eqn:proof-rho-8} and the fact that $\tilde{\rho}_i=0$, we get $\rho_i\leq \Delta$.
Therefore,
\begin{equation}\label{eqn:proof-rho-11}
    \rho_0 |J|\log n = \left(\sum_{i=M+1}^{t} \rho_i\right) |J| \log n < t\Delta |J| \log n = \frac{8t|J|\log n}{t\ln n} = O(|J|).
\end{equation}
The lemma follows by combining \eqref{eqn:proof-rho-9}, \eqref{eqn:proof-rho-10} and \eqref{eqn:proof-rho-11}.
\end{proof}

We show how to construct two tries in the training phase, one for $b(I|_J)$ and another for $\pi(I|_J)$, so that the outputs of $b(I|_{J})$ and $\pi(I|_{J})$ can be retrieved efficiently in the operation phase.

\begin{theorem}\label{thm:algorithm-limit-output}
Let $J$ be a subset of $\,[1,n]$.    Let $f$ be $b$ or $\pi$.
    Let $N = \left\lceil t_0 \ln t_0 \ln n\right\rceil$, where $t_0=\mathrm{poly}(n)$ is a given value that bounds the size of $\,\bigl|\{ f(I|_J) : I \sim \D \}\bigr|$.  We can build a data structure such that, given $I|_J$ for an input instance $I$,  the data structure returns $f(I|_J)$ and it has the following performance guarantees.
\begin{itemize}
	\item It can be constructed in $\tilde{O}(N|J|)$ time from $N$ independent $I_1|_J, \ldots, I_N|_J$.
	\item It uses $O(t_0|J|)$ space.
	\item It holds with probability at least $1 - t_0^{-2}$ that the expected query time is $O\bigl(H_J(f) + |J|\bigr)$, where the expectation is taken over $I \sim \D$.
\end{itemize}
\end{theorem}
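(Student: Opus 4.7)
The plan is to construct a standard trie whose root-to-leaf paths spell out the distinct values $f(I_s|_J)$ observed in the $N$ training samples, and then equip each internal node with a distribution-sensitive search structure (e.g., a biased binary search tree in the style of Bent--Sleator--Tarjan) over its children, weighted by the number $N_c$ of samples passing through each child $c$. Since there are at most $t_0$ distinct outcomes and the trie has depth $|J|$, it occupies $O(t_0|J|)$ space. To build it I first compute $f(I_s|_J)$ for each $s$: for $f=b$ this takes $O(|J|\log n)$ per sample by binary-searching on the $V$-list; for $f=\pi$ it takes $O(|J|\log|J|)$ via a balanced BST. I then insert each output into the trie in $O(|J|)$ time and assemble the per-node biased BSTs in time linear in the total child count, giving $\tilde{O}(N|J|)$ overall.

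To answer a query on $I|_J=(z_{i_1},\ldots,z_{i_{|J|}})$, I descend the trie and at depth $d$ use the biased BST at the current node to determine the correct child directly from $z_{i_{d+1}}$, without materializing $f(I|_J)$ in advance. For $f=b$, this is a weighted predecessor search for $z_{i_{d+1}}$ among the $V$-list intervals retained by the children. For $f=\pi$, it is a weighted search that locates $z_{i_{d+1}}$ among the surviving gaps in the order already established on $\{z_{i_1},\ldots,z_{i_d}\}$, using comparisons against previously read input coordinates. If the required child does not exist at some node, the query aborts and recomputes $f(I|_J)$ from scratch in $O(|J|\log n)$ time.

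For the running time analysis, let $\beta_i$, $\rho_i$, $\tilde{\rho}_i$, and $\rho_0$ be as in Lemma~\ref{lem:trie-runningtime}. A biased BST with total weight $N_v$ finds a child of subtree weight $N_{v'}$ in $O(1+\log(N_v/N_{v'}))$ time, so by telescoping along a root-to-leaf path the descent to leaf $\beta_i$ runs in $O(|J|+\log(1/\tilde{\rho}_i))$ time. Summing over outcomes,
\[
\mathrm{E}\bigl[\text{query time}\bigr] \leq \sum_{i=1}^{M} \rho_i\bigl(|J|+O(\log(1/\tilde{\rho}_i))\bigr) + \rho_0\cdot O(|J|\log n) = O\Bigl(|J|+\sum_{i=1}^{M} \rho_i\log(1/\tilde{\rho}_i)+\rho_0|J|\log n\Bigr).
\]
An application of Lemma~\ref{lem:trie-runningtime} then gives, with probability at least $1-t_0^{-2}$ over the training sample, an expected query time of $O(|J|+H_J(f))$.

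The main obstacle I anticipate is making the per-node biased search concrete for both $b$ and $\pi$ while preserving the $O(1+\log(N_v/N_{v'}))$ access bound. For $b$ this is a fairly standard weighted predecessor search on $V$-list indices. For $\pi$ the search structure must respect the relative order of $z_{i_1},\ldots,z_{i_d}$ that is implicitly fixed by the prefix already traversed, and it must cleanly detect when the input falls off the trie so the query reverts to the $O(|J|\log n)$ fallback rather than returning a wrong answer; verifying that the telescoping of logarithmic factors still goes through in this comparison-based setting is the delicate part.
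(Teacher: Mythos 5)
Your proposal follows essentially the same route as the paper's own proof: a trie over the observed outcomes $\beta_1,\ldots,\beta_M$, a nearly optimal (biased) binary search tree at each trie node weighted by empirical frequencies $\tilde\rho_i$, telescoping of the per-level $O(1+\log(\mathrm{weight}(u)/\mathrm{weight}(w)))$ costs along a root-to-leaf path, fallback to $O(|J|\log n)$-time from-scratch computation on a failed descent, and an invocation of Lemma~\ref{lem:trie-runningtime} to convert $\sum_i\rho_i\log(1/\tilde\rho_i)+\rho_0|J|\log n$ into $H_J(f)+O(|J|)$. The one point you correctly flag as delicate -- realizing the weighted search for $f=\pi$ so that it only compares the newly read coordinate against coordinates read earlier, and so that the trie and all per-node structures can be built within $\tilde O(N|J|)$ time and $O(t_0|J|)$ space -- is precisely where the paper spends its effort: it stores at each trie node $u$ of depth $i-1$ the total order $\xi_u:[0,i-1]\to[0,i-1]$ of the prefix $x_1,\ldots,x_{i-1}$, maintained as a \emph{persistent} balanced search tree so that $\xi_u$ is obtained from $\xi_{\mathrm{parent}(u)}$ by one persistent insertion in $O(\log n)$ amortized time and $O(1)$ amortized space; the biased BST $A_u$ then stores indices $\xi_u(a)$ at its internal nodes, so each step of the weighted search is a single comparison $x_i$ vs.\ $x_{\xi_u(a)}$. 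If you want a complete proof you should spell this out (or some equivalent device), since a naive per-node rebuild of the order information would blow the construction time and space to $\tilde O(N|J|^2)$ and $O(t_0|J|^2)$; otherwise your argument, including the entropy bookkeeping, is the paper's.
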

\begin{proof}
	Without loss of generality, assume that $J=[1,|J|]$. Hence $I|_J=(x_1,\ldots,x_{|J|})$.
	
	Let $I_1, \ldots, I_N$ denote the independent input instances that we sample for the construction of the data structure.  Define the following quantities as in Lemma~\ref{lem:trie-runningtime}:
	\begin{itemize}
		\item $\beta_1, \dots, \beta_M$ denote the distinct outcomes in $\bigl\{ f(I_s|_J): s \in [1,N]\bigr\}$, and
		\item $\tilde{\rho}_i = \frac{1}{N} \cdot$ number of occurrences of $\beta_i$ among $f(I_1|J), \ldots, f(I_N|_J)$ for $i \in [1,M]$.
	\end{itemize}
	
	Our data structure is a trie $T$ for storing $\beta_1, \ldots, \beta_M$.  The edges of $T$ are given labels in $[0,n]$ and the $\beta_i$'s are stored at the leaves of $T$ such that for every $j \in [1,M]$, $\beta_j$ is equal to the string of labels on the path from the root of $T$ to the leaf that stores $\beta_j$.

Given $I|_J$ for an input instance $I$, we use $T$ to return $f(I|_J)=f(x_1,\ldots,x_{|J|})$ as follows.  Let $x_0$ be a dummy symbol to the left of $x_1$.  Start at the root of $T$ with $i=1$.  Whenever we are at $x_{i-1}$ and a node $u$, we find the child $w$ of $u$ such that the label on $uw$ corresponds to the $i$-th output element of $f(I|_J)$, and then we move to $x_i$ and $w$.  We will elaborate on the determination of $w$ shortly.  The existence of a particular child $w$ of $u$ depends on whether there is an input instance in the training phase that prompts the creation of $w$.  If we reach a leaf of $T$, we return the $\beta_j$ stored at that leaf.  If we cannot find an appropriate child to proceed at any point, we abort and compute $f(I|_J)$ from scratch in $O(|J|\log n)$ time---if $f = b$, we compute the entries in $b(I|_J)$ by binary search; if $f = \pi$, we determine $\pi(I|_J)$ by computing the sorted order of every prefix of $x_1,\ldots,x_{|J|}$ with the help of a balanced binary search tree.
	
   In the following, we elaborate on the step in the operation phase that we are at $x_{i-1}$ and a node $u$ of $T$ and we need to find the child $w$ of $u$ such that the label on $uw$ corresponds to the $i$-th output element of $f(I|_J)$.  We will also discuss what to do in the training phase in order to facilitate this step in the operation phase.
	
	\medskip
	
	\emph{Case~1.~$f = b$}: Each child $w$ of $u$ represents an interval $[v_r,v_{r+1})$ for some $r \in [0,n]$.
        So the children of $u$ represent a sorted list of disjoint intervals in $\R$, but these intervals may not cover $\R$.
	
	At the end of the training phase, the children of $u$ are stored in a nearly optimal binary search tree $A_u$ in the order of the disjoint intervals represented by them~\cite{mehlhorn75}.  The search time of $A_u$ depends on the node weights of $T$ defined as follows: the weight of a leaf that stores $\beta_j$ is $\tilde{\rho}_j$; the weight of an internal node of $T$ is the sum of the weights of all leaves descending from that internal node.    When we search for $x_i$ in $A_u$ in the operation phase, if $x_i$ is contained in the interval represented by a child $w$ of $u$, the search time is $O(\log (\text{weight}(u)/\text{weight}(w)))$.  Since $\text{weight}(u) \leq 1$ and $\text{weight}(w) \geq 1/N$, the search time is no more than $O(\log N)$ in the worst case, which implies that any unsuccessful search in $A_u$ takes $O(\log N)$ time.
	
	To build $A_u$ in the training phase, we grow a balanced binary search tree $L_u$ from being initially empty to the final set of elements in $A_u$.  There are at most $n+1$ children of $u$ as there are $n+1$ intervals $[v_r,v_{r+1})$'s for $r \in [0,n]$.   As new elements of $A_u$ are discovered, they are inserted into $L_u$ in $O(\log n)$ time each.     Also, $L_u$ provides access to the children of $u$ in $O(\log n)$ time in the training phase.  At the end of the training phase, we build $A_u$ as a nearly optimal binary search tree of the elements in $L_u$.  The construction takes time linear in the size of $A_u$~\cite{Fredman75}.
	
	The number of nodes in $T$ is $O(t_0|J|)$, and therefore, the storage required by $T$ and its auxiliary structures is $O(t_0|J|)$.  The construction of the trie $T$ and its auxiliary structures takes $O(N |J| \log n)$ time because for every $I_s$ and every input number in $I|_J$, we spend $O(\log n)$ time at some node of $T$.
	
	\vspace{4pt}
		
	\emph{Case~2. $f = \pi$}:  When we are at $x_{i-1}$ and a node $u$ of $T$, we already know the string of labels on the path from the root of $T$ to $u$.  This string of labels is equal to $\pi(x_1,\ldots,x_{i-1})$.   Also, suppose that we have inductively obtained the total order of $x_1,\ldots,x_{i-1}$ when we come to $u$.  We represent this total order of $x_1,\ldots,x_{i-1}$ by a map $\xi_u: [0,i-1] \rightarrow [0,i-1]$ such that: (i)~$x_0 = -\infty$ and $\xi_u(0) = 0$, and (ii)~for $a \in [1,i-1]$, $x_{\xi_u(a)}$ is the $a$-th smallest number among $x_1,\ldots,x_{i-1}$.
	

	At the end of the training phase, we store $\xi_u(0), \ldots, \xi_u(i-1)$ at the leaves of a nearly optimal binary search tree $A_u$ in the left to right order.  For each internal node $\eta$ of $A_u$, if $\eta$ separates $\xi_u(a)$ and $\xi_u(a+1)$ for some $a \in [0,i-2]$, then $\eta$ stores the index $\xi_u(a)$.  Consider a leaf $\eta'$ of $A_u$ that represents $\xi_u(a)$ for some $a \in [0,i-1]$.   If there was an instance $I_s$, $s \in [1,N]$, in the training phase such that the $i$-th entry of $\pi(I_s|_J)$ is $\xi_u(a)$, then a child $w$ of $u$ was created in the training phase such that the label of $uw$ is $\xi_u(a)$, and the leaf $\eta'$ stores a pointer to $w$.  Otherwise, the pointer at $\eta'$ is null.
	
	In the operation phase, when we search $A_u$ using $x_i$ and come to a node $\eta$, we can retrieve the index $\xi_u(a)$ stored at $\eta$ and then compare $x_i$ and $x_{\xi_u(a)}$ in $O(1)$ time.  This allows us to recursively search the left or right subtree of $\eta$ depending on whether $x_i < x_{\xi_u(a)}$ or $x_i > x_{\xi_u(a)}$, respectively.  A search in $A_u$ is aborted if we reach a node of $A_u$ such that every leaf descending from it stores a null pointer.  Eventually, either the search in $A_u$ is aborted, or we reach the leaf $\eta'$ of $A_u$ that stores the index $\xi_u(e)$ where $x_{\xi_u(e)}$ is the largest number among $x_1,\ldots,x_{i-1}$ that is less than $x_i$.  If the search in $A_u$ is aborted, we compute $\pi(I|_J)$ from scratch in $O(|J|\log n)$ time.  Otherwise, we follow the pointer stored at the leaf $\eta'$ to a child $w$ of $u$.
	
	The nodes of the trie $T$ are assigned weights as in the case of $f = b$.  If a leaf $\eta'$ of $A_u$ stores a pointer to a child $w$ of $u$, we set  the weight of $\eta'$ to be $\text{weight}(w)$; otherwise, we set the weight of $\eta'$ to be zero.  Therefore, if the search of $A_u$ using $x_i$ returns a child $w$ of $u$, then the search time is $O(\log (\text{weight}(u)/\text{weight}(w)))$.  Since $\text{weight}(u) \leq 1$ and $\text{weight}(w) \geq 1/N$, the search time is no more than $O(\log N)$ in the worst case.    Therefore, an aborted search in $A_u$ takes $O(\log N)$ time.
	
	To build $A_u$ in the training phase, we use a balanced binary search tree to keep track of the growing set of children of $u$ as in the case of $f = b$.   So we spend $O(\log n)$ time at $u$ for each instance $I_s$ in the training phase.  At the end of the training phase, $A_u$ can be built in linear time, provided that $\xi_u(a)$ is available for every $a \in [1,i-1]$.
	
	We need to construct the map $\xi_u$ in the training phase.  This is done inductively as follows.  Suppose that we have constructed the map $\xi_v: [0,i-2] \rightarrow [0,i-2]$ at the parent $v$ of $u$.  We already know the label on the edge $vu$, say $\xi_v(e)$ for some $e \in [0,i-2]$.  Therefore, if there is any input $x_1,x_2,\ldots,x_{i-1},\ldots$ that will bring us from the root of $T$ to this node $u$, the total order of $x_0,x_1,\ldots,x_{i-1}$ is equal to $x_{\xi_v(0)}, x_{\xi_v(1)},\ldots,x_{\xi_v(e)}, x_{i-1}, x_{\xi_v(e+1)}, \ldots, x_{\xi_v(i-2)}$.  Hence, we should define $\xi_u$ so that:
	
	\smallskip
	
	\begin{minipage}[t]{\textwidth}
	\begin{itemize}
		\item for $a \in [0,e]$, $\xi_u(a) = \xi_v(a)$,
		\item $\xi_u(e+1) = i-1$, and
		\item for $a \in [e+2,i-1]$, $\xi_u(a) = \xi_v(a-1)$.
	\end{itemize}
	\end{minipage}

	\smallskip

	An efficient way to set up $\xi_u$ is to represent $\xi_v$ inductively as a version of a persistent search tree~\cite{driscoll89}.  For $a \in [1,i-2]$, the $a$-th  node in the symmetric order in the persistent search tree $\xi_v$ stores $\xi_v(a)$.  Then, the construction of $\xi_u$ as described in the above can be done easily by a persistent insertion of a new node between the $e$-th and $(e+1)$-th nodes in $\xi_v$.  The new version of the persistent search tree produced is $\xi_u$.  This takes $O(\log n)$ amortized time and $O(1)$ amortized space.  After the construction of $\xi_u$ is completed, we can enumerate the values $\xi_u(a)$ for $a \in [1,i-1]$ in linear time for the construction of $A_u$.
	
	The number of nodes in $T$ is $O(t_0|J|)$, and therefore, the storage required by $T$ and its auxiliary structures is $O(t_0|J|)$.   The construction of the trie $T$ and its auxiliary structures takes $O(\log n)$ amortized time for each input number in $I|_J$ in each instance $I_s$ in the training phase.  Therefore, the total construction time is $\tilde{O}(N|J|)$.
	
	\vspace{4pt}
	
	\emph{Analysis of Guarantees}:  In addition to $\beta_i$ and $\tilde{\rho}_i$ for $i \in [1,M]$, we also define the following quantities as in Lemma~\ref{lem:trie-runningtime}:
	
	\smallskip
	
	\begin{minipage}[t]{\textwidth}
	\begin{itemize}
		\item $\rho_i=\Pr_{I \sim \D} \bigl[f(I|_J)=\beta_i\bigr]$ for $i \in [1,M]$, and
		\item $\rho_0 = \Pr_{I \sim \D} \bigl[f(I|_J)\notin \left\{\beta_1,\ldots,\beta_M\right\}\bigr]$.
	\end{itemize}
	\end{minipage}

	\smallskip
	
	We have already discussed in Cases~1 and~2 above that the construction time of $T$ and its auxiliary structures is $\tilde{O}(N|J|)$ and the space complexity of the resulting data structure is $O(t_0|J|)$.  It remains to analyze the expected time to query $T$ when given $I|_J$ for an instance $I$ and the success probability.

	First, we claim that if the outcome $f(I|_J) \not\in \bigl\{\beta_1,\ldots,\beta_M\bigr\}$, the expected query time is $O(|J|\log n)$.   In this case, $f(I|_J)$ is not represented by any leaf of $T$ and the search in $T$ must fail.  As discussed in Cases~1 and~2 above, it takes $O(\log N)$ worst-case time to access $A_u$ for each node of $T$ visited.  Therefore, the total search time is $O(|J|\log N)$ before failure.  Then, computing $f(I|_J)$ from scratch takes another $O(|J|\log n)$ as discussed in Cases~1 and~2 above.   Since $N = \left\lceil t_0 \ln t_0 \ln n \right\rceil$ and $t_0 = O(n|J|)$ by Lemma~\ref{lem:trie-runningtime}, we have $\log N = O(\log n)$, thereby establishing our claim.
	
	Second, we claim that if the outcome $f(I|_J)$ is equal to $\beta_i$ for some $i \in [1,M]$, the expected query time is $O(|J|+\log (1/\tilde{\rho}_i))$.  As discussed in Cases~1 and~2, for every internal node $u$ of the trie $T$, searching $A_u$ to find a child $v$ of $u$ takes
	$O(\log (\mathrm{weight}(u)/\mathrm{weight}(v)))$ time.  Similarly, accessing a child $w$ of $v$ takes $O(\log (\mathrm{weight}(v)/\mathrm{weight}(w)))$ time.  Note that $\mathrm{weight}(v)$ is cancelled when we sum up these two terms.
	Such pairwise cancellations happen when we sum up the search times along the path from the root of $T$ to the leaf representing $\beta_j$.  Therefore, the total search time is $O(|J| + \log (1/\mathrm{weight}(\beta_i))) = O(|J| + \log (1/\tilde{\rho}_i))$.  This proves our second claim.
	
	Combining the above two claims, the expected query time for computing $f(I|_J)$ in the operation phase is
    \[
        \sum_{i=1}^{M} \rho_i \cdot O\bigl(|J|+\log(1/\tilde{\rho}_i)\bigr) + \rho_0 \cdot O(|J|\log n)=
            O\left( |J|+\sum_{i=1}^{M} \rho_i \log(1/\tilde{\rho}_i) + \rho_0|J|\log n \right),
    \]
    which is $O\bigl(H_J(f) + |J| \bigr)$ with probability at least $1-t_0^{-2}$ according to Lemma~\ref{lem:trie-runningtime}.
\end{proof}

We point out that Fredman~\cite{FREDMAN76} obtained a special case of Theorem~\ref{thm:algorithm-limit-output} when $f = \pi$, all outcomes $\beta_1, \beta_2, \ldots$ are given, and every outcome is equally likely.

\cancel{
The following corollary follows from Lemma~\ref{lem:sort-size} and Theorem~\ref{thm:algorithm-limit-output} immediately.

\begin{corollary}\label{corol:tries-for-b-pi}
For each $G_k$, let $H^{(b)}_k$ and $H^{(\pi)}_k$ denote the entropies of $b(I|_{G_k})$ and $\pi(I|_{G_k})$, respectively.
\begin{enumerate}
\item Using $N=\tilde{O}(c_0n|G_k|)$ instances in the training phase, with probability at least $1-n^{-2}$,
  we can build a data structure such that it outputs $b(I|_{G_k})$ in $O(|G_k|+H^{(b)}_k)$ expected time for $I\sim \D$.
\item Using $N=\tilde{O}(\max(|G_k|^2,n))$ instances in the training phase, with probability at least $1-n^{-2}$,
  we can build a data structure such that it outputs $\pi(I|_{G_k})$ in $O(|G_k|+H^{(\pi)}_k)$ expected time for $I\sim \D$.
\end{enumerate}
\end{corollary}
}

\subsection{Operation phase}

In the training phase, we compute the following information:
\begin{itemize}
	
	\item Apply Corollary~\ref{cor:sort-group} to learn the hidden partition of $[1,n]$ into groups $G_1,G_2,\cdots$.
	
	\item Use $\lceil n^2\ln n \rceil$  input instances to construct the $V$-list.
	
	\item For every group $G_k$, apply Theorem~\ref{thm:algorithm-limit-output} to construct the data structures for the purpose of retrieving $b(I|_{G_k})$ and $\pi(I|_{G_k})$ for every instance $I$.

\end{itemize}
Afterwards, for each input instance $I$ that we encounter in the operation phase, we perform the following steps to sort $I$.
\renewcommand{\labelenumi}{\theenumi.}
\begin{enumerate}

	\item Repeat the following steps for each group $G_k$.  Let $I|_{G_k} = (x_{i_1},\cdots,x_{i_m})$.
	\begin{enumerate}
		\item  Use the data structures in Theorem~\ref{thm:algorithm-limit-output} for $G_k$ to retrieve $\pi(x_{i_1},\cdots,x_{i_m})$ and $b(x_{i_1},\cdots,x_{i_m})$.
		
		\item Use $\pi(x_{i_1},\cdots,x_{i_m})$ to sort $(x_{i_1},\cdots,x_{i_m})$ in $O(m)$ time.  Let $x_{s_1} < \cdots < x_{s_m}$ denote the sorting output.

        \item For $j \in \brac{1,m}$, determine the interval $[v_{r_j}, v_{{r_j}+1})$ that contains $x_{s_j}$ in $O(1)$ time using $b(x_{i_1},\cdots,x_{i_m})$.
		
		\item For $r \in [0,n]$, initialize $Z_r:= \emptyset$.  By a left-to-right scan, break $(x_{s_1}, \cdots, x_{s_m})$ at the boundaries of the intervals $[v_r,v_{r+1})$'s  into contiguous subsequences.  For each contiguous subsequence $\sigma$ obtained, let $[v_r,v_{r+1})$ be the interval that contains $\sigma$ and insert $\sigma$ as a new element into $Z_r$.
	\end{enumerate}

	\item For each $r \in [0,n]$, merge the subsequences in $Z_r$ into one sorted list. \label{step:sort}
	
	\item Concatenate the sorted lists produced in step~\ref{step:sort} in the left-to-right order.
        Return the result of the concatenation as the sorted order of $I$.
\end{enumerate}

We will need the following result proved by Ailon~et~al.

\begin{lemma}[{\cite[Claim~2.3]{SICOMP11selfimp}}]
	\label{lem:ailon}
	Let $D$ be a distribution on a universe $U$, and let $F_1: U \rightarrow X$ and $F_2: U \rightarrow Y$ be two random variables.  Suppose that the function $f$ defined by $f : (I,F_1(I)) \mapsto F_2(I)$ can be computed by a comparison-based algorithm with $C$ expected comparisons (where the expectation is over $D$).  Then $H(F_2) = C + O(H(F_1))$, where all the entropies are with respect to $D$.
\end{lemma}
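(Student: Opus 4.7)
The plan is to realize the hypothesized comparison-based algorithm as a family of binary decision trees—one per realization of $F_1$—and then invoke Shannon's source coding inequality. Fix $f_1 \in X$ and condition the distribution on $F_1=f_1$. Because the algorithm receives $(I,F_1(I))$ as input, its control flow on instances with $F_1(I)=f_1$ is described by a single binary tree $T_{f_1}$: internal nodes are comparisons, the two outgoing edges are labeled by the sign outcomes, and each leaf is labeled by the value of $F_2$ that the algorithm outputs for the instances routed there. Since the algorithm is deterministic and always returns $F_2(I)$, all instances landing at a common leaf agree on $F_2$, so the root-to-leaf bit strings form a prefix-free binary code for $F_2$ under the conditional distribution.

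By Shannon's theorem (equivalently, Kraft's inequality combined with Jensen's inequality), the expected length of any binary prefix code for a random variable $Z$ is at least $H(Z)$. Applying this to $T_{f_1}$ and then averaging over $f_1$,
\[
H(F_2\mid F_1) \;=\; \mathrm{E}_{f_1}\bigl[H(F_2 \mid F_1=f_1)\bigr] \;\le\; \mathrm{E}_{f_1}\bigl[\mathrm{E}[\#\text{comparisons}\mid F_1=f_1]\bigr] \;=\; C.
\]
Combining with the chain rule for entropy yields
\[
H(F_2) \;\le\; H(F_1,F_2) \;=\; H(F_1)+H(F_2\mid F_1) \;\le\; H(F_1)+C,
\]
which is the desired bound $H(F_2)=C+O(H(F_1))$.

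The step that requires the most care is the first one: one has to check that the algorithm's trace really defines a prefix code for $F_2$ itself rather than for some richer internal state. This is exactly where the comparison-based hypothesis is used, since it guarantees that only the signs of the comparisons (together with $F_1$) can influence subsequent behaviour, so any two inputs that produce identical comparison sequences are forced to produce identical outputs and thus may share a single codeword. Once this encoding is in place, the rest of the argument is standard information theory and gives the claim with no additive loss beyond $H(F_1)$.
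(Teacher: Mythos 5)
The paper cites this result from Ailon~et~al.\ without reproducing a proof, so there is no in-paper argument to compare against; but your proof is correct and is essentially the standard one from that source. The decomposition is exactly right: conditioning on $F_1=f_1$ collapses the algorithm to a single binary decision tree, the Kraft--Shannon inequality bounds $H(F_2\mid F_1=f_1)$ by the expected depth, averaging gives $H(F_2\mid F_1)\le C$, and the chain rule $H(F_2)\le H(F_1)+H(F_2\mid F_1)$ finishes. One small imprecision worth flagging: the root-to-leaf bit strings are really a prefix-free code for the \emph{leaf} reached, not for $F_2$ itself, since distinct leaves may carry the same $F_2$-label; the conclusion still holds because $F_2$ is a deterministic function of the leaf, so $H(F_2\mid F_1=f_1)\le H(\text{leaf}\mid F_1=f_1)\le \mathrm{E}[\text{depth}\mid F_1=f_1]$. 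Your closing remark about the comparison-based hypothesis is the correct diagnosis of where determinism-from-comparison-signs is used, and the argument as written does deliver the one-sided bound $H(F_2)\le C+H(F_1)$, which is all the paper uses from this lemma (the ``$=\,C+O(H(F_1))$'' is just notational shorthand for that inequality).
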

	
We analyze the limiting complexity of the operation phase.

\begin{theorem}
	\label{thm:sort}
	Under the group product distribution model, there is a self-improving sorter with a limiting complexity of $O(n + H_\mathrm{S})$, where $H_\mathrm{S}$ denotes the entropy of the distribution of the sorting output.   The storage needed by the operation phase is $O(n^3)$.  The training phase processes $\tilde{O}(n^2)$ input instances in $\tilde{O}(n^3)$ time.  The success probability is $1- O(1/n)$.
\end{theorem}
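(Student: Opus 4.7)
The plan is to account separately for the training costs (Corollary~\ref{cor:sort-group}, Lemma~\ref{lemma:construct-V-sorting}, Theorem~\ref{thm:algorithm-limit-output}), the operation-phase running time, and the success probability.  For the training phase and storage, I would simply sum the costs already at hand: Corollary~\ref{cor:sort-group} gives $\tilde{O}(n^2)$, Lemma~\ref{lemma:construct-V-sorting} gives $O(n^2\log n)$, and Theorem~\ref{thm:algorithm-limit-output} is applied twice per group with $t_0 = O(n|G_k|)$ for $b$ and $t_0 = O(|G_k|^2)$ for $\pi$ (justified by Lemma~\ref{lem:sort-size}), yielding $\tilde{O}(n|G_k|^2 + |G_k|^3)$ time and $O(n|G_k|^2 + |G_k|^3)$ space per group; summing over groups, and using $\sum_k |G_k| = n$, gives $\tilde{O}(n^3)$ training time and $O(n^3)$ storage.

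For the operation phase, steps~1(b), 1(c), 1(d), and~3 each do $O(n)$ work in total, while Theorem~\ref{thm:algorithm-limit-output} bounds the step~1(a) trie lookups by $\sum_k O\bigl(|G_k| + H^{(b)}_k + H^{(\pi)}_k\bigr)$.  To collapse the entropy sum into $H_S$, I would apply Lemma~\ref{lem:ailon} with $F_1$ the sorted order of $I$ (entropy $H_S$) and $F_2 := \bigl(b(I|_{G_k}), \pi(I|_{G_k})\bigr)_k$: given $F_1$ together with the deterministic $V$-list, a single $O(n)$-comparison linear merge against the $V$-list yields every $b(I|_{G_k})$, and a per-group linear scan through the sorted output recovers every $\pi(I|_{G_k})$, so Lemma~\ref{lem:ailon} gives $H(F_2) = O(n + H_S)$.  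Because the $u_k$'s are independent, the $\bigl(b(I|_{G_k}),\pi(I|_{G_k})\bigr)$'s are independent across $k$, whence $H(F_2) = \sum_k\bigl(H^{(b)}_k + H^{(\pi)}_k\bigr)$ and the retrieval cost is $O(n + H_S)$.

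The main obstacle will be bounding step~2, the interval-wise merge of the $q_r$ pre-sorted per-group subsequences, by $O(n + H_S)$: a plain $k$-way heap merge only yields $O(n_r \log q_r)$ per interval and does not obviously sum to $O(n + H_S)$.  My plan is to implement step~2 via an entropy-sensitive merging scheme (pairwise merges along a Huffman tree on the subsequence sizes, using Hwang--Lin 2-way merges) that spends $O\bigl(n_r + \log\binom{n_r}{n_{r,1},\ldots,n_{r,q_r}}\bigr)$ comparisons per interval; summing gives $O(n) + \sum_r \log\binom{n_r}{\ldots}$.  The delicate entropic step is then to identify $\mathrm{E}\bigl[\sum_r \log\binom{n_r}{\ldots}\bigr]$ with the conditional entropy $H(F_1 \mid B, P)$ using the independence of the $u_k$'s (which controls the cross-group interleavings inside each interval), and to apply the chain-rule identity $H_S = H(B,P) + H(F_1 \mid B,P)$---valid because $(B,P)$ is a deterministic function of $F_1$ and the $V$-list---to conclude $H(F_1 \mid B, P) \le H_S$.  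Combined with the retrieval cost and a union bound over the failure events of Corollary~\ref{cor:sort-group} ($O(1/n)$), Lemma~\ref{lemma:construct-V-sorting} ($n^{-192}$), and the $O(|\{G_k\}|)$ invocations of Theorem~\ref{thm:algorithm-limit-output} (each $O(n^{-2})$), this yields the stated $O(n + H_S)$ limiting complexity with success probability $1 - O(1/n)$.
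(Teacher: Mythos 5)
Your accounting of the training phase, storage, and success probability matches the paper, and your use of Lemma~\ref{lem:ailon} to control the step-1(a) retrieval cost is in the right spirit. Two issues, one minor and one substantive.

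The minor one: independence of the $u_k$'s gives $H(F_2) = \sum_k H\bigl(b(I|_{G_k}),\,\pi(I|_{G_k})\bigr)$, \emph{not} $\sum_k \bigl(H^{(b)}_k + H^{(\pi)}_k\bigr)$; the two agree only when $b(I|_{G_k})$ and $\pi(I|_{G_k})$ are independent, which there is no reason to assume. The paper sidesteps this by bounding the two pieces separately: $\sum_k H_{G_k}(b) \le H\bigl(b(I|_{G_1}),b(I|_{G_2}),\ldots\bigr) = O(n+H_S)$ via Lemma~\ref{lem:ailon}, and $\sum_k H_{G_k}(\pi) = O\bigl(\sum_k \log|G_k|^2\bigr) = O(n)$ directly from the $O(|G_k|^2)$ image-size bound in Lemma~\ref{lem:sort-size}. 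You can recover your bound the same way, but the ``$=$'' as written is wrong.

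The substantive gap is step~2. The paper needs \emph{no} entropic argument there: the ordinary $k$-way heap merge in interval $[v_r,v_{r+1})$ costs $O\bigl(\sum_k |\sigma_{k,r}|\log|Z_r|\bigr) = O\bigl(\sum_k |Z_r|\,|\sigma_{k,r}|\bigr)$, and because $|Z_r| \le y_{k,r}+1$ with $y_{k,r}$ a function of the groups other than $G_k$, cross-group independence gives $\mathrm{E}\bigl[|Z_r|\,|\sigma_{k,r}|\bigr] \le (1+\mathrm{E}[y_{k,r}])\,\mathrm{E}[|\sigma_{k,r}|]$; Lemma~\ref{lemma:construct-V-sorting} yields $\mathrm{E}[y_{k,r}] = O(1)$, and $\sum_{r,k}\mathrm{E}[|\sigma_{k,r}|] = n$. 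So step~2 is $O(n)$ expected time, full stop. Your Hwang--Lin/Huffman plan has the inequality pointing the wrong way: given the counts, the interleavings in interval $r$ range over at most $\binom{n_r}{n_{r,1},\ldots}$ outcomes, so $H(F_1 \mid B,P) \le \mathrm{E}\bigl[\sum_r \log\binom{n_r}{\ldots}\bigr]$, with equality only if the interleaving is uniform given $(B,P)$ --- it is not in general. You would need the reverse bound $\mathrm{E}\bigl[\sum_r \log\binom{n_r}{\ldots}\bigr] \le H(F_1 \mid B,P) + O(n)$, which fails: e.g.\ two groups with $n/2$ points each landing in one interval in a fixed interleaving pattern makes the left side $\Theta(n)$ per interval and the right side $O(1)$. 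To repair this you would have to bring in the $V$-list expectation bound and cross-group independence anyway, at which point the direct $O(n)$ bound is both necessary and simpler.
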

\begin{proof}
Correctness is obvious.  The number of input instances required by the training phase, the running time of the training phase, and the storage required by the operation phase follow from Corollary~\ref{cor:sort-group}, Lemma~\ref{lem:sort-size}, and Theorem~\ref{thm:algorithm-limit-output}.  It remains to analyze the limiting complexity and the success probability.


Step~1(a) runs in $O\bigl(n + \sum_k H_{G_k}(b)+ \sum_k H_{G_k}(\pi) \bigr)$ expected time by Theorem~\ref{thm:algorithm-limit-output}.  Since we can merge the sorted order of $I$ with the $V$-list in $O(n)$ time to obtain $b(I|_{G_k})$'s for all $G_k$, Lemma~\ref{lem:ailon} implies that the joint entropy of $b(I|_{G_1})$, $b(I|_{G_2})$, $\ldots$ is $O(n + H_\text{S})$.  By the group production distribution model, the hidden parameters $u_k$'s follow a product distribution.  Therefore, the joint entropy of $b(I|_{G_1})$, $b(I|_{G_2})$, $\ldots$ is equal to $\sum_k H_{G_k}(b)$.  Hence,
\[
\sum_k H_{G_k}(b) = O(n + H_\text{S}).
\]
As there are $O(|G_k|^2)$ distinct outcomes in $\left\{\pi(I|_{G_k}) : I \sim \D \right\}$ by Lemma~\ref{lem:sort-size}, $H_{G_k}(\pi) = O(\ln |G_k|) = O(|G_k|)$, and so
\[
\sum_k H_{G_k}(\pi) = O(n).
\]
The expected running time of step~1(a) is thus $O(n + H_\text{S})$.

Clearly, Steps~1(b)--(d) and Step~3 take $O(n)$ time.

Consider Step~2.  Let $\sigma_{k,r}$ denote $I|_{G_k} \cap [v_r,v_{r+1})$.
Step~2 runs in $O(\sum_r \sum_k |\sigma_{k,r}|\log|Z_r|) = O\bigl(\sum_r\sum_k  |Z_r|\cdot |\sigma_{k,r}|\bigr)$ time.
Observe that $|Z_r| \leq y_{k,r} + 1$,
  where $y_{k,r}$ denotes the number of groups other than $G_k$ that have elements in $[v_r,v_{r+1})$.
Therefore,
 \begin{align*}
    \E{|Z_r|\cdot |{\sigma}_{k,r}|}
   & \leq \E{(y_{k,r}+1)|\sigma_{k,r}|} = \E{|\sigma_{k,r}|}+ \E{y_{k,r}\cdot |\sigma_{k,r}|} \\
   & = \E{|\sigma_{k,r}|} + \E{y_{k,r}} \E{|\sigma_{k,r}|} = \left(1+\E{y_{k,r}}\right) \E{|\sigma_{k,r}|} \\
   & = O\left(\E{|\sigma_{k,r}|}\right).
 \end{align*}
 The last step above applies the relation $\E{y_{k,r}} = O(1)$, which is implied by Lemma~\ref{lemma:construct-V-sorting}.  Therefore, the expected running time of Step~2 is $O\bigl(\sum_r\sum_k \E{|Z_r|\cdot |\sigma_{k,r}|}\bigr)=O\bigl(\sum_k\sum_r \E{|\sigma_{k,r}|} \bigr)= O(n)$.

We conclude that the total expected running time is $O(n + H_\text{S})$.

By Corollary~\ref{cor:sort-group}, the hidden partition is learned with probability at least $1-O(1/n)$.  The $V$-list satisfies the property in Lemma~\ref{lemma:construct-V-sorting} with probability at least $1-O(1/n)$.  By Theorem~\ref{thm:algorithm-limit-output} and the union bound, the constructions of the data structures that support the retrieval of $b(I|_{G_k})$ and $\pi(I|_{G_k})$ for all groups $G_k$'s succeed with probability at least $1-O(1/n)$.  Hence, the overall success probability is $1-O(1/n)$.
\end{proof}

\section{Self-improving Delaunay triangulator}

We use $\mathcal{G} = (G_1, G_2, \ldots)$ to denote the hidden partition of $[1,n]$ into groups.  Each group $G_k$ is governed by a hidden parameter $u_k \in \R^2$.  The parameters $u_1,u_2,\ldots$ follow a product distribution.  An input instance $I$ consists of $n$ points, $(p_1,\cdots,p_n)$, where the $x$- and $y$-coordinates of $p_i$ are denoted by $p_{i,x}$ and $p_{i,y}$, respectively.  For every $G_k$ and every $i \in G_k$, the group product distribution stipulates that $p_{i,x} = h_{i,k}^x(u_k)$ and $p_{i,y} = h_{i,k}^y(u_k)$ for some hidden functions $h_{i,k}^x$ and $h_{i,k}^y$.

We say that two input points $p_i$ and $p_j$ are in the same group $G_k$ if and only if $i$ and $j$ belong to $G_k$.  Given two input coordinates $\xi_1$ and $\xi_2$, we also say that they are in the same group $G_k$ if and only if $\xi_1 = p_{i,x}$ or $p_{i,y}$ and $\xi_2 = p_{j,x}$ or $p_{j,y}$ for some possibly non-distinct $i,j \in G_k$.

We require the functions $h_{i,k}^x$'s and $h_{i,k}^y$'s to satisfy the properties given in Definition~\ref{df:property} below, where the notion of a $d$-generic distribution is given in Definition~\ref{df:generic}.
\begin{definition}\label{df:property}
	\hspace*{.2in}
	\renewcommand{\theenumi}{\alph{enumi}}
	\renewcommand{\labelenumi}{(\theenumi)}
	\begin{enumerate}
		
		\item For every group $G_k \in \mathcal{G}$ and every $i \in G_k$, the functions $h_{i,k}^x$ and $h_{i,k}^y$ are bivariate polynomials in the $x$- and $y$-coordinates of $u_k$, respectively.  The degrees of these bivariate polynomials are at most some known constant $d_0$.
		
		\item For every group $G_k \in \mathcal{G}$, the distribution of $u_k$ is $d_0d_1$-generic, where $d_1 = 2(d_0^2/2 + d_0)^{16}$.
		
	\end{enumerate}
\end{definition}

\begin{definition}
	\label{df:generic}
	For every integer $d \geq 0$, a distribution $D$ of points in $\R^m$ is \emph{$d$-generic} if for every non-zero $m$-variate polynomial $f : \R^m \rightarrow \R$ of degree at most $d$, $\Pr_{z \sim D} [f(z)=0] =0$.
\end{definition}

It turns out that it is impossible to learn the hidden partition exactly in the case of DT.  We illustrate this difficulty for different distributions of four points.

Let $u_1$ and $u_2$ be uniformly distributed over $[0,1]^2$.  Define four points $p_i$ for $i \in [1,4]$ as follows: $p_1=(u_{1,x},0)$, $p_2=(u_{1,y},0)$, $p_3=(u_{2,x},0)$, and $p_4=(u_{2,y},0)$.  The hidden partition of $[1,4]$ is thus $\bigl(\{1,2\},\{3,4\} \bigr)$.  However, the distribution of $(p_1,p_2,p_3,p_4)$ remains the same even if we redefine the four points as $p_1=(u_{1,x},0)$, $p_2=(u_{2,x},0)$, $p_3=(u_{1,y},0)$, and $p_4=(u_{2,y},0)$, and in this case, the hidden partition of $[1,4]$ is $\bigl(\{1,3\},\{2,4\}\bigr)$.  Similarly, let $u_1$, $u_2$, $u_3$, and $u_4$ be uniformly distributed over $[0,1]^2$, if we define $p_1=(u_{1,x},0)$, $p_2=(u_{2,x},0)$, $p_3=(u_{3,y},0)$, and $p_4=(u_{4,y},0)$, the distribution of $(p_1,p_2,p_3,p_4)$ is identical to the two previous distributions, and in this case, the hidden partition is $\bigl(\{1\},\{2\},\{3\},\{4\}\bigr)$.

Therefore, it is impossible  to learn the hidden partition exactly.  Nevertheless, the partition $\bigl(\{1\},\{2\},\{3\},\{4\}\bigr)$ is an approximation of all three possibilities.  We define what we mean by an \emph{approximate partition}.


\begin{definition}
	\label{df:partition}
	Let $\mathcal{G} = (G_1, G_2,\ldots)$ be the hidden partition of $[1,n]$ induced by $\D$.  We call another partition $\mathcal{G}' = (G'_0, G'_1,G'_2,\ldots)$ of $[1,n]$ an \emph{approximate partition} if the following conditions are satisfied:
	\renewcommand{\theenumi}{\alph{enumi}}
	\renewcommand{\labelenumi}{(\theenumi)}
	\begin{enumerate}
		
		\item $G'_0$ is the set of indices $i \in [1,n]$ such that both $h_{i,k}^x$ and $h_{i,k}^y$ are constant functions, where $G_k$ is the group in $\mathcal{G}$ that contains $i$.
		
		\item For every $j \geq 1$, there exists $k$ such that $G'_j \subseteq G_k$.
		
		\item Every $G_k \in \mathcal{G}$ contains at most two distinct groups among $G'_1, G'_2, \cdots$.
		
	\end{enumerate}
As a shorthand, we use $G'_{+}$ to denote $[1,n] \setminus G'_0 = G'_1 \cup G'_2 \cup \cdots$.
\end{definition}

Despite the introduction of an approximate partition, when we refer to the functions $h_{i,k}^x$ and $h_{i,k}^y$, the second subscript $k$ refers to the index of the group $G_k$ in the true hidden partition $\mathcal{G}$.


\subsection{Learning an approximate partition for DT}\label{subsect:approx-partition}

Recall that the functions $h_{i,k}^x$'s and $h_{i,k}^y$'s are bivariate polynomials of degrees at most some constant $d_0$, and that the distribution of every $u_k$ is $d_0d_1$-generic, where $d_1 = 2(d_0^2/2 + d_0)^{16}$.  We define a notion of constant input coordinate and prove that every non-constant input coordinate has zero probability of being equal to any particular value.

\begin{definition}
	\label{df:coord}
	An input coordinate $\xi$ is a \emph{constant input coordinate} if the function  that determines $\xi$ is a constant function.  Otherwise, we call $\xi$ a \emph{non-constant input coordinate}.  An input point $p_i$ is a \emph{constant input point} if both of its coordinates are constant input coordinates; otherwise, $p_i$ is a \emph{non-constant input point}.
\end{definition}

\begin{lemma}
	\label{lemma:two-cases-for-one-pier}
	Every input coordinate $\xi$ satisfies exactly one of the following conditions.
	\renewcommand{\theenumi}{\roman{enumi}}
	\renewcommand{\labelenumi}{(\theenumi)}
	\begin{enumerate}
		\item $\xi$ is a constant input coordinate.
		\item For all $c \in \R$, $\Pr \bigl[\xi=c \bigr]=0$.
	\end{enumerate}
\end{lemma}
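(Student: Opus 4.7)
The plan is to unpack the definition of $\xi$ and invoke the $d_0 d_1$-genericity of $u_k$ on the polynomial $h - c$. By hypothesis, $\xi$ is one of $h_{i,k}^x(u_k)$ or $h_{i,k}^y(u_k)$, where the relevant function $h$ is a bivariate polynomial in $u_k$ of degree at most $d_0$, and $u_k$ has a $d_0 d_1$-generic distribution with $d_1 = 2(d_0^2/2 + d_0)^{16} \geq 1$.

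First I would dispose of the mutual-exclusivity: if (i) holds, then $\xi$ equals some fixed $c_0 \in \R$ almost surely, so $\Pr[\xi = c_0] = 1$, ruling out (ii). Then I would argue that at least one of (i), (ii) holds. Suppose (i) fails, i.e., $h$ is not a constant function. Fix an arbitrary $c \in \R$ and consider the bivariate polynomial $g(z) := h(z) - c$. The key observation is that $g$ cannot be the zero polynomial: if it were, then $h \equiv c$, contradicting the assumption that $h$ is non-constant. So $g$ is a nonzero polynomial of degree at most $d_0 \leq d_0 d_1$. Applying the $d_0 d_1$-generic property of the distribution of $u_k$ from Definition~\ref{df:generic} to $g$ yields $\Pr_{u_k}[g(u_k) = 0] = 0$, which is exactly $\Pr[\xi = c] = 0$. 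Since $c$ was arbitrary, condition (ii) holds.

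There is essentially no obstacle here beyond chasing the definitions: the lemma is a direct application of the genericity hypothesis together with the observation that a polynomial $h(z) - c$ vanishes identically only when $h$ is the constant $c$. The content of Definition~\ref{df:generic} does all the probabilistic work; the choice of $d_1$ is not yet used in any interesting way at this stage, only the fact that $d_0 d_1 \geq d_0$ so that $g$ is covered by the genericity assumption.
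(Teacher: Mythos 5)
Your proof is correct and matches the paper's argument: in both, the non-constant case reduces to observing that $h - c$ is a nonzero polynomial of degree at most $d_0 \leq d_0 d_1$, so the $d_0 d_1$-genericity of $u_k$ gives $\Pr[\xi = c] = 0$. You additionally spell out the mutual exclusivity (which the paper leaves implicit), but the substance is the same.
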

\begin{proof}
	Without loss of generality, assume that $\xi=h_{i,k}^x(u_k)$.   If $h_{i,k}^x$ is a constant function, then~(i) is true.
    Suppose in the following that $h_{i,k}^x$ is not a constant function and we show that~(ii) holds.
    Take an arbitrary value $c \in \R$.
    Because $h_{i,k}^x$ is not a constant function, $h_{i,k}^x-c$ is a non-zero bivariate polynomial of degree at most $d_0$.
    By Definition~\ref{df:property}(b), the distribution of $u_k$ is $d_0d_1$-generic, which by Definition~\ref{df:generic} implies that $\Pr \bigl[[h_{i,k}^x-c](u_k)=0\bigr]=0$.
    This implies that $\Pr \bigl[h_{i,k}^x(u_k)=c\bigr]=0$, i.e., $\Pr \bigl[\xi=c\bigr]=0$.
\end{proof}

Every pair of non-constant input coordinates must be either \emph{coupled} or \emph{uncoupled} as explained in the following result.  We defer its proof to Appendix~\ref{sec:two-cases-for-two-piers}.

\begin{lemma}\label{lemma:two-cases-for-two-piers}
	For every pair of non-constant input coordinates $\xi_1$ and $\xi_2$, exactly one of the following conditions is satisfied.
	\renewcommand{\theenumi}{\roman{enumi}}
	\renewcommand{\labelenumi}{(\theenumi)}
	\begin{enumerate}
		\item There exists a non-zero bivariate polynomial $f$ of degree at most $d_1$ such that $f(\xi_1,\xi_2)\equiv 0$.
		We say that $\xi_1$ and $\xi_2$ are coupled in this case.
		\item The distribution of $(\xi_1,\xi_2)$ is $d_1$-generic.  We say that $\xi_1$ and $\xi_2$ are uncoupled in this case.
	\end{enumerate}
In particular, if $\xi_1$ and $\xi_2$ are in different groups in $\mathcal{G}$, then $\xi_1$ and $\xi_2$ must be uncoupled.
\end{lemma}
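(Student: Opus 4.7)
I first note that (i) and (ii) are mutually exclusive: if a non-zero $f$ of degree $\leq d_1$ satisfies $f(\xi_1,\xi_2)\equiv 0$, then $\Pr[f(\xi_1,\xi_2)=0]=1$, contradicting $d_1$-genericity. So it suffices to show at least one of (i), (ii) holds, and in particular that different groups force (ii). Write $\xi_1=P(u_k)$ and $\xi_2=Q(u_l)$, where $P,Q$ are non-constant bivariate polynomials of degree $\leq d_0$ (by Definition~\ref{df:property} and the non-constancy hypothesis combined with Lemma~\ref{lemma:two-cases-for-one-pier}). I split into two cases.

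\textbf{Case 1: $k\neq l$.} Here $u_k$ and $u_l$ are independent, each $d_0d_1$-generic. Take any non-zero $f(s,t)$ of degree $\leq d_1$ and form the four-variable polynomial $g(u,v):=f(P(u),Q(v))$ of total degree $\leq d_0d_1$. I argue $g\not\equiv 0$: since $P,Q$ are non-constant, their images $S=P(\R^2)$ and $T=Q(\R^2)$ are infinite subsets of $\R$; if $g\equiv 0$ then $f$ vanishes on $S\times T$, and by the fact that a univariate polynomial with infinitely many zeros is zero (applied twice, first fixing $t\in T$ and varying $s\in S$, then varying $t\in T$), $f$ must be identically zero, a contradiction. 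Given $g\not\equiv 0$, write $g(u,v)=\sum_\alpha a_\alpha(u)v^\alpha$ and pick $a_{\alpha_0}\not\equiv 0$ of degree $\leq d_0d_1$; $d_0d_1$-genericity of $u_k$ gives $a_{\alpha_0}(u_k)\neq 0$ almost surely, and conditional on that, $d_0d_1$-genericity of $u_l$ gives $g(u_k,u_l)\neq 0$ almost surely. Hence $\Pr[f(\xi_1,\xi_2)=0]=0$, establishing (ii) and the different-group claim.

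\textbf{Case 2: $k=l$.} Now $\xi_1=P(u_k),\xi_2=Q(u_k)$. I split on algebraic dependence of $P,Q$ in $\R[u_1,u_2]$. If $P,Q$ are algebraically \emph{independent}, then for any non-zero $f$ of degree $\leq d_1$, the polynomial $f(P,Q)\in\R[u_1,u_2]$ is non-zero of degree $\leq d_0d_1$, so $d_0d_1$-genericity of $u_k$ yields (ii). If $P,Q$ are algebraically \emph{dependent}, the kernel of the substitution map $\R[s,t]\to\R[u_1,u_2]$, $s\mapsto P$, $t\mapsto Q$, is a non-zero prime ideal of height $1$ in the UFD $\R[s,t]$, hence principal, say $(f)$; it remains to show $\deg f\leq d_1$, giving (i).

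This degree bound is the main obstacle. I plan to obtain it by a Perron/B\'ezout-style elimination: iterated resultant computation on $s-P(u_1,u_2)$ and $t-Q(u_1,u_2)$, first eliminating $u_1$ and then $u_2$ from the resulting coefficient system, produces a non-zero element of the kernel $(f)\subset\R[s,t]$ whose total degree admits an explicit polynomial bound in $d_0$ (roughly $d_0^2$ from the Sylvester matrix, possibly compounded by a subsequent elimination step). The constant $d_1=2(d_0^2/2+d_0)^{16}$ is chosen comfortably large to dominate any such bound, so $\deg f\leq d_1$ and (i) holds. The remainder of the argument is a routine mix of polynomial identity reasoning and genericity applied exactly as in the other cases.
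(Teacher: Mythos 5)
Your Case~1 (different groups) follows the same structure as the paper's proof and is correct: fix a non-zero $f$ of degree at most $d_1$, show that the induced polynomial in the independent parameters is non-zero, and apply genericity of $u_k$ and then $u_l$.  Your argument that $g\not\equiv 0$ (via infinite images $P(\R^2),Q(\R^2)$) is a valid alternative to the paper's use of Lemma~\ref{lemma:two-cases-for-one-pier}.

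Your Case~2 (same group), however, diverges from the paper and contains a genuine gap.  You split on whether $P$ and $Q$ are algebraically independent, and in the dependent subcase you reduce to showing that the generator $f$ of the (principal) kernel ideal has degree at most $d_1$.  You then write ``I plan to obtain it by a Perron/B\'ezout-style elimination \ldots roughly $d_0^2$ \ldots possibly compounded by a subsequent elimination step \ldots $d_1$ is chosen comfortably large.''  That is not a proof: the actual elimination bound is not derived, and the numerology linking it to $d_1 = 2(d_0^2/2+d_0)^{16}$ (a Dub\'e-type Gr\"obner bound that the paper uses for a \emph{different} lemma, Lemma~\ref{lemma:case-1-for-3-piers}, involving three coordinates) is never verified.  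So as written, the dependent subcase is incomplete.

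The gap is also unnecessary, because the degree bound is not needed.  Notice that (i) and (ii) are mutually exclusive (as you point out), so it suffices to show $\lnot\text{(i)} \Rightarrow \text{(ii)}$.  This is exactly what the paper does in Case~2: assume no non-zero $f$ of degree $\leq d_1$ satisfies $f(\xi_1,\xi_2)\equiv 0$; then for every non-zero $f$ of degree $\leq d_1$, the substituted polynomial $\hat f(u_k)=f\bigl(h(u_k),h'(u_k)\bigr)$ is a non-zero polynomial of degree $\leq d_0 d_1$, and $d_0d_1$-genericity of $u_k$ gives $\Pr[\hat f(u_k)=0]=0$, i.e.\ (ii).  In your language: if the kernel is generated by $g$ with $\deg g > d_1$, then every non-zero $f$ of degree $\leq d_1$ lies outside $(g)$, hence $f(P,Q)\not\equiv 0$, hence (ii) holds anyway.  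Adding this observation would close your Case~2 without any elimination-theoretic degree estimate, and it also absorbs your ``algebraically independent'' subcase as a special instance ($\lnot$(i) trivially), making the split on algebraic dependence unnecessary.
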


We state two properties of every triple of input coordinates in the following two lemmas.  We defer their proofs to Appendices~\ref{sec:case-1-for-3-piers} and~\ref{sec:case-2-for-3-piers}.

\begin{lemma}\label{lemma:case-1-for-3-piers}
	For every triple of input coordinates $\xi_1$, $\xi_2$, and $\xi_3$ that are in the same group in $\mathcal{G}$, there exists a non-zero trivariate polynomial $f$ with degree at most $d_1$ such that $f(\xi_1,\xi_2,\xi_3)\equiv 0$.
\end{lemma}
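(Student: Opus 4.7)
The plan is a dimension-counting argument built on the fact that the polynomial ring $\R[u_{k,x},u_{k,y}]$ has transcendence degree two, so any three of its elements are automatically algebraically dependent; the only real work is to certify that the resulting dependence has total degree at most $d_1$. Concretely, because $\xi_1,\xi_2,\xi_3$ lie in the common group $G_k$, each $\xi_j$ equals $h_{i,k}^x(u_k)$ or $h_{i,k}^y(u_k)$ for some $i \in G_k$, so by Definition~\ref{df:property}(a) I may write $\xi_j = g_j(u_{k,x},u_{k,y})$ with $g_j \in \R[u_{k,x},u_{k,y}]$ and $\deg g_j \leq d_0$.

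The second step is to consider the $\R$-linear substitution map
\[
\Phi_D : \R[X_1,X_2,X_3]_{\leq D} \,\longrightarrow\, \R[u_{k,x},u_{k,y}]_{\leq d_0 D}, \qquad f \mapsto f(g_1,g_2,g_3),
\]
whose domain and codomain have $\R$-dimensions $\binom{D+3}{3}$ and $\binom{d_0 D+2}{2}$, respectively. Whenever $D$ is chosen so that $\binom{D+3}{3} > \binom{d_0 D+2}{2}$, the kernel of $\Phi_D$ must contain a non-zero polynomial $f$ of total degree at most $D$; by construction $f(g_1,g_2,g_3)$ vanishes identically as a polynomial in $u_k$, which gives exactly the trivariate identity $f(\xi_1,\xi_2,\xi_3) \equiv 0$ demanded by the lemma.

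The final step is to verify that some admissible $D$ is at most $d_1$. Since $\binom{D+3}{3}$ grows like $D^3/6$ while $\binom{d_0 D+2}{2}$ grows like $d_0^2 D^2/2$, any $D$ of order $d_0^2$ suffices; a direct check shows that $D = 5 d_0^2 + 6$ already works for every $d_0 \geq 1$, and this is vastly smaller than $d_1 = 2(d_0^2/2 + d_0)^{16}$ (whose large exponent is presumably tuned to the harder two-coordinate dichotomy of Lemma~\ref{lemma:two-cases-for-two-piers} rather than to the present lemma). I do not anticipate a genuine obstacle. The only mildly delicate point is bookkeeping: the image of $\Phi_D$ sits in degree $\leq d_0 D$ rather than $\leq D$, and once $\Phi_D(f)$ is the zero polynomial, the identity $f(\xi_1,\xi_2,\xi_3) \equiv 0$ holds for every realization of $u_k$, so there is nothing probabilistic to control here beyond what is already guaranteed by Definition~\ref{df:property}(a).
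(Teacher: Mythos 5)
Your proof is correct, and it takes a genuinely different and considerably more elementary route than the paper. The paper defines polynomials $f_1,f_2,f_3 \in \R[p,q,r,s,t]$, forms the ideal $\mathcal{J}$ they generate, computes a reduced Gr\"obner basis, bounds its degree by Dub\'e's theorem ($d_1 = 2(d_0^2/2+d_0)^{16}$), passes to the second elimination ideal via the Elimination Theorem, and then argues via the Closure Theorem and a dimension count on varieties over $\CC$ that this elimination ideal is nonzero. You replace all of this with a single finite-dimensional linear-algebra observation: the substitution map $\Phi_D : \R[X_1,X_2,X_3]_{\leq D} \to \R[u_{k,x},u_{k,y}]_{\leq d_0 D}$, $f \mapsto f(g_1,g_2,g_3)$, is $\R$-linear, its domain has dimension $\binom{D+3}{3} \sim D^3/6$ while its codomain has dimension $\binom{d_0 D + 2}{2} \sim d_0^2 D^2/2$, so for $D$ of order $d_0^2$ the kernel is nontrivial, and any nonzero $f$ in it satisfies $f(\xi_1,\xi_2,\xi_3)\equiv 0$ because $\Phi_D(f)$ vanishes identically as a polynomial in $u_k$. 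Your bookkeeping is sound (in fact $D = 3d_0^2+1$ already works), and since every $D$ of this size is far below $d_1$, the conclusion of the lemma follows a fortiori. What your approach buys: it avoids Gr\"obner bases, elimination theory, Zariski closures, and Dub\'e's bound entirely, keeping the argument self-contained, and it yields a degree bound of $O(d_0^2)$ rather than $d_1 = 2(d_0^2/2+d_0)^{16}$; if that sharper bound were propagated into Definition~\ref{df:property}(b) and Lemma~\ref{lemma:learn-partition-DT-by-linearity}, it would reduce the sample complexity $\kappa = \binom{m+d}{m}$ of the partition-learning step by a large margin. The one thing the paper's machinery produces that yours does not is an explicit elimination-ideal description of the dependence, but since the lemma only asserts existence and Lemma~\ref{lemma:learn-partition-DT-by-linearity} finds the dependence algorithmically by Gaussian elimination anyway, nothing is lost.
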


\begin{lemma}\label{lemma:case-2-for-3-piers}
	For every triple of non-constant input coordinates $\xi_1$, $\xi_2$, and $\xi_3$ that are pairwise uncoupled and are not in the same group in $\mathcal{G}$, the distribution of $(\xi_1,\xi_2,\xi_3)$ is $d_1$-generic.
\end{lemma}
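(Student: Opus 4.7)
The plan is to single out one of the three coordinates, denote it by $\xi^*$ and call it the \emph{lone} coordinate, so that the group index $k^*$ of $\xi^*$ differs from the group indices of the other two coordinates, which I denote by $\xi'$ and $\xi''$. Because the three coordinates are not all in the same group, such a lone coordinate always exists: if the three are in pairwise distinct groups, any one of them can play the role of $\xi^*$; if exactly two share a group, the remaining coordinate is forced to be $\xi^*$. The pairwise-uncoupled hypothesis together with Lemma~\ref{lemma:two-cases-for-two-piers} implies that $(\xi',\xi'')$ has a $d_1$-generic distribution on $\R^2$, regardless of whether $\xi'$ and $\xi''$ lie in the same or in different groups. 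Moreover, by the product distribution on the $u_k$'s, $u_{k^*}$ is independent of the parameter(s) controlling $\xi'$ and $\xi''$, so $\xi^*=h^*(u_{k^*})$ is independent of $(\xi',\xi'')$, where $h^*$ is the corresponding bivariate polynomial of degree at most $d_0$.

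Given any non-zero trivariate polynomial $f$ of total degree at most $d_1$, I would regroup its terms by the power of the lone variable:
\[
f(y',y'',y^*) \;=\; \sum_{i=0}^{d_1} g_i(y',y'')\,(y^*)^i,
\]
where each $g_i$ is a bivariate polynomial in $(y',y'')$ of degree at most $d_1$. Because $f\not\equiv 0$, some coefficient $g_{i^*}$ is a non-zero bivariate polynomial, and the $d_1$-genericity of $(\xi',\xi'')$ applied to $g_{i^*}$ gives $g_{i^*}(\xi',\xi'')\neq 0$ almost surely. On this event, the univariate polynomial $\phi_{\xi',\xi''}(y^*) := f(\xi',\xi'',y^*)$ is non-zero and of degree at most $d_1$ in $y^*$, since its $(y^*)^{i^*}$-coefficient equals $g_{i^*}(\xi',\xi'')\neq 0$.

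To finish, I would condition on a value $(\xi',\xi'')=(\alpha,\beta)$ with $g_{i^*}(\alpha,\beta)\neq 0$ and use the independence of $\xi^*$ from $(\xi',\xi'')$ to reduce the problem to showing $\Pr\bigl[\phi_{\alpha,\beta}(h^*(u_{k^*}))=0\bigr]=0$. The composition $\phi_{\alpha,\beta}\circ h^*$ is a bivariate polynomial in $u_{k^*}$ of degree at most $d_0 d_1$, and the one step requiring care is to verify that it is a non-zero polynomial. This holds because if $\phi_{\alpha,\beta}\circ h^*$ vanished identically on $\R^2$, then the image of $h^*$ would be contained in the finite root set of the non-zero univariate $\phi_{\alpha,\beta}$, contradicting the fact that a non-constant real bivariate polynomial has infinite image (its continuous image on the connected set $\R^2$ is an interval with more than one point). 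With the composition non-zero, Definition~\ref{df:property}(b) (the $d_0 d_1$-generic distribution of $u_{k^*}$) delivers $\Pr\bigl[\phi_{\alpha,\beta}(h^*(u_{k^*}))=0\bigr]=0$, and Fubini over $(\xi',\xi'')$ concludes the proof. I expect the main subtlety to be keeping the degree bookkeeping tight so that the $d_1$-genericity of $(\xi',\xi'')$ and the $d_0 d_1$-genericity of $u_{k^*}$ apply at exactly the right moments, and the nonzeroness of the composition after substitution.
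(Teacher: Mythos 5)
Your proof is correct and follows essentially the same route as the paper's: isolate the coordinate that lies in its own group, expand $f$ in powers of that variable, apply the $d_1$-genericity of the remaining pair (which follows from the pairwise-uncoupled hypothesis and Lemma~\ref{lemma:two-cases-for-two-piers}), condition, and integrate over the pair. The only divergence is the final step, where the paper simply invokes Lemma~\ref{lemma:two-cases-for-one-pier} to conclude that the non-constant coordinate $\xi^*$ hits the finitely many roots of $\phi_{\alpha,\beta}$ with probability zero, whereas you re-derive that fact by composing $\phi_{\alpha,\beta}$ with $h^*$, arguing non-vanishing of the composition topologically, and appealing to the $d_0d_1$-genericity of $u_{k^*}$---a longer but valid path.
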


The following lemma is the main tool for learning an approximate partition.
We defer its proof to Appendix~\ref{sec:learn-partition-DT-by-linearity}.

\begin{lemma}\label{lemma:learn-partition-DT-by-linearity}
	Let $m$ and $d$ be two positive integer constants.  Let $\xi_1,\ldots,\xi_m$ be $m$ input coordinates.  Suppose that exactly one of the following conditions is satisfied.
	\renewcommand{\theenumi}{\roman{enumi}}
	\renewcommand{\labelenumi}{(\theenumi)}
	\begin{enumerate}
		\item There exists a non-zero $m$-variate polynomial $f$ of degree at most $d$ such that $f(\xi_1,\ldots,\xi_m)\equiv 0$.
		\item The distribution of $(\xi_1,\ldots,\xi_m)$ is $d$-generic.
	\end{enumerate}
	Then, using $\kappa={m+d\choose m}$ input instances and $O(\kappa^3)$ time, we can determine almost surely whether condition (i) or (ii) is satisfied.
\end{lemma}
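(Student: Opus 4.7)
The plan is to reduce the question to a linear-algebraic test on a random matrix built from the $\kappa = \binom{m+d}{m}$ monomials of total degree at most $d$. Let $v:\R^m\to\R^\kappa$ be the map that sends $(\xi_1,\ldots,\xi_m)$ to the vector of values of all such monomials in some fixed order. Every polynomial of degree at most $d$ is of the form $w^\top v$ for some $w\in\R^\kappa$, and it is non-zero iff $w\ne 0$. Thus condition~(i) is equivalent to the existence of a non-zero $w\in\R^\kappa$ with $w^\top v(\xi_1,\ldots,\xi_m)=0$ almost surely, and condition~(ii) says that for every fixed non-zero $w$, this probability is zero.

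The algorithm draws $\kappa$ independent input instances and forms the $\kappa\times\kappa$ matrix $M$ whose $s$-th row is the realization of $v$ on the $s$-th instance. It outputs~(i) if $\det M=0$ and outputs~(ii) otherwise. Since $\det M$ can be computed in $O(\kappa^3)$ time by Gaussian elimination, the stated time bound holds. Correctness under~(i) is immediate: the coefficient vector $c$ of the annihilating polynomial satisfies $Mc=0$ on every drawn instance (because $f(\xi_1,\ldots,\xi_m)\equiv 0$ as a random variable), so $\det M=0$ with probability one.

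The main task is to show that, under~(ii), $\det M\ne 0$ almost surely. I will argue by induction on $r$ that the first $r$ rows of $M$ are linearly independent with probability one. Assume this for $r-1$, and condition on those samples; the first $r-1$ rows then span a fixed $(r-1)$-dimensional subspace $V\subset\R^\kappa$. Pick any non-zero $w$ in the (nontrivial) orthogonal complement $V^\perp$; then $w^\top v$ is a fixed non-zero polynomial of degree at most $d$, and by~(ii) together with the independence of the $r$-th instance from the first $r-1$, the conditional probability that $w^\top v^{(r)}=0$ equals zero. Hence $v^{(r)}\notin V$ almost surely, completing the induction. Setting $r=\kappa$ gives $\det M\ne 0$ almost surely, so the algorithm outputs~(ii) correctly with probability one.

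The subtle point worth highlighting, and the only place that needs careful bookkeeping, is the conditioning step in the inductive argument: the separating vector $w$ depends measurably on the earlier instances, so the invocation of condition~(ii) must be made after conditioning on them. This is legitimate because condition~(ii) applies to every fixed non-zero polynomial of degree at most $d$ and $v^{(r)}$ is drawn independently of the earlier samples. The remaining ingredients---that the space of polynomials of degree at most $d$ in $m$ variables has dimension exactly $\kappa$, and that Gaussian elimination on a $\kappa\times\kappa$ matrix runs in $O(\kappa^3)$ time---are standard.
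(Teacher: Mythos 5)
Your proposal is correct and takes essentially the same route as the paper: both map each sample to its length-$\kappa$ vector of monomial values, reduce the question to singularity of the resulting $\kappa\times\kappa$ matrix via Gaussian elimination, argue that under condition~(i) the annihilating polynomial's coefficient vector lies in the kernel, and under condition~(ii) prove full rank by induction using a separating functional orthogonal to the span of the previously drawn rows. Your explicit remark that the separating vector is chosen after conditioning on the earlier samples (and that this is sound because the next sample is independent and~(ii) applies to every fixed non-zero polynomial) is a point the paper leaves implicit, so your write-up is if anything slightly more careful.
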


By combining Lemmas~\ref{lemma:two-cases-for-one-pier}--\ref{lemma:learn-partition-DT-by-linearity}, we can perform three kinds of tests on input coordinates as stated in the following result.

\begin{lemma}\label{cor:learn-DT}
	\hspace*{.2in}
	\renewcommand{\theenumi}{\roman{enumi}}
	\renewcommand{\labelenumi}{(\theenumi)}
	\begin{enumerate}
		\item For every input coordinate $\xi$, using ${1+1\choose 1}=2$ input instances, we can determine almost surely whether $\xi$ is a constant input coordinate in $O(1)$ time.
		
		\item For every pair of non-constant input coordinates $\xi_1$ and $\xi_2$, using ${2+d_1\choose 2}$ input instances, we can determine almost surely whether $\xi_1$ and $\xi_2$ are coupled or uncoupled in $O(d_1^6) $ time.
		
		\item For every triple of non-constant input coordinates $\xi_1$, $\xi_2$, and $\xi_3$ that are pairwise uncoupled, using ${3+d_1\choose 3}$ input instances, we can determine almost surely whether $\xi_1$, $\xi_2$, and $\xi_3$ are in the same group in $\mathcal{G}$ in $O(d_1^9)$ time.
	\end{enumerate}
\end{lemma}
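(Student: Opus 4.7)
The plan is to reduce each of the three parts to a single invocation of Lemma~\ref{lemma:learn-partition-DT-by-linearity}, after verifying that the required dichotomy ``condition~(i) or condition~(ii)'' holds for the coordinate tuple under consideration. In each case, the number of instances and the running time will then be read off from the lemma with the appropriate values of $m$ and $d$.

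For part~(i), I would apply Lemma~\ref{lemma:learn-partition-DT-by-linearity} with $m=1$ and $d=1$ to the singleton tuple $(\xi)$. The relevant dichotomy is supplied by Lemma~\ref{lemma:two-cases-for-one-pier}: either $\xi$ is a constant input coordinate, in which case the nonzero univariate polynomial $f(t)=t-c$ of degree $1$ satisfies $f(\xi)\equiv 0$ for some $c\in\R$, matching condition~(i) of Lemma~\ref{lemma:learn-partition-DT-by-linearity}; or $\Pr[\xi=c]=0$ for every $c\in\R$, which trivially makes the distribution of $\xi$ $1$-generic (any nonzero polynomial of degree at most $1$ in one variable vanishes only at finitely many points, each of which occurs with probability $0$), matching condition~(ii). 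Hence Lemma~\ref{lemma:learn-partition-DT-by-linearity} distinguishes the two cases almost surely, and $\kappa=\binom{1+1}{1}=2$, with running time $O(\kappa^3)=O(1)$.

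For part~(ii), I would apply Lemma~\ref{lemma:learn-partition-DT-by-linearity} with $m=2$ and $d=d_1$ to $(\xi_1,\xi_2)$. The dichotomy is precisely the content of Lemma~\ref{lemma:two-cases-for-two-piers}: coupled corresponds to condition~(i) and uncoupled corresponds to condition~(ii). Therefore the test of Lemma~\ref{lemma:learn-partition-DT-by-linearity} decides coupled versus uncoupled almost surely, using $\kappa=\binom{2+d_1}{2}=O(d_1^2)$ instances and $O(\kappa^3)=O(d_1^6)$ time.

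For part~(iii), I would apply Lemma~\ref{lemma:learn-partition-DT-by-linearity} with $m=3$ and $d=d_1$ to $(\xi_1,\xi_2,\xi_3)$. The dichotomy here is obtained by combining Lemma~\ref{lemma:case-1-for-3-piers} and Lemma~\ref{lemma:case-2-for-3-piers}: if the three coordinates lie in a common group in $\mathcal{G}$, then Lemma~\ref{lemma:case-1-for-3-piers} furnishes the required nonzero trivariate polynomial relation of degree at most $d_1$, yielding condition~(i); if instead they do not lie in a common group, then since they are pairwise uncoupled by hypothesis, Lemma~\ref{lemma:case-2-for-3-piers} guarantees that the joint distribution of $(\xi_1,\xi_2,\xi_3)$ is $d_1$-generic, yielding condition~(ii). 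The two situations are mutually exclusive by Definition~\ref{df:generic} applied to the polynomial of Lemma~\ref{lemma:case-1-for-3-piers}. Hence Lemma~\ref{lemma:learn-partition-DT-by-linearity} decides the same-group question almost surely with $\kappa=\binom{3+d_1}{3}=O(d_1^3)$ instances and $O(\kappa^3)=O(d_1^9)$ time. There is no serious obstacle in this proof; the only subtle point is that part~(iii) really does need \emph{both} Lemma~\ref{lemma:case-1-for-3-piers} and Lemma~\ref{lemma:case-2-for-3-piers} to cover the two sides of the dichotomy and that the pairwise-uncoupled hypothesis is precisely what makes Lemma~\ref{lemma:case-2-for-3-piers} applicable.
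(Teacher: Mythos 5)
Your proof is correct and takes essentially the same approach as the paper: each part is reduced to a single application of Lemma~\ref{lemma:learn-partition-DT-by-linearity}, using Lemma~\ref{lemma:two-cases-for-one-pier} for part~(i), Lemma~\ref{lemma:two-cases-for-two-piers} for part~(ii), and the pair Lemma~\ref{lemma:case-1-for-3-piers}/Lemma~\ref{lemma:case-2-for-3-piers} for part~(iii). Your explicit check that the two conditions in part~(iii) are mutually exclusive (so that the "exactly one of" hypothesis of Lemma~\ref{lemma:learn-partition-DT-by-linearity} is met) is a useful bit of care that the paper leaves implicit, but it is not a different argument.
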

\begin{proof}
	Consider (i).
By Lemma~\ref{lemma:two-cases-for-one-pier}, either $\xi$ is a constant input coordinate or $\Pr\bigl[\xi=c\bigr]=0$ for all $c \in \R$.
The former case is equivalent to the existence of a non-zero univariate polynomial $f$ of degree~1 such that $f(\xi) \equiv 0$, and the latter case is equivalent to the distribution of $\xi$ being 1-generic.
As a result, by Lemma~\ref{lemma:learn-partition-DT-by-linearity}, we can distinguish between these two cases almost surely using
	${1+1\choose 1} = 2$ input instances in $O(1)$ time.

	The correctness of (ii) follows directly from Lemmas~\ref{lemma:two-cases-for-two-piers} and~\ref{lemma:learn-partition-DT-by-linearity}.

	Consider (iii).  Either one of the following two possibilities hold.
	\begin{itemize}
		\item If $\xi_1$, $\xi_2$, and $\xi_3$ are in the same group in $\mathcal{G}$, then by Lemma~\ref{lemma:case-1-for-3-piers}, there exists a non-zero trivariate polynomial of degree at most $d_1$ such that $f(\xi_1,\xi_2,\xi_3)\equiv 0$.
		\item If $\xi_1$, $\xi_2$, and $\xi_3$ are not in the same group in $\mathcal{G}$, then by Lemma~\ref{lemma:case-2-for-3-piers}, the distribution of $(\xi_1,\xi_2,\xi_3)$ is $d_1$-generic.
	\end{itemize}
	By Lemma~\ref{lemma:learn-partition-DT-by-linearity}, these two possibilities can be distinguished almost surely using ${3+d_1\choose 3}$ instances in $O(d_1^9)$ time.
\end{proof}

We are ready to show that we can construct an approximate partition of $[1,n]$ fairly efficiently.

\begin{lemma}\label{lem:test-one-two-piers}
	Using $O(n^3)$ input instances, we can find an approximate partition of $\,[1,n]$ almost surely in $O(n^3\alpha(n))$ time.
\end{lemma}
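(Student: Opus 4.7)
The plan is a three-phase algorithm driven by the three tests provided by Lemma~\ref{cor:learn-DT}. In Phase~1, I apply test~(i) to each of the $2n$ input coordinates and place an index $i$ into $G'_0$ exactly when both $p_{i,x}$ and $p_{i,y}$ turn out to be constant. In Phase~2, I initialize a union-find structure on $[1,n]\setminus G'_0$, and for every pair of non-constant coordinates drawn from two distinct indices I invoke test~(ii); if the pair is coupled, I union their owning indices. By the contrapositive of Lemma~\ref{lemma:two-cases-for-two-piers}, coupled coordinates must live in a common true group, so each resulting union-find component is a subset of some $G_k$.

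Phase~3 is where the ``at most two approximate groups per $G_k$'' budget gets enforced. I pick a representative non-constant coordinate $r_K$ from each Phase-2 component $K$; representatives from distinct components are pairwise uncoupled by the definition of Phase~2 and are therefore eligible inputs to test~(iii). For every unordered triple $\{K_a,K_b,K_c\}$ of distinct components I run test~(iii) on $(r_{K_a},r_{K_b},r_{K_c})$ and merge the three components whenever the answer is ``same group''. Accounting resources across all three phases gives $O(n)+O(n^2)+O(n^3)=O(n^3)$ fresh input instances and $O(n)+O(n^2\alpha(n))+O(n^3\alpha(n))=O(n^3\alpha(n))$ running time, where the $\alpha(n)$ factor comes from the union-find bookkeeping.

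The main obstacle is arguing that Phase~3 actually produces a valid approximate partition independent of how the representatives $r_K$ were chosen. The resolution rests on Lemma~\ref{lemma:case-1-for-3-piers}: any three input coordinates lying in a common $G_k$ satisfy a non-zero polynomial relation of degree at most $d_1$, so test~(iii) detects ``same group'' for any three reps sitting inside a common $G_k$; conversely, by Lemma~\ref{lemma:case-2-for-3-piers}, reps sitting in distinct true groups yield a $d_1$-generic triple and will be correctly classified as not-same-group. Consequently, every $G_k$ that is split by Phase~2 into $s_k\geq 3$ coupling components collapses to a single approximate group in Phase~3, while $s_k\in\{1,2\}$ is already within budget, yielding condition~(c) of Definition~\ref{df:partition}; conditions~(a) and~(b) are immediate from Phases~1 and~2. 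Almost-sure correctness follows from applying Lemma~\ref{cor:learn-DT} to the polynomially many individual tests.
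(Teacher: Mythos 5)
Your proposal matches the paper's three-phase strategy (classify constant coordinates via test~(i), merge via pairwise coupling tests~(ii), then merge via triple tests~(iii)) and invokes the same supporting lemmas (Lemmas~\ref{lemma:two-cases-for-two-piers}, \ref{lemma:case-1-for-3-piers}, \ref{lemma:case-2-for-3-piers}, and \ref{cor:learn-DT}) to justify correctness and condition~(c) of Definition~\ref{df:partition}. The only implementation difference is that you fix one representative coordinate per Phase-2 component and test all triples of those fixed representatives, whereas the paper iteratively re-selects coordinates from the current groups until no merge is possible; both are valid and give the same $O(n^3)$ sample and $O(n^3\alpha(n))$ time bounds.
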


\begin{proof}
	We construct an approximate partition $\mathcal{G}'$ as stated in Definition~\ref{df:partition}.  First, we apply Lemma~\ref{cor:learn-DT}(i) to decide for each input coordinate whether it is a constant input coordinate or not.  This step requires $O(n)$ input instances and $O(n)$ time.
	
	Afterwards, we can pick out the constant input points.  The indices of these constant input points form $G_0'$ in $\mathcal{G}'$.  For $j \geq 1$, we initialize each $G'_j$ to contain a distinct index in $[1,n] \setminus G_0'$.   This gives the initial $\mathcal{G}' = (G_0',G_1',G_2',\ldots)$.  Properties~(a) and (b) in Definition~\ref{df:partition} are satisfied, but property~(c) may be violated.   We show below how to merge groups in $\mathcal{G'}$  step by step so that properties~(a) and~(b) are preserved and property~(c) is satisfied in the end.  Note that $G_0'$ will not change throughout the successive merges.
	
	There are two phases in the merging process.
	
	Take two non-constant input coordinates $\xi_1$ and $\xi_2$ that are in distinct groups in the current $\mathcal{G}'$.   We apply Lemma~\ref{cor:learn-DT}(ii) to test whether $\xi_1$ and $\xi_2$ are coupled.  If the test reveals that $\xi_1$ and $\xi_2$ are coupled, by Lemma~\ref{lemma:two-cases-for-two-piers}, $\xi_1$ and $\xi_2$ must be in the same group in $\mathcal{G}$, and therefore, we update $\mathcal{G}'$ by merging the groups in $\mathcal{G}'$ that contain $\xi_1$ and $\xi_2$.  We repeat this step until no two groups of $\mathcal{G}'$ can be merged anymore.  This completes phase one.  We require $O(1)$ input instances and $O(1)$ time to test a pair of input coordinates.  The rest can be done using a disjoint union-find data structure~\cite{MIT}.  So a total of $O(n^2)$ input instances and $O(n^2\alpha(n))$ time are needed in phase one.
	
	Take a triple of non-constant input coordinates $\xi_1$, $\xi_2$, and $\xi_3$ that are in three distinct groups in the current $\mathcal{G}'$.  Thanks to phase one, $\xi_1$, $\xi_2$, and $\xi_3$ are pairwise uncoupled as they are in distinct groups in $\mathcal{G}'$.  Therefore, Lemma~\ref{cor:learn-DT}(iii) is applicable and we use it to decide whether $\xi_1$, $\xi_2$, and $\xi_3$ are in the same group in $\mathcal{G}$.  If so, we update $\mathcal{G}'$ by merging the groups in $\mathcal{G}'$ that contain $\xi_1$, $\xi_2$, and $\xi_3$.  We repeat this step until no three groups of $\mathcal{G}'$ can be merged anymore.  This completes phase two.  We require $O(1)$ instances and $O(1)$ time to test a triple of non-constant input coordinates.  So a total of $O(n^3)$ instances and $O(n^3\alpha(n))$ time are needed in phase two.
	
	Clearly, properties~(a) and~(b) in Definition~\ref{df:partition} are preserved throughout.  Suppose for the sake of contradiction that there are at least three groups $G'_i$, $G'_j$ and $G'_l$ in $\mathcal{G}'$, where $i$, $j$ and $l$ are positive, that are contained in the same group $G_k \in \mathcal{G}$ after phase two.  Since the input points in $G'_i$, $G'_j$ and $G'_l$ are not constant input points, we can find three non-constant input coordinates $\xi_1$, $\xi_2$ and $\xi_3$ in $G'_i$, $G'_j$ and $G'_l$, respectively.  But then, the application of Lemma~\ref{cor:learn-DT}(iii) to $\xi_1$, $\xi_2$ and $\xi_3$ in phase two must have told us to merge $G'_i$, $G'_j$ and $G'_l$, a contradiction.  This shows that property~(c) in Definition~\ref{df:partition} is satisfied after phase two.
\end{proof}

\subsection{A canonical set $V$ and its Delaunay triangulation}
\label{sec:V}
	
Recall that $p_1,\ldots,p_n$ denote the $n$ input points in an instance.  We first show that the probability of any non-constant input point $p_i$ lying in certain algebraic curves in $\R^2$ is zero.

\begin{lemma}\label{lemma:circle-zero-probability}
For every $i \in [1,n]$, if $p_i$ is a non-constant input point, then for every circle $\mathcal{O}$ in $\R^2$, it holds that  $\Pr [p_i\in \mathcal{O}]=0$.
\end{lemma}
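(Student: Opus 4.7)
The plan is to reduce the event $p_i \in \mathcal{O}$ to the vanishing of a single explicit bivariate polynomial in $u_k$ whose degree falls within the generic range guaranteed by Definition~\ref{df:property}(b), and then invoke $d_0d_1$-genericity via Definition~\ref{df:generic}.

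First I would let $k$ be the index of the group in $\mathcal{G}$ that contains $i$, write $\mathcal{O} = \{(x,y)\in\R^2 : (x-a)^2 + (y-b)^2 = r^2\}$ for some center $(a,b)$ and radius $r \ge 0$, and define
\[
g(u) \,=\, \bigl(h_{i,k}^x(u) - a\bigr)^2 + \bigl(h_{i,k}^y(u) - b\bigr)^2 - r^2.
\]
Since $h_{i,k}^x$ and $h_{i,k}^y$ are bivariate polynomials in $u$ of degree at most $d_0$, the polynomial $g$ has degree at most $2d_0$. Moreover, the event $p_i \in \mathcal{O}$ is exactly $\{g(u_k)=0\}$.

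The main step is to show that $g$ is not the zero polynomial. The hard part (such as it is) is verifying this, because the generic assumption is useless without it. I would argue by contradiction: suppose $g \equiv 0$ in $\R[u_x,u_y]$, and set $P = h_{i,k}^x - a$ and $Q = h_{i,k}^y - b$, so that $P^2 + Q^2 \equiv r^2$. Since this identity holds at every point of $\R^2$, we obtain $P(u)^2 \le r^2$ for all $u \in \R^2$, i.e., $|P|$ is bounded on $\R^2$. A real bivariate polynomial bounded on all of $\R^2$ must be constant, so $P$ is constant, and by symmetry $Q$ is constant. But then both $h_{i,k}^x$ and $h_{i,k}^y$ are constant functions, contradicting the hypothesis that $p_i$ is a non-constant input point.

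Hence $g$ is a non-zero bivariate polynomial of degree at most $2d_0$. Since $d_1 = 2(d_0^2/2 + d_0)^{16} \ge 2$, we have $2d_0 \le d_0 d_1$. By Definition~\ref{df:property}(b), the distribution of $u_k$ is $d_0 d_1$-generic, so Definition~\ref{df:generic} gives $\Pr[g(u_k) = 0] = 0$, which is exactly $\Pr[p_i \in \mathcal{O}] = 0$, completing the proof.
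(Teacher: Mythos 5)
Your proof is correct, and it takes a genuinely different route at the one non-trivial step. Both you and the paper reduce the lemma to the same two ingredients: (1) the event $p_i\in\mathcal{O}$ is the vanishing of an explicit bivariate polynomial $g$ of degree at most $2d_0$ in $u_k$, and (2) $d_0d_1$-genericity then kills the probability. The entire content is verifying $g\not\equiv 0$. The paper does this combinatorially: it picks a coordinate-wise maximal nonconstant monomial $x^iy^j$ appearing in $h_{i,k}^x$ or $h_{i,k}^y$ and claims the monomial $x^{2i}y^{2j}$ survives in $g$. You instead argue analytically: if $g\equiv 0$ then $P^2+Q^2\equiv r^2$ with $P=h_{i,k}^x-a$, $Q=h_{i,k}^y-b$, forcing $P$ and $Q$ to be bounded real polynomials on $\R^2$ and hence constants, contradicting non-constancy. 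Your argument is arguably cleaner: the paper's monomial claim requires choosing the \emph{right} maximal monomial (e.g.\ the leading monomial in a term order, or the leading homogeneous form) to rule out cancellation among cross terms such as $\alpha_{a,b}\alpha_{c,d}$ with $(a,b)+(c,d)=(2i,2j)$; the coordinate-wise maximality condition as stated does not by itself preclude such cancellations. Your boundedness argument sidesteps this bookkeeping entirely, at the cost of invoking the (standard, but not purely algebraic) fact that a polynomial bounded on all of $\R^2$ is constant. Both reach the same nonzero $g$ of degree at most $2d_0\le d_0d_1$, and the genericity step is identical.
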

\begin{proof}
Without loss of generality, assume that $i$ is in group $G_k$ in $\mathcal{G}$.
Then, $(p_{i,x},p_{i,y})=(h^{x}_{i,k}(u_k),h^{y}_{i,k}(u_k))$.
Assume $u_k=(x,y)$ and recall that $h_{i,k}^x$ and $h_{i,k}^y$ are polynomials of $x,y$ of degree at most $d_0$.
The relation $p_i\in \mathcal{O}$ is described by an equation $(p_{i,x}-a)^2+(p_{i,y}-b)^2 = c^2$ for some $a,b,c\in \R$.
After substitution, this equation becomes $g(x,y)=0$ for some polynomial $g$ of degree at most $2d_0$.
We show that $g$ is a non-zero polynomial in the following.

Because $p_i$ is a non-constant input point, there exists $(i,j)$ such that $i\geq 0,j\geq 0,i+j>0$ and that
   $x^iy^j$ is a monomial of $h^{x}_{i,k}(x,y)$ or $h^{y}_{i,k}(x,y)$ whereas $x^{i'}y^{j'}$
      is neither a monomial of $h^{x}_{i,k}(x_y)$ nor a monomial of $h^{y}_{i,k}(x,y)$ for every $(i',j')$ such that $i'\geq i,j'\geq j',i'+j'>i+j$.
   This means $x^{2i}y^{2j}$ must be a monomial of $g$ and hence $g$ is not the zero polynomial.

Further since the distribution of $u_k=(x,y)$ is $d_0d_1$-generic (and so $2d_0$-generic), Definition~\ref{df:generic} implies that $\Pr[g(x,y)=0]=0$.
In other words, $\Pr [p_i\in \mathcal{O}]=0$.
\end{proof}

Let $\mathcal{G}' = (G_0',G_1',G_2',\ldots)$ be an approximate partition of $\, [1,n]$ produced by applying Lemma~\ref{lem:test-one-two-piers}.  Recall from Definition~\ref{df:partition} that 	$G'_{+} = [1,n] \setminus G'_0 = G'_1\cup G'_2 \cup \ldots$.  That is, for every input instance $I$, the subsequence $I|_{G'_{+}}$ consists of all non-constant input points.

Let $\lambda=\lceil n^2\ln n\rceil$.  Take $\lambda$ instances $I_1,\ldots,I_\lambda$.  Let $S$ be set of the $\lambda n$ points in these instances.  Build a $(1/n)$-net $V'$ of $S$ with respect to disks, that is, for every disk $C$, $|C \cap S| \geq |S|/n \Rightarrow C \cap V' \not= \emptyset$.  It is known that $|V'| = O(n)$~\cite{clarkson07}.  Add to $V'$ three special points that form a huge triangle $\tau$ such that any input point lies inside $\tau$.  Let $V$ be the union of $V'$ and these three special points.  The canonical Delaunay triangulation $\del(V)$ satisfies Lemma~\ref{lemma:construct-V-set} below.  Its proof  is analogous to that of Lemma~\ref{lemma:construct-V-sorting}.

\begin{lemma}
	\label{lemma:construct-V-set}
	It holds with probability at least $1-1/n^{189}$ that for every triangle $t$ in $\del(V)$,
        \[\mathrm{E}_{I\sim \D}\bigl[\bigl|\, I|_{G'_{+}} \cap C_t \, \bigr|\bigr]=O(1),\]
      where $C_t$ denotes the circumscribing disk of $t$.
\end{lemma}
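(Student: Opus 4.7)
The plan is to mirror the proof of Lemma~\ref{lemma:construct-V-sorting}, replacing intervals defined by pairs of sample points with disks defined by triples, and replacing the ``consecutive in the $V$-list'' property with the combination of the Delaunay emptiness property of $\del(V)$ and the $(1/n)$-net guarantee for $V'$. First, I would flatten the sample: let $z_1,\ldots,z_n$ denote the points of $I_1$, let $z_{n+1},\ldots,z_{2n}$ denote those of $I_2$, and so on, finishing with the three special points of $\tau$ as $z_{\lambda n+1}, z_{\lambda n+2}, z_{\lambda n+3}$. For every triple of distinct indices $(i,j,k)$ in $[1,\lambda n+3]$, let $C_{ijk}$ denote the circumscribing disk of $z_i,z_j,z_k$ (collinear triples are discarded as they cannot arise as Delaunay circumdisks), let $m_{ijk}$ count the instances among $I_1,\ldots,I_\lambda$ that are disjoint from $\{z_i,z_j,z_k\}$, with $I_{s_1},\ldots,I_{s_{m_{ijk}}}$ enumerating those instances (so $m_{ijk}\geq \lambda-3$), and define
\[
Y^{(i,j,k)}_a = \bigl|\,I_{s_a}|_{G'_{+}} \cap C_{ijk}\,\bigr|, \qquad Y^{(i,j,k)} = \sum_{a=1}^{m_{ijk}} Y^{(i,j,k)}_a.
\]
Call the triple $(i,j,k)$ a \emph{good triple} if $\E{Y^{(i,j,k)}} \leq 11\lambda$ or $Y^{(i,j,k)} > \lambda$.

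Next, I would apply Hoeffding's inequality exactly as in Proposition~\ref{pro:good-pair}, scaling to $X_a = Y^{(i,j,k)}_a / n \in [0,1]$ (which are independent conditioned on $z_i,z_j,z_k$). The same computation yields that when $\E{Y^{(i,j,k)}} > 11\lambda$, $\Pr[Y^{(i,j,k)} < \lambda] < 2 e^{-200\lambda/n^2} < 2n^{-200}$, so each triple fails to be good with probability at most $2n^{-200}$. The total number of triples is at most $(\lambda n+3)^3 = O(n^9 \log^3 n)$, so the union bound gives probability at least $1-n^{-189}$ that every triple is good. Assume this event for the rest of the argument.

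Now fix an arbitrary triangle $t$ of $\del(V)$; its three vertices are some $z_i,z_j,z_k \in V \subseteq \{z_1,\ldots,z_{\lambda n+3}\}$, so $C_t = C_{ijk}$. The Delaunay emptiness property forces the open interior of $C_t$ to contain no vertex of $V$, and in particular no point of $V'$. By the contrapositive of the $(1/n)$-net condition for $V'$, this forces $|C_t \cap S| < \lambda$, hence $Y^{(i,j,k)} \leq |C_t \cap S| < \lambda$. Goodness of $(i,j,k)$ then implies $\E{Y^{(i,j,k)}} \leq 11\lambda$. Finally, because $I|_{G'_{+}}$ consists only of non-constant input points, the same coordinate-level genericity used in Lemma~\ref{lemma:circle-zero-probability} yields $\Pr[p = z_i]=0$ for every non-constant input point $p$ and every fixed $z_i$; combined with the product structure across groups, conditioning on ``$I$ avoids $\{z_i,z_j,z_k\}$'' does not alter the distribution of $I|_{G'_{+}}$. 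Therefore $\E{Y^{(i,j,k)}_a} = \mathrm{E}_{I\sim \D}\bigl[\bigl|\,I|_{G'_{+}} \cap C_t\,\bigr|\bigr]$ for every $a$, and
\[
\mathrm{E}_{I\sim \D}\bigl[\bigl|\,I|_{G'_{+}} \cap C_t\,\bigr|\bigr] = \frac{\E{Y^{(i,j,k)}}}{m_{ijk}} \leq \frac{11\lambda}{\lambda-3} = O(1).
\]

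The main obstacle I anticipate is the careful bookkeeping around constant input coordinates, which appear with multiplicity $\lambda$ in the multiset $S$ and could in principle disturb both the conditional-independence step for Hoeffding and the concluding identity $\E{Y^{(i,j,k)}_a} = \mathrm{E}_{I\sim \D}[\,\cdot\,]$. Restricting the counts to $I|_{G'_{+}}$ and invoking the zero-measure statements for non-constant coordinates isolate the non-constant part of the input; together they play exactly the role that the identity $\Pr[h_{a,k}(u_k)=c]=0$ plays in Lemma~\ref{lemma:construct-V-sorting}, after which the remainder of the argument is a verbatim analogue of the sorting case.
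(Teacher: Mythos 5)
Your proposal is correct and mirrors the paper's proof step by step: the same flattening of the sample into indexed points, the same Hoeffding-based notion of a good triple, the same $1-n^{-189}$ union bound, and the same combination of the Delaunay empty-circle property with the $(1/n)$-net guarantee, finishing via Lemma~\ref{lemma:circle-zero-probability} exactly as the paper finishes via $\Pr[h_{a,k}(u_k)=c]=0$ in Lemma~\ref{lemma:construct-V-sorting}. The only substantive differences are cosmetic: you restrict $Y^{(i,j,k)}_a$ to $I_{s_a}|_{G'_{+}}$ from the outset, which actually streamlines the concluding identity, while the paper counts full instances and restricts only at the end; and the paper is a bit more careful to work with the \emph{open} circumdisk $D_{ijk}$ so that the $(1/n)$-net bound $|D_{ijk}\cap S|<\lambda$ applies directly to $Y^{(i,j,k)}$ without worrying about boundary points.
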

\begin{proof}
    Let $V = \{q_{-2}, q_{-1}, q_0, q_1, \cdots, q_{\lambda n} \}$, where $\{q_1,\ldots,q_{\lambda n}\}=S$ and $q_0$, $q_{-1}$, and $q_{-2}$ are the special points that define the huge triangle $\tau$.

    Fix a triple $(i,j,k)$ for some distinct indices $i,j,k \in [-2,\lambda n]$.  Let $m_{ijk}$ be the number of instances among $I_1,\ldots,I_\lambda$ that do not contain any of $q_i$, $q_j$, and $q_k$.  Clearly, $m_{ijk} \in [\lambda-3,\lambda]$.
    Denote these instances by $I_{s_1},\cdots,I_{s_{m_{ijk}}}$.  Let $D_{ijk}$ denote the open circumscribing disk of $q_i$, $q_j$ and $q_k$.   As $D_{ijk}$ is an open set, none of $q_i$, $q_j$ and $q_k$ belongs to $D_{ijk}$.

	For $a \in [1,m_{ijk}]$, define the random variable $Y^{(i,j,k)}_a = \bigl|I_{s_a} \cap D_{ijk} \bigr|$.  Define $Y^{(i,j,k)}=Y^{(i,j,k)}_1+\ldots+Y^{(i,j,k)}_{m_{ijk}}$.  Using the same argument in the proof of Lemma~\ref{lemma:construct-V-sorting}, we can show that it holds with probability at least $1-n^{-199}$ that $\E{Y^{(i,j,k)}} \leq 11 \lambda \text{ or } Y^{(i,j,k)}>\lambda$.
    The expectation $\E{Y^{(i,j,k)}}$ is taken over the distribution of $I_{s_1},\ldots, I_{s_{m_{ijk}}}$ induced by the group product distribution $\D$.

	There are at most $(\lambda n+3)^3 < n^{10}$ triples of distinct indices $i, j, k \in [-2,\lambda n]$.  By the union bound, it holds with probability at least $1-n^{-189}$ that for every distinct triple $i,j,k \in [-2,\lambda n]$, $\E{Y^{(i,j,k)}} \leq 11 \lambda$ or $Y^{(i,j,k)}>\lambda$.  Assume for the rest of this proof that this event happens.
	
    Take an arbitrary triangle $t$ in $\del(V)$.  Let $q_a$, $q_b$, and $q_c$ denote the vertices of $t$.  As $t$ is a Delaunay triangle,  $V\cap D_{abc} = \emptyset$.  As $V$ is a $(1/n)$-net of $S$ for disks, $V \cap D_{abc} = \emptyset$ implies that $|S\cap D_{abc}|<\frac{1}{n}|S|=\lambda$.
    Therefore, fewer than $\lambda$ points from $I_{s_1},\cdots, I_{s_{m_{ijk}}}$ fall into $D_{abc}$, i.e., $Y^{(a,b,c)}<\lambda$.  Since we assume the event that for every distinct triple $i,j,k \in [-2,\lambda n]$, $\E{Y^{(i,j,k)}} \leq 11 \lambda$ or $Y^{(i,j,k)}>\lambda$, it must be the case that $\E{Y^{(a,b,c)}} \leq 11\lambda$.

    By Lemma~\ref{lemma:circle-zero-probability}, the probability of a non-constant input point lying on the circumcircle of $t$ is zero.  It follows that $\mathrm{E}_{I\sim \D} \bigl[\bigl|\,I|_{G'_{+}} \cap C_t\,\bigr|\bigr]=\E{Y^{(a,b,c)}}/m_{abc} \leq \E{Y^{(a,b,c)}}/(\lambda - 3) =O(1)$.
\end{proof}

\subsection{Functions for point location and retrieval of split trees}

We will need to define two functions that are analogous to $b$ and $\pi$ for sorting.  Let $|\del(V)|$ denote the number of triangles in $\del(V)$.  Let the triangles in $\del(V)$ be denoted by $t_1,\ldots,t_{|\del(V)|}$.   For every input instance $I = (p_1,\ldots,p_n)$, the first function $B$ returns the triangles in $\del(V)$ that contains the $p_i$'s.

\begin{quote}
$B : \bigcup_{m=1}^n \R^{2m} \rightarrow  \bigcup_{m=1}^n \brac{1,|\del(V)|}^m$ such that for every $m$-tuple of points $(q_1,\cdots,q_m)$, $B(q_1,\cdots,q_{m}) = (r_1,\cdots,r_m)$, where $t_{r_i}$ is the triangle in $\del(V)$ that contains $q_i$ for $i\in [1,m]$.
\end{quote}

The second function to be defined is to output a particular kind of \emph{fair split tree}~\cite{CK95} for every input instance.  The fair split tree will help us to produce the Delaunay triangulation of the input instance efficiently.  The fair split tree variant that we need is defined as follows.

Let $P$ be any set of points in $\R^2$.
Let $R$ be the smallest axis-aligned bounding rectangle of $P$.  Let $\hat{R}$ be the smallest square that contains $R$ and has the same center as $R$.  We initialize the split tree to be a single node $v$ with $R(v) = R$ and $\hat{R}(v) = \hat{R}$.   The point set at $v$ is $P$.  In general, we grow the split tree by splitting a node $w$ whose point set consists of more than one point.  Inductively, $R(w)$ is the smallest axis-aligned bounding rectangle of the point set at $w$, and $\hat{R}(w)$ is an axis-aligned outer rectangle that contains $R(w)$.  To split $w$, take the bisecting line of $R(w)$ that is perpendicular to a longest side of $R(w)$.  This line splits the point set at $w$ into two non-empty subsets, and it also splits $\hat{R}(w)$ into the outer rectangles of the children of $w$.  The expansion of the split tree bottoms out at nodes that represent only one point in $P$.   This gives a split tree of $P$ that is a full binary tree with $|P|$ leaves.  As we bisect $R(w)$ for each node $w$ to generate its two children, we call this particular kind of fair split tree the \emph{halving split tree} of $P$.

For every rectangle $r$, let $\ell_{\min} (r)$ and $\ell_{\max} (r)$ be the minimum and maximum side lengths of $r$, respectively.  The  following property is satisfied by a fair split tree~\cite[equation~(1) in Lemma~4.1]{CK95}:
\begin{equation}
	\text{For each node $u$ of the split tree}, \,\, \ell_{\min}(\hat{R}(u)) \geq \frac{1}{3}\ell_{\max}(R(\text{parent}(u))).  \label{eq:1}
\end{equation}
There are fair split trees that satisfy \eqref{eq:1} but are not halving split trees.

We use $\ST(P)$ to denote a fair split tree of $P$, not necessarily the halving split tree.  Using \eqref{eq:1}, it can be shown that a fair split tree can be used to produce a well-separated pair decomposition of $O(|P|)$ size~\cite{CK95}, which can then be used to produce a Delaunay triangulation in $O(|P|)$ expected time as we explain in Section~\ref{sec:split-DT} later.

We are ready to define the second function $\Pi$.  Let $\text{HST}_m$ denote the set of halving split trees of all possible point sets in $\R^2$ with size $m$.
\footnote{Two halving split trees $T_1,T_2$ are regarded equivalent
   if (1) the underlying trees are the same and (2) the set of points to be split on each node $v$ are the same between $T_1$ and $T_2$
      and (3) the direction of the splitting line at each internal node $v$ in is the same between $T_1$ and $T_2$.
   Therefore, when the splitting lines are at different positions but the other structures of the halving split trees $T_1,T_2$ are the same,
     $T_1,T_2$ are regarded equivalent. Only one representative in each equivalent class is included in $\text{HST}_m$.
     Unless otherwise stated, we assume that the halving split trees do not include the information about the positions of the splitting lines.}

\begin{quote}
$\Pi : \bigcup_{m=1}^n \R^{2m} \rightarrow  \bigcup_{m=1}^n [0,n]^{m} \times [0,n]^m \times \text{HST}_m$ such that for every $m$-tuple of points $(q_1,\cdots,q_m)$,  the output of $\Pi(q_1,\cdots,q_m)$ is an ordered triple that consists of the following data:
\begin{itemize}
	\item  $\pi(q_{1,x}, \cdots, q_{m,x})$,
	\item $\pi(q_{1,y}, \cdots, q_{m,y})$, and
	\item the halving split tree of $\{q_1,\cdots,q_m\}$.
\end{itemize}
\end{quote}
The encodings $\pi(q_{1,x}, \cdots, q_{m,x})$ and $\pi(q_{1,y}, \cdots, q_{m,y})$ induce sorted orders $q_1,\ldots,q_m$ by $x$- and $y$-coordinates, respectively.  We call the corresponding permutations of $[1,m]$ the \emph{$x$-order} and \emph{$y$-order} of $q_1,\ldots,q_m$, respectively.

For every subset $S \subseteq \mathbb{R}^{2m}$, define $B(S) = \{B(x) : x \in S\}$ and $\Pi(S) = \{\Pi(x) : x \in S\}$.  We first bound the cardinalities of $B(S)$ and $\Pi(S)$ when $S = \{ I|_{G'_j}  :  I \sim \D\}$.

\begin{lemma}\label{lem:DT-size}
Assume $J$ is subset of some group $G_k$ in $\mathcal{G}$.
Let $S = \bigl\{I|_J : I \sim \D \bigr\}$.  Let $m = |J|$.  Then,
	\[|B(S)| = O(n^2m^2)\text{ and }|\Pi(S)| = O(m^8).\]
\end{lemma}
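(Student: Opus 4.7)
The plan is to realise each of $B(I|_J)$ and $\Pi(I|_J)$ as a function of $u_k$ that is constant on the full-dimensional cells of a planar arrangement of algebraic curves, and then count those cells. Because $J\subseteq G_k$, for every $i\in J$ the coordinates $p_{i,x}=h_{i,k}^x(u_k)$ and $p_{i,y}=h_{i,k}^y(u_k)$ are bivariate polynomials in $u_k$ of degree at most $d_0$. Consequently, every ``critical'' equality that could change the output (a point crossing a line of $\del(V)$, two coordinates becoming equal, a point lying on a bisecting line, etc.) is encoded as a polynomial equation in $u_k$ of constant degree. By the standard arrangement bound, $N$ such curves partition $\R^2$ into $O(N^2)$ full-dimensional cells, and on each cell the output we care about is fixed. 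Both claims then reduce to counting curves.

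For $|B(S)|$, the triangle of $\del(V)$ that contains $p_i(u_k)$ changes only when $p_i(u_k)$ crosses the line supporting some edge of $\del(V)$. Since $|V|=O(n)$, the triangulation has $O(n)$ edges; for each $i\in J$ and each such edge $e$, collinearity of $p_i(u_k)$ with the two endpoints of $e$ is a polynomial equation of degree $d_0$ in $u_k$. This gives $O(nm)$ curves, hence $O(n^2m^2)$ cells, and so $|B(S)|=O(n^2m^2)$.

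For $|\Pi(S)|$, I would enumerate curves driving each of the three components of the output. The $x$- and $y$-orders contribute the $O(m^2)$ pairwise-equality curves $h_{i,k}^x=h_{j,k}^x$ and $h_{i,k}^y=h_{j,k}^y$. For the halving split tree (under the equivalence of the footnote), at every internal node $v$ the structure is determined by two kinds of binary choices: (a)~whether the longest side of $R(P_v)$ is horizontal or vertical, that is $(x_{\max}-x_{\min})-(y_{\max}-y_{\min})=0$, which involves at most four points of $J$; and (b)~for every $p\in P_v$, its side of the bisecting line, for instance $2p_y-p_{j,y}-p_{l,y}=0$ with $p_j,p_l$ the $y$-extremes of $P_v$, which involves three points of $J$. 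Ranging over all ordered triples and quadruples of indices in $J$ yields $O(m^4)$ candidate polynomial conditions that cover every comparison that can arise at any node of any halving split tree produced from $J$. Together with the $O(m^2)$ order curves, this is $O(m^4)$ curves overall, giving $O(m^8)$ cells and $|\Pi(S)|=O(m^8)$.

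The delicate point, and the main obstacle, is verifying that this $O(m^4)$ family of curves really does pin down the halving split tree on a single cell, despite the recursive nature of the construction. On a cell the $x$- and $y$-orders of the whole set $J$ are fixed, so for every possible subset $P_v$ the identities of its $x$- and $y$-extremal points are fixed; the aspect-ratio curve for that specific $4$-tuple has a fixed sign, fixing the bisection direction; and each point's side-of-bisector condition has a fixed sign, fixing the partition of $P_v$ into its two children. A straightforward induction on the depth of the split tree then carries this invariance from the root down to every node, establishing that $\Pi(I|_J)$ is constant on each cell and completing the count.
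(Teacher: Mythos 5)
Your proposal is correct and takes essentially the same route as the paper: both arguments substitute $p_i=(h_{i,k}^x(u_k),h_{i,k}^y(u_k))$, build a planar arrangement in the $u_k$-plane from curves of constant degree (support-line equations for $B$; pairwise-equality, midpoint, and aspect-ratio equations for $\Pi$), and bound the number of outputs by the number of cells, with an induction down the split tree to show $\Pi$ is constant on each cell. The only difference is notational (you index the aspect-ratio and bisector equations by the tuple of extremal points, while the paper writes the same families by ranging over all $i,j,r,s$), which does not change the $O(nm)$ and $O(m^4)$ curve counts or the resulting bounds.
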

\begin{proof}
  Without loss of generality, assume that $J=[1,m]$.

  We first deal with $B(S)$.  Let $\mathcal{L}$ be the set of support lines of all edges in $\del(V)$.  Let the equations of these lines be
  $\alpha_s x+\beta_s y+\gamma_s=0$ for $s \in  [1,|{\mathcal{L}}|]$.  Consider the algebraic curves in $\R^2$ defined by the following equations:
  \[
  \forall\, i \in [1,m], \, \forall\, s \in [1,|{\mathcal{L}}|], \quad \alpha_s h^x_{i,k}(u_k) + \beta_s h^y_{i,k}(u_k) + \gamma_s=0.
  \]

  Let $\mathcal{A}_B$ be the arrangement of these $O(nm)$ curves.  Since these curves have degrees no more than $d_0$, the complexity of $\mathcal{A}_B$ is $O(n^2m^2)$.  Within any cell of $\mathcal{A}_B$, the sign of each function $ \alpha_s h^x_{i,k}(u_k) + \beta_s h^y_{i,k}(u_k) + \gamma_s$ is invariant.  These signs determine $B(p_1,\ldots,p_m)$.  Hence, $|B(S)| = O(n^2m^2)$.

  Next, we deal with $\Pi(S)$.  Consider the following equations for some possibly non-distinct indices $i,j,r,s\in [1,m]$:
\begin{center}
\begin{minipage}{0.3\textwidth}
	   \begin{align*}
   		h^x_{i,k} &=h^x_{j,k}, \\
        h^y_{i,k}+h^y_{j,k} &= 2 h^y_{r,k}, \\
        h^x_{i,k} - h^x_{j,k} &= h^y_{r,k} - h^y_{s,k}.
   \end{align*}
\end{minipage}
\begin{minipage}{0.3\textwidth}
	\begin{align*}
	h^y_{i,k} &= h^y_{j,k}, \\
	h^x_{i,k}+h^x_{j,k} &= 2h^x_{r,k}, \\
    \end{align*}
\end{minipage}
\end{center}
Each of these equations defines an algebraic curve in $\R^2$ of degree at most $d_0$, so the arrangement $\mathcal{A}_\Pi$ of these $O(m^4)$ curves has complexity $O(m^8)$.  We will prove that $\Pi(S)=\Pi(p_1,\ldots,p_m)$ is invariant within a cell of ${\mathcal{A}}_\Pi$.  Therefore, the number of distinct outputs of $\Pi(S)$ is no more than the number of  cells of $\mathcal{A}_\Pi$.  It follows that $|\Pi(S)| = O(m^8)$.

Let $C$ be an arbitrary cell of $\mathcal{A}_\Pi$.

First, the $x$-order of $p_1,\ldots, p_m$ is invariant within $C$.  Otherwise, when $u_k$ moves within $C$ and triggers a change in the $x$-order of $p_1,\ldots,p_m$, there must be a setting of $u_k$ such that  $h^x_{i,k}(u_k)=h^x_{i',k}(u_k)$ for some $i,i' \in [1,m]$.  But then the interior of $C$ intersects one of the curves, a contradiction.  Similarly, the $y$-order of $p_1,\ldots, p_m$ is also invariant within $C$.

Let $p_{i_1}$ and $p_{i_2}$ denote the points with the smallest and largest $x$-coordinates within $C$, respectively, among $p_1,\ldots,p_m$.  Let  $p_{i_3}$ and $p_{i_4}$ denote the points with the smallest and largest $y$-coordinates within $C$, respectively, among $p_1,\ldots,p_m$.

Let $v$ denote the root of the halving split tree of $\{p_1,\ldots,p_m\}$.  Since the equation $(h^x_{i_2,k} - h^x_{i_1,k}) = (h^y_{i_4,k} - h^y_{i_3,k})$ is included in defining $\mathcal{A}_\Pi$,  the sign of $(h^x_{i_2,k} - h^x_{i_1,k}) - (h^y_{i_4,k} - h^y_{i_3,k})$ is invariant within $C$.   This sign determines whether the node $v$ should be split vertically or horizontally.

Suppose that $v$ is split vertically.  The curves $h^x_{i_1,k}+h^x_{i_2,k} =2h^x_{i,k}$ for $i\in [1,m]\setminus \{i_1,i_2\}$ are included in defining $\mathcal{A}_\Pi$.  Therefore, for every $i\in [1,m]\setminus \{i_1,i_2\}$, the sign of $(h^x_{i_1,k}+h^x_{i_2,k})/2 -h^x_{i,k}$ is invariant within $C$.  For every $i\in [1,m]\setminus \{i_1,i_2\}$, this sign determines whether $p_i$ lies to the left or right of the splitting line of $R(v)$.   As a result, the indices of the points at the two children of $v$ are completely determined by the choice of $C$.  Similarly, if $v$ is split horizontally, we can conclude that for every $i\in [1,m]\setminus \{i_3,i_4\}$, either $p_i$ stays above  the splitting line of $R(v)$ for all $u_k \in C$, or $p_i$ stays below it for all $u_k \in C$, which implies that the indices of the points at the two children of $v$ are also completely determined by the choice of $C$.

Repeating the argument above as the halving split tree grows leads to the conclusion that $\Pi(S)$ is invariant within $C$.  This establishes the $O(m^8)$ bound on $|\Pi(S)|$.
\end{proof}

Next, we generalize Theorem~\ref{thm:algorithm-limit-output} to work for $B$ and $\Pi$.   It is the main tool for achieving our self-improving DT algorithm.

\begin{theorem}\label{thm:algorithm-limit-output-2}
	Let $J$ be a subset of $\, [1,n]$.    Let $f$ be $B$ or $\Pi$.
    Let $N = \left\lceil t_0 \ln t_0 \ln n\right\rceil$, where $t_0=\mathrm{poly}(n)$ is
      a given value that bounds the size of $\,\bigl|\{ f(I|_J) : I \sim \D \}\bigr|$.
    We can build a data structure such that, given $I|_J$ for an input instance $I$, the data structure returns $f(I|_J)$ and it has the following performance guarantees.
	\begin{itemize}
		\item It can be constructed in $\tilde{O}(N|J|^2)$ time using $N$ independent $I_1|_J, \ldots, I_N|_J$.
		\item It uses $O(t_0|J|^2)$ space.
		\item It holds with probability at least $1 - t_0^{-2}$ that the expected query time is $O\bigl( H_J(f) + |J|\bigr)$, where the expectation is taken over $I \sim \D$.
	\end{itemize}
\end{theorem}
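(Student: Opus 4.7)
The plan is to adapt the trie construction of Theorem~\ref{thm:algorithm-limit-output} from one dimension to two, with more elaborate per-node structures that handle planar point location for $f = B$ and that simultaneously maintain the $x$-order, $y$-order, and halving split tree for $f = \Pi$. I will build a trie $T$ whose leaves store the distinct outcomes $\beta_1,\ldots,\beta_M$ seen during training, with the edge-label string from the root to the leaf of $\beta_j$ equal to $\beta_j$ itself; weights are defined as $\tilde{\rho}_j$ at leaves and as the sum of descendant leaf weights internally. A query descends $T$ by using the next point of $I|_J$ to pick the appropriate child; if no such child exists the query aborts and $f(I|_J)$ is recomputed from scratch in $\tilde{O}(|J|)$ time, namely by direct point location on $\del(V)$ for $f = B$, and by building the halving split tree together with two $\pi$-encodings for $f = \Pi$.

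For $f = B$, each child $w$ of an internal node $u$ carries a triangle $t_r \in \del(V)$, so I equip $u$ with a weighted planar point location structure over these children, constructed in the spirit of Mehlhorn's nearly optimal BST so that locating the child containing the query point costs $O(\log(\mathrm{weight}(u)/\mathrm{weight}(w)))$, while any miss costs $O(\log N)$. For $f = \Pi$, the label on edge $uw$ encodes three pieces: the insertion position of $p_i$ into the $x$-order, into the $y$-order, and its placement within the halving split tree. I therefore install at each node $u$ two copies of the Case~2 auxiliary structure of Theorem~\ref{thm:algorithm-limit-output} (one per coordinate) together with the halving split tree of the $i-1$ points whose prefix led to $u$. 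In the training phase, these trees are built incrementally using persistent search trees~\cite{driscoll89}, so that inheriting and updating the structure at each new trie node costs $O(\log n)$ amortized per inserted point; in the operation phase, locating $p_i$ in the two $\pi$ BSTs and in the halving split tree selects the matching child $w$ in total time $O(\log(\mathrm{weight}(u)/\mathrm{weight}(w)))$.

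The complexity analysis then mirrors that of Theorem~\ref{thm:algorithm-limit-output}. A successful search telescopes along the root-to-leaf path to $\beta_i$ and costs $O(|J|+\log(1/\tilde{\rho}_i))$; an unsuccessful search costs $O(|J|\log n)$ before the from-scratch fallback. Applying Lemma~\ref{lem:trie-runningtime} with the polynomial size bound from Lemma~\ref{lem:DT-size} yields the claimed expected query time $O(H_J(f)+|J|)$ with probability at least $1-t_0^{-2}$. The extra $|J|$ factor in the $\tilde{O}(N|J|^2)$ preprocessing and $O(t_0|J|^2)$ space bounds (compared with Theorem~\ref{thm:algorithm-limit-output}) comes from per-trie-node structures whose size grows linearly with the depth, in particular the halving split tree at depth $i$ which describes $i$ points. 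The main obstacle I anticipate is showing that the halving split tree can be maintained persistently in amortized $O(\log n)$ time per inserted point: inserting $p_i$ can resplit several nodes along a root-to-leaf path of the current tree, so I will need a structural argument that sibling children of $u$ in $T$ share enough of their split trees for the persistent copies to fit into the claimed aggregate space.
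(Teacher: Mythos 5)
Your Case~1 ($f=B$) matches the paper in spirit (the paper uses a distribution-sensitive planar point location structure at each trie node, not a Mehlhorn-style BST, since the children represent triangles rather than ordered intervals, but the role is the same). The gap is in your Case~2 ($f=\Pi$), and it is exactly the obstacle you flag at the end: there is no coherent notion of ``inserting $p_i$ into the halving split tree of $\{p_1,\ldots,p_{i-1}\}$.'' The halving split tree is defined globally by the full point set --- adding a single point can move the bounding box and flip the split orientation at the root, cascading to a wholesale change of $\Omega(i)$ nodes. So your proposed trie, one level per input point with an edge label encoding a simultaneous $x$-order insertion, $y$-order insertion, and split-tree placement, is not well-defined, and no persistence or sharing argument will yield amortized $O(\log n)$ per point for maintaining a prefix halving split tree. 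Sibling trie nodes can correspond to split trees of the prefix that are structurally unrelated.

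The paper sidesteps this by decomposing the trie $T$ into three \emph{sequential} stages rather than three parallel labels per level: the first $m$ levels ($T_1$) read off $\pi$ of the $x$-coordinates alone; the next $m$ levels ($T_{2,u}$ hanging from each leaf $u$ of $T_1$) read off $\pi$ of the $y$-coordinates; and only then, at a leaf $v$ where both the $x$-order and $y$-order of all $m$ points are fully known, a further sub-trie $T_{3,v}$ is grown whose nodes correspond \emph{not} to input points but to the nodes of the halving split tree in preorder. Each $T_{3,v}$-node $u$ stores the current point subset $P_u$ and two BSTs over its $x$- and $y$-orders; a child of $u$ corresponds to a cut choice (vertical at gap $i$ in the $x$-order, or horizontal at gap $i$ in the $y$-order), of which there are at most $2|P_u|-2$. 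During training the split tree is built from scratch in $O(m\log m)$ per instance and its preorder is matched against $T_{3,v}$, which is what yields the $\tilde{O}(N|J|^2)$ preprocessing and $O(t_0|J|^2)$ space (the per-node $x$/$y$-order tables have size $O(|J|)$). Your telescoping search-time argument and the appeal to Lemma~\ref{lem:trie-runningtime} carry over once this trie is in place; the missing idea is this reordering of the stages so that the split-tree portion of the trie is indexed by split-tree nodes, not by input points, eliminating the need for any incremental split-tree maintenance.
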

\begin{proof}
Without loss of generality, assume that $J = [1,m]$ and hence $I|_J=(p_1,\cdots,p_m)$.  Let $I_1, \ldots, I_N$ denote the independent input instances that we sample for the construction of the data structure.  Define the following quantities:
\begin{itemize}
	\item $\beta_1, \dots, \beta_M$ denote the distinct outcomes in $\bigl\{ f(I_s|_J : s \in [1,N]\bigr\}$, and
	\item $\tilde{\rho}_i = \frac{1}{N} \cdot$ number of occurrences of $\beta_i$ among $f(I_1|J), \ldots, f(I_N|_J)$ for $i \in [1,M]$.
\end{itemize}

\vspace{4pt}

\emph{Case~1. $f=B$}: We build a trie $T$ like in the case of $f = b$ in the proof of Theorem~\ref{thm:algorithm-limit-output}.  If we are at $p_{i-1}$ and a node $u$ of $T$, we want to obtain an auxiliary structure at $u$ after the training phase that gives us the child $w$ of $u$ such that $uw$ has the label $r \in [1,|\del(V)|]$ for which triangle $t_r$ in $\del(V)$ contains $p_i$.

In the training phase, we keep one worst-case optimal planar point location structure for $\del(V)$~(e.g.~\cite{paper:Kirkpatrick1981}).  Whenever we are at $p_{i-1}$ and a node $u$ of the trie $T$ being grown, we query the point location structure with $p_i$ to identify the triangle $t_r \in \del(V)$ that contains $p_i$.  If $u$ has no edge to a child with $r$ as the label, we create a new child $w$ of $u$ and assign $r$ to be the label of $uw$.  Therefore, we spend $O(\log n)$ at each node in processing an input instance in the training phase.

At the end of the training phase, we take the subset of triangles represented by the children of $u$, triangulate the exterior of these triangles, and then construct a distribution-sensitive planar point location structure for the resulting triangulation~\cite{iacono04}.  The weights of nodes in $T$ are defined in the same way as in the proof of Theorem~\ref{thm:algorithm-limit-output}.  For an edge $uw$ with label $r$, the weight of $t_r$ in this distribution-sensitive point location structure is equal to the weight of $w$ in $T$.  The weights of other triangles in the triangulation are set to zero.  This construction time is $O(\log n)$ times the number of children of $u$.

In summary, the total time spent in the training phase is $O(N|J|\log n)$.  The space needed by the trie $T$ and the auxiliary structures is $O(t_0|J|)$.

In the operation phase, when we use an input point $p_i$ to search the distribution-sensitive point location structure at $u$, if $p_i$ lies in a triangle represented by a child $w$ of $u$, the search takes $O(\log (\text{weight}(u)/\text{weight}(w)))$; otherwise, an unsuccessful search takes $O(\log N)$ time~\cite{iacono04}.  An unsuccessful search triggers the computation of $b(I|_J)$ from scratch in $O(|J|\log n)$ time.
	
\vspace{4pt}

\emph{Case 2. $f = \Pi$}:  We first explain the structure of the trie $T$ to be built.  Then, we discuss how to build it in the training phase.  Afterwards, we explain how to search in $T$ in the operation phase.

{\sc Structure}: The top $m$ levels of the trie is a smaller trie $T_1$ for determining $\pi(p_{1,x},\cdots,p_{m,x})$.  So $T_1$ is a copy of the trie for $\pi$ in the proof of Theorem~\ref{thm:algorithm-limit-output}.  We expand each leaf $u$ of $T_1$ into a trie $T_{2,u}$ for determining $\pi(p_{1,y},\cdots,p_{m,y})$.  Let $T_2$ be the composition of $T_1$ and all $T_{2,u}$'s.  Given an instance $I\sim \D$, we know the $x$-order and $y$-order of $(p_1,\ldots,p_m)$ at a leaf of $T_2$.
	
Each leaf $v$ of $T_2$ is expanded to a trie $T_{3,v}$ with the following properties.  Define the point set $P_v$ at $v$ to be $\{p_1,\ldots,p_m\}$.  A leaf $w$ of $T_{3,v}$ corresponds to the halving split tree $S$ of $\{p_1,\ldots,p_m\}$, depending on the relative positions of these points in an input instance.    The sequence of nodes on the path $\varrho$ in $T_{3,v}$ from $v$ to the leaf $w$ corresponds to the preorder traversal of $S$.   For every node $u$ in $\varrho$, we use $\eta_u$ to denote the node in $S$ that corresponds to $u$.  The node $u$ is associated with a subset $P_u$ of $\{p_1,\ldots,p_m\}$ so that $P_u$ is the point set at $\eta_u$ in $S$.

Take an arbitrary node $u$ in $T_{3,v}$.  For $i \in [1,|P_u|-1]$, $u$ may have a child corresponding to a vertical cut between the $i$-th and $(i+1)$-th points in the $x$-order of $P_u$.  Similarly, for $i \in [1,|P_u|-1]$, $u$ may have a child corresponding to a horizontal cut between the $i$-th and $(i+1)$-th points in the $y$-order of $P_u$.   So $u$ has at most $2|P_u|-2$ children.  The existence of a particular child $w$ of $u$ depends on whether some input instance in the training phase prompts the creation of $w$.

The final trie is the composition of $T_2$ and all $T_{3,v}$'s.   We use $T_3$ to denote this final trie.

A leaf of $T_3$ that corresponds to an outcome $\beta_i$ for some $i \in [1,M]$ is given a weight of $\tilde{\rho}_i$.  Every internal node $u$ of $T_3$ is given a weight $\text{weight}(u)$ equal to the sum of the weights of leaves that descend from $u$.

Each internal node $u$ in $T_1$ keeps a map $\xi_u : [0,i-1] \rightarrow [0,i-1]$, assuming that the path from the root of $T_1$ to $u$ has $i-1$ edges on it, such that $\xi_u(0) = 0$ and $\xi_u(1),\ldots,\xi_u(i-1)$ is the $x$-order of $p_1,\ldots,p_{i-1}$.  The node $u$ also has a nearly optimal binary search tree $A_u$ that stores $\xi_u(0),\ldots,\xi_u(i-1)$ in this order such that each element of $A_u$ stores either a null pointer or a pointer to a distinct child $w$ of $u$.  In the former case, that element of $A_u$ has weight zero, and in the latter case, that element of $A_u$ has weight $\text{weight}(w)$.

Similarly, each internal node $w$ in $T_{2,u}$ for some leaf $u$ of $T_1$ keeps a map $\xi_w : [0,i-1] \rightarrow [0,i-1]$, assuming that the path from $u$ to $w$ in $T_{2,u}$ has $i-1$ edges in it, such that $\xi_u(0) = 0$ and $\xi_u(1),\ldots,\xi_u(i-1)$ is the $y$-order of $p_1,\ldots,p_{i-1}$.  The node $u$ also has a nearly optimal binary search tree $A_u$ that stores $\xi_u(0),\ldots,\xi_u(i-1)$ in this order as described in the previous paragraph.

Hence, both the $x$-order and $y$-order of $p_1,\ldots,p_{m}$ are fixed at the root of $T_{3,v}$ for each leaf $v$ of $T_2$.  As mentioned earlier, each internal node $u$ in $T_{3,v}$ correspond to a node $\eta_u$ in some halving split tree $S$.  And $u$ keeps a subset $P_u$ of $\{p_1,\ldots,p_m\}$ such that $P_u$ is the point set at $\eta_u$ in $S$.  We store two maps $\xi_{u,x}$ and $\xi_{u,y}$ such that $\xi_{u,x}(1), \xi_{u,x}(2),\ldots$ and $\xi_{u,y}(1), \xi_{u,y}(2),\ldots$ are the $x$-order and $y$-order of the point set $P_u$ at $u$, respectively.  We store the $x$-order $\xi_{u,x}$ in a nearly optimal binary search tree $A_{u,x}$ and the $y$-order $\xi_{u,y}$ in another nearly optimal binary search tree $A_{u,y}$ as mentioned in the previous two paragraphs.  Again, each element in $A_{u,x}$ and $A_{u,y}$ stores either a null pointer or a pointer to a distinct child of $u$.  For each child of $u$, there is exactly one element in $A_{u,x}$ and $A_{u,y}$ that points to it; however, the pointers at some elements of $A_{u,x}$ and $A_{u,y}$ may be null.

Because of the auxiliary structures at each node of the $T_{3,v}$'s, we use $O(|J|)$ space per node.  There are $O(t_0|J|)$ nodes in $T_3$.  The overall space complexity is thus $O(t_0|J|^2)$.

{\sc Training phase}:  The construction of $T_1$ and $T_2$ and the auxiliary structures at their internal nodes are done in the same way as in the proof of Theorem~\ref{thm:algorithm-limit-output}.   This takes $O(\log n)$ amortized time and $O(1)$ amortized space for each input point in each instance $I_s$, $s \in [1,N]$.

Suppose that $I_s|_J$ for some $s \in [1,N]$ leads us to a leaf $v$ of $T_2$.  We navigate and possibly grow the sub-trie $T_{3,v}$ as follows.  First, we construct the halving split tree $S$ for $I_s|_J$ in $O(m\log m)$ time.  Then, we obtain a preorder traversal of $S$ in $O(m)$ time.   We navigate down $T_{3,v}$ and scan the preorder traversal of $S$ in a synchronized manner.   Recall that both the $x$-order and $y$-order of $P_v$ are already determined.

In general, suppose that we reach a node $u$ of $T_{3,v}$.  We keep a balanced binary search tree $L_{u,x}$ that stores the $x$-order $\xi_{u,x}(1), \xi_{u,x}(2), \cdots$.  If we encounter a vertical cut in the training phase that splits $P_u$ between $\xi_{u,x}(i)$ and $\xi_{u,x}(i+1)$, then the node for $\xi_u(i)$ in $L_{u,x}$ stores the pointer to a child $w$ of $u$. Otherwise, the node for $\xi_{u,x}(i)$ in $L_{u,x}$ stores a null pointer.  The tree $L_{u,y}$ is similarly organized for the $y$-order of $P_u$.  Recall that $\eta_u$ denotes the node in the split tree $S$ that corresponds to $u$.  Suppose that the cut at $\eta_u$ is vertical with $x$-coordinate $X$.  We search $L_{u,x}$ to locate $X$ between $\xi_{u,x}(i)$ and $\xi_{u,x}(i+1)$ for some $i$.  If the node for $\xi_{u,x}(i)$ in $L_{u,x}$ stores a pointer to a child $w$ of $u$, we go to $w$ and the next node in the preorder traversal in $S$.   Otherwise, we create a new child $w$ of $v$ and store a pointer to $w$ at  the node for $\xi_{u,x}(i)$ in $L_{u,x}$.  Let $\gamma$ be node that follows $\eta_u$ in the preorder traversal of $S$.   We set $\eta_w = \gamma$ and $P_w$ to be the point set at $\gamma$.  We also initialize $L_{w,x}$ and $L_{w,y}$ to store the $x$-order and $y$-order of $P_w$, respectively.  Initially, every node in $L_{w,x}$ and $L_{w,y}$ stores a null pointer because $w$ has no child yet.


We can similarly handle the case that the cut at $\eta$ is horizontal.

At the end of the training phase, we construct nearly optimal binary search trees $A_{u,x}$ and $A_{u,y}$ to store the contents of $L_{u,x}$ and $L_{u,y}$, respectively.  For each node in $L_{u,x}$ that points to a child $w$ of $u$ in $L_{u,x}$, its weight in $A_{u,x}$ is set to be $\text{weight}(w)$.   For each node in $L_{u,x}$ that has a null pointer, its weight in $A_{u,x}$ is set to be zero.

In summary, for each input point in each $I_s|_J$, $s \in [1,N]$, we spend $O(\log n)$ amortized time in the construction of $T_2$ and $O(|J|\log |J|) = O(|J|\log n)$ time in the construction of some $T_{3,v}$.  The overall time complexity of the training phase is thus $O(N|J|^2\log n)$.

{\sc Operation phase}:  The search of $A_u$ at an internal node $u$ of $T_1$ or $T_2$ is conducted as in the proof of Theorem~\ref{thm:algorithm-limit-output}.  So accessing a child $w$ of $u$ takes $O(\log (\text{weight}(u)/\text{weight}(w)))$ time, and an unsuccessful search in $A_{u}$ takes $O(\log N)$ time.

Consider the search at an internal node $u$ of $T_{3,v}$ for some leaf $v$ of $T_2$ in the operation phase.  Let $\kappa = |P_u|$.  By comparing $p_{\xi_{u,x}(\kappa),x} - p_{\xi_{u,x}(1),x}$ and $p_{\xi_{u,y}(\kappa),y} - p_{\xi_{u,y}(1),y}$ in $O(1)$ time, we can determine whether $P_u$ should be split vertically or horizontally for the given input instance, as well as the $x$- or $y$-coordinate of the splitting line.  If the splitting line of $P_u$ is vertical and its $x$-coordinate is $X$, we search in $A_{u,x}$ to divide the $x$-order of $P_u$ at $X$.   We abort the search in $A_{u,x}$ if we come to a node such that every leaf descending from it contains a null pointer.  By doing so, if the search leads to a child $w$ of $u$, the search time is $O(\log (\text{weight}(u)/\text{weight}(w)))$; otherwise, the search is unsuccessful and it takes $O(\log N)$ time.  Symmetrically, if the splitting line of $P_u$ is horizontal and its $y$-coordinate is $Y$, we search in $A_{u,y}$ to divide the $y$-order of $P_u$ at $Y$.

\vspace{4pt}

\emph{Analysis of Guarantees}: Since we preserve the time to descend from a node of the trie to its appropriate child as in the proof of Theorem~\ref{thm:algorithm-limit-output}, we can invoke the same analysis to conclude that the overall expected search time is $O(|J| + H_J(f))$ with probability at least $1 - t_0^{-2}$.   The construction time at each node of the trie $T$ is $\tilde{O}(|J|)$ in the case of $f = \Pi$ which dominates the other case of $f = B$.  Furthermore, the space complexity at each node of $T$ is $O(|J|)$ in the case of $f = \Pi$ which also dominates the other case of $f = B$.   Therefore, the overall construction time is $\tilde{O}(N|J|^2)$ and the storage required by $T$ and the auxiliary structures is $O(t_0|J|^2)$.
\end{proof}

\cancel{
The following corollary follows from Lemma~\ref{lem:DT-size} and Theorem~\ref{thm:algorithm-limit-output-2} immediately.

\begin{corollary}\label{corol:tries-for-B-Pi}
For each $l\geq 1$, denote by $H^{(B)}_l,H^{(\Pi)}_l$ the entropies of $B(I|_{G^*_l}),\Pi(I|_{G^*_l})$, respectively.
\begin{enumerate}
\item Using $N=\tilde{O}(n^2|G^*_l|^2)$ instances in the training phase, with probability at least $1-n^{-2}$,
  we can build a data structure such that it outputs $B(I|_{G^*_l})$ in $O(|G^*_l|+H^{(B)}_l)$ expected time for $I\sim \D$.
\item Using $N=\tilde{O}(\max(|G^*_l|^8,n))$ instances in the training phase, with probability at least $1-n^{-2}$,
  we can build a data structure such that it outputs $\Pi(I|_{G^*_l})$ in $O(|G^*_l|+H^{(\Pi)}_l)$ expected time for $I\sim \D$.
\end{enumerate}
\end{corollary}
}

\subsection{From split tree to Delaunay triangulation}
\label{sec:split-DT}

We show in this subsection how to construct $\del(P)$ from $\ST(P)$ in $O(|P|)$ expected time for any set of points $P$ in $\R^2$.
Recall that $\ST(P)$ denotes a split tree of $P$ that satisfies \eqref{eq:1}, but it is not necessarily the halving split tree of $P$.

Given two point sets $P$ and $Q$ such that $Q \subseteq P$, we can construct $\ST(Q)$ from $\ST(P)$ as follows.  Make a copy $T$ of $\ST(P)$.  Remove all leaves of $T$ that represent points in $P \setminus Q$.  Repeatedly remove nodes in $T$ with only one child until there is none.   For each node $u$ of $\ST(P)$ that is inherited by $\ST(Q)$, the same cut that splits $u$ in $\ST(P)$ is also used in splitting $u$ in $\ST(Q)$.   So $\ST(Q)$ may not be the halving split tree of $Q$.  For every surviving node $u$ in $\ST(Q)$, $R(u)$ may shrink due to point deletions, but $\hat{R}(u)$ may remain the same or expand.  For example, if the parent of $u$ is deleted but the grandparent of $u$ survives, then $\hat{R}(u)$ in $\ST(Q)$ is equal to  $\hat{R}(\text{parent}(u))$ in $\ST(P)$.   The key is that \eqref{eq:1} is satisfied by $\ST(Q)$.

\begin{lemma}\label{lemma:compute-inherent-tree}
	Suppose that we have built a $\ST(P)$ for a point set $P$.
	\renewcommand{\theenumi}{\roman{enumi}}
	\renewcommand{\labelenumi}{(\theenumi)}
	\begin{enumerate}
		\item For every subset $Q\subseteq P$, $\ST(Q)$ can be computed from $\ST(P)$ in $O(|P|)$ time.
		\item For every $m$ subsets $Q_1, \cdots, Q_m$ of $P$, if each $Q_i$ is ordered as in the preorder traversal of $\ST(P)$, then
		$\ST(Q_1), \cdots, \ST(Q_m)$ can be computed from $\ST(P)$ in $O\bigl(\alpha(|P|) \cdot \bigl(|P| + \sum_{i=1}^m |Q_i|\bigr)\bigr)$ time, where $\alpha(\cdot)$ is the inverse Ackermann function.
	\end{enumerate}
\end{lemma}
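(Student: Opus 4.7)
The plan is to handle the two parts separately, with part~(i) serving as a warm-up for the more intricate part~(ii). For part~(i), I would make a physical copy of $\ST(P)$ in $O(|P|)$ time and mark every leaf whose represented point belongs to $Q$. A first bottom-up pass prunes every leaf whose represented point is in $P \setminus Q$; a second pass contracts each internal node that ends up with exactly one surviving child by short-circuiting its parent pointer. Each surviving node $u$ inherits its cut orientation from the corresponding node of $\ST(P)$; the box $R(u)$ is recomputed from the surviving descendants and $\hat{R}(u)$ is set to $\hat{R}$ of the highest deleted ancestor (or of its own parent if the parent survives), which is easily tracked during the contraction pass. Both passes are linear in $|\ST(P)| = O(|P|)$.

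For part~(ii), the naive approach of invoking part~(i) separately for every $Q_j$ costs $\Theta(m|P|)$, which is too slow. The key observation is that if the points of $Q_j$ are listed in the preorder of $\ST(P)$, then $\ST(Q_j)$ is completely determined by the leaves of $Q_j$ together with the lowest common ancestors in $\ST(P)$ of consecutive pairs in $Q_j$: those LCAs are exactly the internal nodes that survive the contraction of part~(i), and the depths of these LCAs in $\ST(P)$ determine how they assemble into a tree via a standard Cartesian-tree-from-DFS construction. Concretely, I would (1)~pool all $\sum_j(|Q_j|-1)$ consecutive-pair queries into one batch, (2)~run Tarjan's offline LCA algorithm on $\ST(P)$ against this batch, which uses union-find with path compression and union by rank and runs in $O\bigl(\alpha(|P|)\cdot(|P|+\sum_j |Q_j|)\bigr)$ total time, and (3)~for each $Q_j$, sweep the interleaved sequence of its leaves and the returned LCAs, using a monotonic stack keyed on depth in $\ST(P)$ to assemble $\ST(Q_j)$ in time linear in $|Q_j|$.

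The main obstacle is to ensure that the tree assembled in step~(3) really is the tree that the slower part~(i) procedure would produce, including its geometric annotations. The cut orientation at each inherited internal node is a label carried by the corresponding node of $\ST(P)$, so it is copied for free once we know which $\ST(P)$ node each surviving internal node corresponds to. For the outer rectangle $\hat{R}$, recall from the discussion preceding the lemma that when a chain of ancestors collapses, a surviving child inherits $\hat{R}$ from the highest deleted ancestor along the chain. Because the stack-based assembly visits surviving nodes in preorder of $\ST(P)$, for each edge of $\ST(Q_j)$ we can obtain the highest skipped ancestor by comparing the depth of the child's position in $\ST(P)$ with the depth of its new parent (an LCA node); the corresponding $\hat{R}$ is then read off $\ST(P)$ in $O(1)$ time per surviving node. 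With these bookkeeping details handled, the cost is dominated by the offline LCA computation, giving the desired bound $O\bigl(\alpha(|P|)\cdot(|P|+\sum_{i=1}^m|Q_i|)\bigr)$.
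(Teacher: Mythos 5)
Your proposal is correct and follows essentially the same route as the paper. For part~(i) you reproduce the copy--prune--contract procedure that the paper describes just before the lemma statement. For part~(ii), the paper simply cites two facts: that building $\ST(Q_i)$ reduces to $O(|Q_i|)$ nearest-common-ancestor queries in $\ST(P)$ (Theorem~3.8 of the cited reference), and that an offline batch of such queries can be answered in the stated time by Tarjan's union-find-based algorithm. You supply the substance of both: the observation that the surviving internal nodes of $\ST(Q_j)$ are exactly the LCAs of consecutive preorder pairs and that the tree can be assembled from the resulting depth sequence by a stack, plus the call to Tarjan's offline LCA. One soft spot: your claim that the highest skipped ancestor, and hence the inherited $\hat{R}$, can be ``read off in $O(1)$ time by comparing depths'' is not quite right as stated --- identifying the ancestor of a node at a prescribed depth is a level-ancestor query, not a depth comparison, and the $O(1)$ solutions to that are table-lookup based, which the paper explicitly excludes from its model. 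The paper avoids this issue by emphasizing that what actually matters is that $\ST(Q)$ \emph{satisfies} property~\eqref{eq:1}, not that $\hat{R}(u)$ be stored explicitly; the downstream WSPD/nearest-neighbor construction operates on the boxes $R(u)$, which you can recompute bottom-up in linear time. With that clarification your argument is sound and matches the paper's approach.
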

\begin{proof}
	Claim (i) follows from our previous discussion.  For (ii), the construction of $\ST(Q_i)$ boils down to $O(|Q_i|)$ nearest common ancestor queries in $\ST(P)$~\cite[Thm~3.8]{DTbySort}, which can be solved in the time stated~\cite{Offline-NCA-Tarjan}.  There are solutions without the factor $\alpha(\cdot)$, but they require table lookup which is incompatible with the comparison-based model here.
\end{proof}

 The following connection between $\ST(P)$ and $\del(P)$ follows from~\cite{CK95,DTbySort}.

\begin{lemma}\label{lemma:fst-DT}
  There is a randomized algorithm that, given $\ST(P)$ of a point set $P$, constructs the Delaunay triangulation $\del(P)$ in $O(|P|)$ expected time.
\end{lemma}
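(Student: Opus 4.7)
The plan is to compose two known constructions cited in the lemma statement. First, I would use $\ST(P)$ to derive a well-separated pair decomposition (WSPD) of $P$ with $O(|P|)$ pairs in $O(|P|)$ time, following Callahan and Kosaraju~\cite{CK95}. Their WSPD construction is driven exactly by property~\eqref{eq:1}, which guarantees that the squares $\hat{R}(u)$ shrink by a constant factor along any root-to-leaf path of the tree; this is the only structural fact the construction needs, so it is irrelevant whether $\ST(P)$ happens to be the halving split tree or some other fair split tree obtained by pruning as in Lemma~\ref{lemma:compute-inherent-tree}. From the WSPD one extracts in $O(|P|)$ time a constant-stretch $t$-spanner of $P$ of linear size and bounded degree, and in particular an approximate-nearest-neighbor oracle that returns, for each $p\in P$, a point whose distance to $p$ is within a constant factor of $p$'s true nearest-neighbor distance.

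Second, I would feed this geometric information into a Buchin--Mulzer style randomized incremental construction of $\del(P)$~\cite{DTbySort}: insert the points of $P$ in a uniformly random order, and at each insertion locate the new point in the current Delaunay triangulation by a straight-line walk that starts from the triangle incident to the already-inserted approximate nearest neighbor of the new point (which can be maintained via the WSPD as points are inserted). The output can then be returned as soon as the last point is placed, with the usual flipping procedure restoring the Delaunay property after each insertion in $O(1)$ expected amortized work per insertion.

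The technical obstacle is showing that the total point-location walk length is $O(|P|)$ in expectation. The argument is a backward analysis on the random insertion order: conditioning on the set of the first $i$ inserted points, the $(i+1)$-th point is a uniformly random element of that set's complement, so its distance to its nearest already-inserted neighbor is, up to constants, the local Delaunay edge length around it in the intermediate triangulation. Combined with the bounded stretch and bounded degree of the WSPD-spanner, this bounds the expected number of triangles crossed by the walk at step $i$ by a constant, independent of $i$ and of $|P|$. Summing these $O(1)$ expected costs over the $|P|$ insertions, together with the $O(|P|)$ time for the WSPD and the nearest-neighbor preprocessing, yields the overall $O(|P|)$ expected running time claimed in the lemma.
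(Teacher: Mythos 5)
Your high-level goal --- exploit $\ST(P)$ to get nearest-neighbor information cheaply, then feed that into a Buchin--Mulzer~\cite{DTbySort} construction --- is the right instinct, but the specific route you take diverges from the paper's and has a real gap.

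The paper does not use a WSPD/spanner plus a randomized incremental construction with walks. Instead it uses the recursive divide-and-conquer of~\cite{DTbySort}: set $P_0 = P$, repeatedly sample $P_{i+1} \subset P_i$ of geometrically shrinking size, recursively build $\del(P_{i+1})$, and merge it with the exact nearest-neighbor graph $\mathit{NN}(P_i)$ to obtain $\del(P_i)$. The only thing the split tree has to supply is the family of exact nearest-neighbor graphs $\mathit{NN}(P_0), \mathit{NN}(P_1), \ldots$; and this is where Lemma~\ref{lemma:compute-inherent-tree}~(i) does the work, because $\ST(P_{i})$ can be pruned out of $\ST(P_{i-1})$ in $O(|P_{i-1}|)$ time, and $\mathit{NN}(P_i)$ follows from $\ST(P_i)$ in $O(|P_i|)$ time by~\cite{CK95}. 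The geometric decay of $|P_i|$ then gives $O(|P|)$ total. This sidesteps any dynamic nearest-neighbor maintenance entirely.

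Your version has two concrete problems. First, the claim that the WSPD yields an approximate nearest-neighbor oracle that ``can be maintained as points are inserted'' is not justified: the WSPD and the spanner derived from it are static objects over $P$; restricting a spanner to a subset of its vertices is not a spanner of that subset, so you cannot simply query the spanner for ``nearest already-inserted neighbor'' as the inserted set grows. Supporting that query requires extra machinery (e.g.\ per-point bookkeeping over its incident WSPD pairs, or a conflict-graph / BRIO structure) that you have not described and whose cost you have not accounted for. Second, the ``backward analysis'' sketch does not establish $O(1)$ expected walk length per insertion under a uniformly random insertion order. The inequality ``distance to nearest inserted neighbor $\approx$ local Delaunay edge length'' is not a theorem, and bounded stretch/degree of a spanner does not by itself bound the number of Delaunay triangles a straight-line walk crosses. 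The known $O(n)$ bounds for DT from nearest-neighbor information rely on either the recursive sampling scheme the paper uses, or on BRIO-style dependent sampling tied to the $\mathit{NN}$ graph --- not on a vanilla random permutation with spanner-guided walks. As written, your argument would at best give a weaker, unproven bound, so the step ``Summing these $O(1)$ expected costs over the $|P|$ insertions'' is not supported.

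If you reorganize around the recursive $\mathit{NN}$-graph algorithm of~\cite{DTbySort} and use Lemma~\ref{lemma:compute-inherent-tree}~(i) to supply $\ST(P_i)$ (and hence $\mathit{NN}(P_i)$) at each level, the $O(|P|)$ bound falls out cleanly and you avoid both gaps.
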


\begin{proof}
Let $\mathit{NN}(Q)$ denote the nearest-neighbor graph of a point set $Q$.  A randomized algorithm is described in~\cite{DTbySort} for constructing $\del(P)$ using nearest neighbor graphs.  This algorithm is recursive in nature.  First, initialize $P_0$ to be $P$.  Then, randomly sample a subset $P_1$ from $P_0$.  It is shown that $P_1$ satisfies some useful properties after trying an expected $O(1)$ random draws from $P_0$.  In particular, $|P_1|$ is less than $|P_0|$ by a constant factor.  Then $\del(P_1)$ is recursively computed.  And finally, $\del(P_1)$ is ``merged'' with $\mathit{NN}(P_0)$ to produce $\del(P_0)$.

In general, we need to randomly sample $P_{i+1}$ from $P_i$, recursively construct $\del(P_{i+1})$ and then merge it with $\mathit{NN}(P_i)$ to form $\del(P_i)$.   Drawing the random sample $P_{i+1} \subset P_i$ takes $O(|P_i|)$ worst-case time.  Assuming that $\mathit{NN}(P_i)$ is available for all $i \geq 0$ during the recursive calls, this randomized algorithm takes $O(|P_0|+|P_1|+\ldots)=O(n)$ expected time~\cite{DTbySort}.

What remains to be proved is that once $\ST(P)$ is given, we can compute $\mathit{NN}(P_0)$, $\mathit{NN}(P_1)$, $\ldots$ in $O(n)$ expected time.  Given any point set $Q$, it has been shown that $\mathit{NN}(Q)$ can be constructed in $O(|Q|)$ worst-case time from $\ST(Q)$~\cite{CK95}.  Therefore, we can generate $\mathit{NN}(P_0)$ from $\ST(P_0)$ in $O(|P_0|)$ time.  For $i \geq 1$, we construct $\ST(P_i)$ from $\ST(P_{i-1})$ in $O(|P_{i-1}|)$ time by Lemma~\ref{lemma:compute-inherent-tree}~(i), and then we compute $\mathit{NN}(P_i)$ from $\ST(P_i)$ in $O(|P_i|)$ time.  The total running time for computing $\mathit{NN}(P_0)$, $\mathit{NN}(P_1)$, $\ldots$ is $O(|P_0|+|P_1|+\ldots)=O(n)$.
\end{proof}

\subsection{Operation phase}

In the training phase, we compute the following information:
\begin{itemize}
	
	\item Apply Lemma~\ref{lem:test-one-two-piers} to learn an approximate partition $\mathcal{G}' = (G'_0,G'_1,\ldots)$ of $[1,n]$.
	
	\item Take an input instance $I$ and compute $\del(I|_{G'_0})$.  Note that changes in $I$ do not affect $\del(I|_{G'_0})$ because  $I|_{G'_0}$ consists of constant input points.
	
	\item Use $\lceil n^2\ln n \rceil$  instances to construct the point set $V$ of size $n$ and then construct $\del(V)$.
	
	\item For every $j \geq 1$, apply Theorem~\ref{thm:algorithm-limit-output-2} to construct the data structures for the purpose of retrieving $B(I|_{G'_j})$ and $\Pi(I|_{G'_j})$ for every instance $I$.
	
\end{itemize}
Afterwards, for each input instance $I$ that we encounter in the operation phase, we perform the following steps to construct $\del(I)$.  Recall that $G'_{+}$ denotes $[1,n] \setminus G'_0 = G'_1 \cup G'_2 \cup \cdots$.
\begin{enumerate}
	\item For each $j\geq 1$, use the data structures in Theorem~\ref{thm:algorithm-limit-output-2} for $G'_j$ to retrieve $B(I|_{G'_j})$ and $\Pi(I|_{G'_j})$.

	\item The results $B(I|_{G'_1}),B(I|_{G'_2}),\ldots$ determine, for each $p_i \in I|_{G'_{+}}$, the triangle $t$ in $\del(V)$ that contains $p_i$.
        For each $p_i\in I|_{G'_{+}}$, do a breadth-first-search in $\del(V)$ from $t$ to compute
        \[
        \Delta_i = \{t \in \del(V) : p_i \in C_t\},
        \]
        where $C_t$ is the circumscribing disk of $t$.
        The breadth-first-search can be applied here as the triangles in $\Delta_i$ are connected, as proved in \cite{SICOMP11selfimp}.
	
	\item For each $j \geq 1$, perform the following steps.
	\begin{enumerate}
		\item Take the halving split tree $\ST(I|_{G'_j})$ from the output $\Pi(I|_{G'_j})$.
		\item Traverse $\ST(I|_{G'_j})$ in preorder to produce an ordered list $Q_j$ of points in $I|_{G'_j}$.
		\item For every triangle $t \in \bigcup_{i \in G'_j} \Delta_i$, compute
		\[
		Q_{j,t}= Q_j \cap C_t.
		\]
		This is done by initializing $Q_{j,t} = \emptyset$ for each $t \in \bigcup_{j \in G'_j} \Delta_i$, followed by a linear scan of $Q_j$ (in order) that appends each $p_i \in Q_j$ scanned to $Q_{j,t}$ for every $t \in \Delta_i$.
	\end{enumerate}

	\item For every $j \geq 1$, apply Lemma~\ref{lemma:compute-inherent-tree}~(ii) to construct $\ST(Q_{j,t})$ for every $t \in \bigcup_{i \in G'_j} \Delta_i$
     from $\ST(I|_{G'_j}) = \ST(Q_j)$.
	
    \item For every $j \geq 1$ and every $t \in \bigcup_{i \in G'_j} \Delta_i$, compute $\del(Q_{j,t})$ from $\ST(Q_{j,t})$ using Lemma~\ref{lemma:fst-DT}.
	
	\item Compute the Voronoi diagram $\vor(V \cup I|_{G'_{+}})$ from the $\del(Q_{j,t})$'s.  This gives $\del(V \cup I|_{G'_{+}})$.
	
	\item Split $\del(V \cup I|_{G'_{+}})$ into $\del(I|_{G'_{+}})$ and $\del(V)$.

    \item Merge $\del(I|_{G'_{+}})$ with $\del(I|_{G'_0})$ to produce $\del(I)$.  Return $\del(I)$.	
\end{enumerate}


\begin{theorem}
	\label{thm:DT}
    Under the group product distribution model, there is a self-improving Delaunay triangulation algorithm with a limiting complexity of $O(n\alpha(n) + H_\mathrm{DT})$, where $H_\mathrm{DT}$ is the entropy of the distribution of the output Delaunay triangulation.  The storage used in the operation phase is $O(n^{10})$.  The training phase processes $\tilde{O}(n^8)$ instances in $\tilde{O}(n^{10})$ time.  The success probability is $1- O(1/n)$.
\end{theorem}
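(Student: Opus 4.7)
The plan is to mirror the structure of the proof of Theorem~\ref{thm:sort}. Correctness is immediate from the algorithm: step~6 is the standard observation that the Voronoi cells of $V \cup I|_{G'_+}$ inside a triangle $t$ of $\del(V)$ are determined by $V$ together with the points of $I|_{G'_+}$ lying in $C_t$, which are exactly $\bigcup_j Q_{j,t}$; step~7 is a routine peel-off of the canonical points; and step~8 combines the non-constant output with the precomputed $\del(I|_{G'_0})$. The bounds on the training time, the number of training instances, and the operation-phase storage are then read off by summing the guarantees of Lemma~\ref{lem:test-one-two-piers} (for the approximate partition), Lemma~\ref{lemma:construct-V-set} (for $V$ and $\del(V)$), and Theorem~\ref{thm:algorithm-limit-output-2} applied to each $G'_j$ with $t_0 = O(n^2|G'_j|^2)$ for $B$ and $t_0 = O(|G'_j|^8)$ for $\Pi$ as provided by Lemma~\ref{lem:DT-size}. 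Summing $t_0|G'_j|^2$ over $j$ gives the $O(n^{10})$ space and $\tilde{O}(n^{10})$ training time.

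For the operation-phase running time, I would bound step~1 by entropy arguments analogous to the sorting proof. By Theorem~\ref{thm:algorithm-limit-output-2}, step~1 runs in expected time $O\bigl(n + \sum_j H_{G'_j}(B) + \sum_j H_{G'_j}(\Pi)\bigr)$. From $\del(I)$ together with $\del(V)$ one can recover the containing triangle of each input point in $O(n)$ comparison-based time (standard point location by traversal), so Lemma~\ref{lem:ailon} gives $H\bigl(B(I|_{G'_1}),B(I|_{G'_2}),\ldots\bigr) = O(n + H_\mathrm{DT})$. Since the parameters $u_k$ are independent under $\D$, the joint entropy decomposes into the sum $\sum_j H_{G'_j}(B)$, which is therefore $O(n + H_\mathrm{DT})$. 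For $\Pi$, Lemma~\ref{lem:DT-size} gives $H_{G'_j}(\Pi) = O(\log |G'_j|^8) = O(|G'_j|)$, and summing yields $O(n)$.

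The main obstacle is steps~2--5, where the group-product distribution can put many points of a single $G'_j$ into the same triangle of $\del(V)$. I would attack this exactly as Step~2 in the sorting proof, using Lemma~\ref{lemma:construct-V-set}. Step~2 runs a BFS whose total cost is $O(\sum_i |\Delta_i|)$, and since each triangle $t$ contributes $|I|_{G'_+} \cap C_t|$ to this sum,
\begin{equation*}
\mathrm{E}\Bigl[\sum_i |\Delta_i|\Bigr] = \sum_{t \in \del(V)} \mathrm{E}\bigl[\,|I|_{G'_+} \cap C_t|\,\bigr] = O(|\del(V)|) = O(n)
\end{equation*}
by Lemma~\ref{lemma:construct-V-set}. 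The scan in step~3(c) touches each $(p_i,t)$ with $t \in \Delta_i$ at most once, so it contributes the same $O(n)$ expected time. In step~4, invoking Lemma~\ref{lemma:compute-inherent-tree}(ii) separately for each group $j$ with $P = I|_{G'_j}$ and the subsets $Q_{j,t}$ (ordered by preorder of $\ST(I|_{G'_j})$, as guaranteed by our construction of $Q_j$ in step~3(b)) gives a cost of $O(\alpha(n)(|G'_j| + \sum_t |Q_{j,t}|))$; summing over $j$ and using $\sum_j\sum_t |Q_{j,t}| = \sum_i |\Delta_i|$ yields expected time $O(n\alpha(n))$. Step~5 is dominated by $\sum_{j,t} \mathrm{E}[|Q_{j,t}|] = O(n)$ in expectation by Lemma~\ref{lemma:fst-DT}. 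Steps~6--8 merge $O(n)$ total Delaunay/Voronoi pieces using the Ailon~et~al.~technique from \cite{SICOMP11selfimp} and the precomputed $\del(I|_{G'_0})$ in $O(n)$ time. Collecting everything, the expected operation time is $O(n\alpha(n) + H_\mathrm{DT})$.

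Finally, a union bound over the three probabilistic events -- Lemma~\ref{lem:test-one-two-piers} (approximate partition learned almost surely), Lemma~\ref{lemma:construct-V-set} ($V$-list property, $1 - n^{-189}$), and Theorem~\ref{thm:algorithm-limit-output-2} applied to each $G'_j$ for both $B$ and $\Pi$ (each succeeding with probability $\geq 1 - t_0^{-2}$, and there are $O(n)$ such applications) -- gives overall success probability $1 - O(1/n)$.
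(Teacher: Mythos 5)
There is a genuine gap in the entropy argument for $\sum_{j\geq 1} H_{G'_j}(B)$. You assert that ``since the parameters $u_k$ are independent under $\D$, the joint entropy decomposes into the sum $\sum_j H_{G'_j}(B)$,'' but the indexing here is over the \emph{approximate} partition groups $G'_j$, not the true groups $G_k$. Two distinct $G'_a$ and $G'_b$ can both be subsets of the same $G_k$ (Definition~\ref{df:partition}(c) explicitly allows up to two per $G_k$), in which case $B(I|_{G'_a})$ and $B(I|_{G'_b})$ are both functions of the same hidden parameter $u_k$ and are \emph{not} independent. Consequently $H\bigl(B(I|_{G'_1}),B(I|_{G'_2}),\ldots\bigr) = \sum_j H_{G'_j}(B)$ is false in general; subadditivity only gives $H(\text{joint}) \leq \sum_j H_{G'_j}(B)$, which is the wrong direction: knowing the joint entropy is $O(n + H_\mathrm{DT})$ yields no upper bound on the sum. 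The paper closes this gap by first applying Lemma~\ref{lem:ailon} and the product-distribution structure to the \emph{true} partition, giving $\sum_{G_k \in \mathcal{G}} H_{G_k}(B) = H\bigl(B(I|_{G_1}),B(I|_{G_2}),\ldots\bigr) = O(n + H_\mathrm{DT})$, and then transferring this to the approximate partition via the observation that $B(I|_{G'_j})$ is a deterministic projection of $B(I|_{G_k})$ for $G'_j \subseteq G_k$, so that with at most two $G'_j$'s per $G_k$ one has $\sum_{j\geq 1} H_{G'_j}(B) \leq 2\sum_{G_k \in \mathcal{G}} H_{G_k}(B)$. Your argument needs this inequality.

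A secondary weakness is the treatment of step~6. You write that steps 6--8 ``merge $O(n)$ total Delaunay/Voronoi pieces \ldots in $O(n)$ time,'' but the worst-case merging cost of the group-wise Voronoi diagrams $\vor(Q_{j,t})$ is $\sum_{j} z_t |Q_{j,t}|$ per triangle $t$, where $z_t$ is the number of groups $G'_j$ with points in $C_t$. This product is not bounded by $O(n)$ deterministically; one must repeat the correlation argument from step~2 of the sorting proof, bounding $z_t$ by one plus the number of \emph{other} groups with points in $C_t$ (which is independent of $|Q_{j,t}|$ across distinct true groups) and applying Lemma~\ref{lemma:construct-V-set} to conclude $\mathrm{E}\bigl[\sum_{t}\sum_j z_t|Q_{j,t}|\bigr] = O(n)$. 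The geode-triangulation technique via Proposition~\ref{lemma:restricted-to-P*_s} only covers the stitching of the $\vor(P_t)$'s; the preceding merge of per-group pieces into each $\vor(P_t)$ needs this separate expectation bound, which does not appear in the Ailon~et~al.\ product-distribution setting.
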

\begin{proof}
By Lemma~\ref{lem:DT-size} and Theorem~\ref{thm:algorithm-limit-output-2}, the training phase processes $\tilde{O}(n^8)$ instances in $\tilde{O}(n^{10})$ time, and the resulting trie and auxiliary structures use $O(n^{10})$ space.

By Theorem~\ref{thm:algorithm-limit-output-2}, step~1 takes $O\bigl(n + \sum_{j \geq 1} H_{G'_j}(B) + \sum_{j \geq 1} H_{G'_j}(\Pi) \bigr)$ expected time.

Steps~2 and~3 run in $O\bigl(n + \sum_{p_i\in I|_{G'_{+}}} |\Delta_i|\bigr)$ time.
By definition,
    \[
    {\sum}_{p_i\in I|_{G'_{+}}} |\Delta_i|={\sum}_{t\in\del(V)} \bigl|\, I|_{G'_{+}} \cap C_t \, \bigr|.
    \]
By Lemma~\ref{lemma:construct-V-set}, it holds with probability at least $1 - n^{-189}$ that $\E{\bigl |\, I|_{G'_{+}}\cap C_t \, \bigr|} = O(1)$ for every triangle $t\in \del(V)$.   Therefore, it holds with probability at least $1 - n^{-189}$ that
\begin{equation}
	\mathrm{E}\Bigl[\sum_{p_i\in I|_{G'_{+}}} |\Delta_i|\Bigr]=\sum_{t\in\del(V)} \mathrm{E}\Bigl[\bigl| \, I|_{G'_{+}}\cap C_t \, \bigr| \Bigr]=O(n).
	\label{eq:DT-1}
\end{equation}
Hence, steps~2 and~3 run in $O(n)$ expected time with probability at least $1 - n^{-189}$.

By Lemma~\ref{lemma:compute-inherent-tree}~(ii), the expected running time of step~4 is
\[
\alpha(n)  \cdot O\Bigl(\sum_{j\geq 1} |G'_j|  + \sum_{\substack{j\geq 1 \\ t\in \del(V)}} \E{|Q_{j,t}|} \Bigr)
 = n\alpha(n)  + \alpha(n) \cdot O\Bigl(\sum_{t\in \del(V)} \E{\bigl|\, I|_{G'_{+}}\cap C_t \, \bigr|} \Bigr).
 \]
Therefore, by \eqref{eq:DT-1}, it holds with probability at least $1 - n^{-189}$ that step~4 runs in $O(n\alpha(n))$ expected time.

By Lemma~\ref{lemma:fst-DT}, the expected running time of step~5 is
\[
O\Bigl(\sum_{\substack{j \geq 1 \\ t \in \del(V)}} \E{ |Q_{j,t}| }\Bigr) =
O\Bigl(\sum_{t\in \del(V)} \E{\bigl|\, I|_{G'_{+}}\cap C_t \, \bigr|} \Bigr).
\]
Therefore, by \eqref{eq:DT-1}, step~5 runs in $O(n)$ expected time with probability at least $1 - n^{-189}$.

We will explain step~6 shortly.  For step~7, a randomized algorithm is given in~\cite{DT-split} that splits $\del(V \cup I|_{G'_{+}})$ in $O(n)$ expected time.
Step~8 takes $O(n)$ time because we can merge two Delaunay triangulations in linear time~\cite{DT-merge3,DT-merge1}.

We analyze step~6 in detail.  We decompose step~6 into two tasks.  For every triangle $t$ in $\del(V)$, let $P_t = I|_{G'_{+}}\cap C_t$.  Note that $P_t = \bigcup_{j\geq 1} Q_{j,t}$.  First, compute $\vor(P_t)$ for each triangle $t$ in $\del(V)$.  Second, construct $\vor(V \cup I_{G'_{+}})$ from the $\vor(P_t)$'s.

The first task of computing $\vor(P_t)$ is accomplished by merging the non-empty $\vor(Q_{j,t})$'s over all $j \geq 1$.  Note that $\vor(Q_{j,t})$ can be obtained from the given $\del(Q_{j,t})$ in linear time.   It is also known how to merge two Voronoi diagrams in linear time~\cite{DT-merge3,DT-merge1}.  So we can merge the non-empty $\vor(Q_{j,t})$'s in $O\bigl(\sum_{j\geq 1} z_t |Q_{j,t}|\bigr)$ time, where $z_t$ is the number of groups $G'_j$'s, $j \geq 1$, with points in $C_t$.  The expected running time is $O\bigl(\E{\sum_{t \in \del(V)} \sum_{j\geq 1} z_t |Q_{j,t}|}\bigr)$.  The analysis of a similar quantity $O\bigl(\E{\sum_r\sum_k  |Z_r||\sigma_{k,r}|}\bigr)$ appeared earlier in the proof of Theorem~\ref{thm:sort}.  The same analysis is applicable here and it gives $\E{\sum_{t \in \del(V)} \sum_{j\geq 1} z_t|Q_{j,t}|} = O\bigl(\sum_{t \in \del(V)} \sum_{j\geq 1} \E{|Q_{j,t}|}\bigr) = O(n)$.

The second task of constructing $\vor(V \cup I_{G'_{+}})$ from the $\vor(P_t)$'s can be accomplished as follows. We adopt the strategy in~\cite{SICOMP11selfimp}.  For each triangle $t$ in $\del(V)$, let $\nu_t$ denote the Voronoi vertex in $\vor(V)$ that is dual to $t$.  For each Voronoi cell $C$ of $\vor(V)$, pick the vertex $\nu_t$ of $C$ such that $|P_t|$ is smallest, breaking ties by selecting $t$ with the smallest id in $\del(V)$, and then triangulate $C$ by connecting $\nu_t$ to other vertices of $C$.  This gives the \emph{geode triangulation} of $\vor(V)$ with respect to $I_{G'_{+}}$.  The resulting triangles are the \emph{geode triangles}.

\begin{center}
\begin{minipage}[c]{0.9\textwidth}
\begin{proposition}[{\cite[Claim~4.5]{SICOMP11selfimp}}]
	\label{lemma:restricted-to-P*_s}
	Let $v_\tau$ be the point in $V$ whose Voronoi cell contains a geode triangle $\tau = \nu_{t_1}\nu_{t_2}\nu_{t_3}$.  Then, $\vor(V\cup I|_{G'_{+}}) \cap \tau = \vor\bigl(\{v_\tau\} \cup \bigcup_{i=1}^3 P_{t_i} \bigr) \cap \tau$.
\end{proposition}
\end{minipage}
\end{center}

By Proposition~\ref{lemma:restricted-to-P*_s}, we can compute $\vor\bigl(\{v_\tau\} \cup \bigcup_{i=1}^3 P_{t_i} \bigr) \cap \tau$ for every geode triangle $\tau$, and then stitch these Voronoi diagram fragments to form $\vor(V \cup I|_{G'_{+}})$.  The expected running time is dominated by the expected construction time of $\vor\bigl(\{v_\tau\} \cup \bigcup_{i=1}^3 P_{t_i} \bigr)$ summed over all $\tau$'s.   Since $\vor(P_{t_1})$,  $\vor(P_{t_2})$, $\vor(P_{t_3})$, and  $v_\tau$ can be merged to form $\vor\bigl(\{v_\tau\} \cup \bigcup_{i=1}^3 P_{t_i} \bigr)$ in linear time~\cite{DT-merge3,DT-merge1}, the expected merging time is $O\bigl(\E{1 + \sum_{i=1}^3 |P_{t_i}|}\bigr) = O(1)$ with probability at least $1 - n^{-189}$ because $\E{|P_{t_i}|}=\E{\bigl|\, I|_{G'_{+}} \cap C_{t_i} \, \bigr|} = O(1)$ is satisfied with this probability bound by Lemma~\ref{lemma:construct-V-set}.

We conclude that step~6 runs in $O(n)$ expected time with probability at least $1 - n^{-189}$.

In summary, it holds with probability at least $1 - n^{-189}$ that the expected running time of the operation phase is
\[
O\Bigl(n\alpha(n) + \sum_{j \geq 1} H_{G'_j}(B) + \sum_{j \geq 1} H_{G'_j}(\Pi) \Bigr).
\]
For every $j \geq 1$, there are $O(|G'_j|^8)$ different outputs of $\Pi(I|_{G'_j})$ by Lemma~\ref{lem:DT-size}.  Therefore,
\[
\sum_{j \geq 1} H_{G'_j}(\Pi) = O\Bigl(\sum_{j \geq 1} \ln |G'_j|\Bigr) = O(n).
\]
Since every group in the hidden partition $\mathcal{G}$ contains at most two distinct groups among $G'_1, G'_2, \cdots$, the following inequality is satisfied:
\begin{equation}
\sum_{j\geq 1} H_{G'_j}(B) \leq 2\sum_{G_k \in \mathcal{G}} H_{G_k}(B).
\label{eq:DT-2}
\end{equation}
Given an input instance $I$ and $\del(I)$, an algorithm is given in~\cite[Section~4.2]{SICOMP11selfimp} for finding the triangles in $\del(V)$ that contain the points in $I$.  The same algorithm also works in our case here.  The expected running time of this algorithm is dominated by 
\[
\sum_{t \in \del(V)} \E{| I \cap C_t|} = \bigl|\, I|_{G'_0} \cap C_t \, \bigr| + \mathrm{E}\Bigl[\sum_{p_i \in I|_{G'_{+}}} |\Delta_i|\Bigr],
\]
which by \eqref{eq:DT-1} is $O(n)$ with probability at least $1 - n^{-189}$.  Therefore, by Lemma~\ref{lem:ailon}, the entropy of the joint distribution of $B(I|_{G_1}), B(I|_{G_2}), \ldots$ over all $G_k \in\mathcal{G}$ is $O(n + H_\text{DT})$ with probability at least $1 - n^{-189}$.  The groups in $\mathcal{G}$ are independent under the group product distribution model, and therefore, $H\bigl(B(I|_{G_1}), B(I|_{G_2}), \ldots\bigr) = \sum_{G_k \in \mathcal{G}} H_{G_k}(B)$.  It follows from \eqref{eq:DT-2} that $\sum_{j\geq 1} H_{G'_j}(B) = O(n + H_\text{DT})$.

Altogether, the limiting complexity is $O(n\alpha(n) + H_\mathrm{DT})$.
\end{proof}

\section{Conclusion}

We introduce the group product distribution to model dependence among input items in designing self-improving algorithms.  This gives the new challenge of learning the hidden grouping of the input items.  It also calls for building new auxiliary structures in the training phase for input items in the same group in order to retrieve precomputed solutions quickly in the operation phase.  We show that these new difficulties can be overcome for sorting and Delaunay triangulation computation.   We achieve the optimal limiting complexity for sorting and a nearly optimal limiting complexity for Delaunay triangulation with a polynomial-time training phase.


\bibliographystyle{plain}
\bibliography{J-GSI}

\appendix

\section{Appendix}

Some lemmas declared in subsection~\ref{subsect:approx-partition} are proved in this appendix.

\subsection{Proof of Lemma~\ref{lemma:two-cases-for-two-piers}}
\label{sec:two-cases-for-two-piers}

\cancel{
\begin{lemma}\label{lemma:two-cases-for-two-piers}
	For any two non-constant input coordinates $\xi_1$ and $\xi_2$, exactly one of the following holds.
	\begin{enumerate}
		\item[(i)] $f(\xi_1,\xi_2)\equiv 0$ for some non-zero bivariate polynomial $f$ of degree at most $d_1$.
		\item[(ii)] The distribution of $(\xi_1,\xi_2)$ is $d_1$-generic.
	\end{enumerate}
\end{lemma}
}

	
	There are two cases depending on whether $\xi_1$ and $\xi_2$ are in the same group in $\mathcal{G}$.
	
	Case 1: $\xi_1$ and $\xi_2$ are in different groups in $\mathcal{G}$.   We show that the distribution of $(\xi_1,\xi_2)$ is $d_1$-generic.  Take an arbitrary non-zero bivariate polynomial $f(\xi_1,\xi_2)$ of degree at most $d_1$.  We can express $f(\xi_1,\xi_2)$ as $f_{0}(\xi_1) \cdot \xi_2^{d_1}+f_{1}(\xi_1) \cdot \xi_2^{d_1-1}+\ldots+f_{d_1-1}(\xi_1) \cdot \xi_2+f_{d_1}(\xi_1)$,  where $f_i(\xi_1)$ is a polynomial in $\xi_1$ with degree at most $i$.  Since $f$ is not the zero polynomial, there exists $j \in [0,d_1]$ such that $f_j(\xi_1)$ is not the zero polynomial whereas $f_i(\xi_1)$ is the zero polynomial for all $i \in [0,j-1]$.  Let $R$ be the set of roots of the equation $f_j(\xi_1) = 0$.  By the fundamental theorem of algebra, $|R| \leq j \leq d_1$.
	
	Since $\xi_1$ is a non-constant input coordinate by assumption, Lemma~\ref{lemma:two-cases-for-one-pier} implies that $\Pr [\xi_1 = c] = 0$ for all $c \in \R$.  Therefore,
	\begin{equation}
		\Pr \bigl[\xi_1 \in R] = 0.  \label{eq:diffgroup-0}
	\end{equation}
	Fix an arbitrary $r \in \R \setminus R$.   Then, $f(r,\xi_2)$ becomes a polynomial $\hat{f}(\xi_2)$  of degree at most $d_1$.
	Also, $\hat{f}(\xi_2)$ is not the zero polynomial because the coefficient $f_j(r)$ of the monomial $\xi_2^{d_1-j}$ is non-zero by our choice of $r \in \R \setminus R$.  The equation $\hat{f}(\xi_2) = 0$ has at most $d_1$ roots by the fundamental theorem of algebra.   Since $\xi_1$ and $\xi_2$ are in different groups in $\mathcal{G}$ by the case assumption, $\xi_2$ is independent from $\xi_1$.   Since $\xi_2$ is a non-constant input coordinate, Lemma~\ref{lemma:two-cases-for-one-pier} implies that
	\begin{equation}
		\forall \, r \in \R \setminus R, \quad \Pr \bigl[f(\xi_1,\xi_2) = 0 \, | \, \xi_1 = r\bigr] = 0.  \label{eq:diffgroup}
	\end{equation}
	Let $p : \R \rightarrow \R$ be the probability density function of the distribution of $\xi_1$.  We have
	\begin{eqnarray*}
		\Pr \bigl[f(\xi_1,\xi_2) = 0 \bigr] & = &
		\Pr \bigl[f(\xi_1,\xi_2) = 0 \, \wedge \, \xi_1 \not\in R\bigr] + \Pr \bigl[f(\xi_1,\xi_2) = 0 \, \wedge \, \xi_1 \in R\bigr] \\
		& \stackrel{\eqref{eq:diffgroup-0}}{=} & \Pr \bigl[f(\xi_1,\xi_2) = 0 \, \wedge \, \xi_1 \not\in R\bigr] \\
		& = & \int_{\R \setminus R} p(r) \cdot \Pr \bigl[ f(\xi_1,\xi_2) = 0 \, | \, \xi_1 = r \bigr] \, \mathrm{d}r  \\
		& \stackrel{\eqref{eq:diffgroup}}{=} & 0.
	\end{eqnarray*}

	Hence, the distribution of $(\xi_1,\xi_2)$ is $d_1$-generic by Definition~\ref{df:generic}.
	
	Case 2: $\xi_1$ and $\xi_2$ are in the same group in $\mathcal{G}$.   Conditions in (i) and (ii) in the lemma are logical negations of each other.   Therefore, it suffices to prove that the distribution of $(\xi_1,\xi_2)$ is $d_1$-generic under the assumption that (i) does not hold.
	
	Without loss of generality, assume that $\xi_1$ and $\xi_2$ are in the group $G_k \in \mathcal{G}$.  Therefore, $\xi_1=h(u_k)$ and $\xi_2=h'(u_k)$ for some bivariate polynomials of degrees at most $d_0$.  Take an arbitrary non-zero bivariate polynomial $f(\xi_1,\xi_2)$.  Substituting $\xi_1 = h(u_k)$ and $\xi_2 = h'(u_k)$ into $f(\xi_1,\xi_2)$, we obtain a polynomial $\hat{f}(u_k)$ with degree at most $d_0d_1$.  Moreover, $\hat{f}(u_k)$ is not  the zero polynomial---otherwise, $f(\xi_1,\xi_2) \equiv 0$, contradicting our assumption that (i) does not hold.  Then, since the distribution of $u_k$ is $d_0d_1$-generic by Definition~\ref{df:property}(b),  we conclude by Definition~\ref{df:generic} that $\Pr \bigl[ \hat f(u_k)=0 \bigr]=0$ and therefore $\Pr \bigl[ f(\xi_1,\xi_2)=0 \bigr]=0$.

\subsection{Proof of Lemma~\ref{lemma:case-1-for-3-piers}}
\label{sec:case-1-for-3-piers}

\cancel{
Given $\xi_1$ and $\xi_2$ as in Lemma~\ref{lemma:two-cases-for-two-piers},
we say $\xi_1,\xi_2$ are \emph{coupled} if $f(\xi_1,\xi_2)=0$ for some non-zero bivariate polynomial $f$ of degree at most $d_1$;
and \emph{non-coupled} otherwise.

\begin{lemma}\label{lemma:case-1-for-3-piers}
	For three input coordinates $\xi_1,\xi_2,\xi_3$ from the same group,
	there is a non-zero 3-variate polynomial $f$ with degree at most $d_1$
	such that $f(\xi_1,\xi_2,\xi_3)\equiv 0$.
\end{lemma}
}

	Our proof makes use of terminologies about algebraic varieties and ideals (e.g.~\cite{IVA}). Denote by $\R[z_1,\ldots,z_d]$ the ring whose elements are sums of monomials in $z_1,\ldots,z_d$ with real coefficients.	
	
	By assumption of the lemma, $\xi_1$, $\xi_2$ and $\xi_3$ are in the group, say $G_k$, in $\mathcal{G}$, and they are governed by the hidden parameter $u_k$.   For $i \in [1,3]$, there exists a bivariate polynomial $g_i(u_k)$, where $g_i = h_{j_i,k}^x$ or $g_i = h_{j_i,k}^y$ for some $j_i \in [1,n]$, such that $\xi_i = g_i(u_k)$.
	
	Define three polynomials $f_1,f_2,f_3$ in $\R[p,q,r,s,t]$ as follows:
	\begin{align*}
		f_1(p,q,r,s,t) & = r - g_1(p,q), \\
		f_2(p,q,r,s,t) & = s - g_2(p,q), \\
		f_3(p,q,r,s,t) & = t - g_3(p,q).
	\end{align*}
	Let $\mathcal{J}$ be the ideal generated by $f_1$, $f_2$, and $f_3$.  That is,
	$$\mathcal{J} := \left\{\sum_{i=1}^3 \alpha_i f_i :  \alpha_1,\alpha_2,\alpha_3 \in \R[p,q,r,s,t] \right\}.$$  Let $\mathcal{B}$ be the unique reduced Gr\"{o}bner basis of $\mathcal{J}$ with respect to the lex order $p > q > r > s > t$. (The uniqueness is given in~\cite[p.~92,~Prop~6]{IVA}.  A basis of $\mathcal{J}$ is any set $f'_1,\ldots,f'_m$ of polynomials such that $\mathcal{J}=\{\sum_{i=1}^{m}\alpha_if'_i:\alpha_1,\ldots,\alpha_m\in \R[p,q,r,s,t]\}$, and a Gr\"{o}bner basis of $\mathcal{J}$ is a basis of $\mathcal{J}$ with a special property~\cite[p.~77,~Def~5]{IVA}, and a reduced Gr\"{o}bner basis requires further properties~\cite[p.~92, Def~5]{IVA}.)
	The degrees of $f_1$, $f_2$, and $f_3$ are at most $d_0$ because the degrees of $g_1$, $g_2$, and $g_3$ are at most $d_0$ by Definition~\ref{df:property}(a).
	By the result of Dub\'{e}~\cite{GB-degree-bound}, the degrees of polynomials in $\mathcal{B}$ are bounded by $2((d_0^2/2) + d_0)^{16} = d_1$.
	
	Let $\mathcal{J}_2$ be the \emph{second elimination ideal} of $\mathcal{J}$ with respect to the same lex order $p > q > r > s > t$; formally, $\mathcal{J}_2=\mathcal{J} \cap \R[r,s,t]$.  Applying the Elimination Theorem~\cite[p.~116,~Thm~2]{IVA}, $\mathcal{B}_2 = \mathcal{B} \cap \R[r,s,t]$ is a Gr\"{o}bner basis of $\mathcal{J}_2$.
	
	Assume for the time being that $\mathcal{J}_2$ does not consist of the zero polynomial alone.
This implies the existence of a non-zero polynomial in $\mathcal{B}_2$, denoted by $f$.
The degree of $f$ is at most $d_1$ because $f \in \mathcal{B}_2\subseteq \mathcal{B}$ and the degree of every polynomial in $\mathcal{B}$ is at most $d_1$.

Denote $u_k=(x,y)$. Then, for $i \in [1,3]$, $\xi_i = g_i(x,y)$.
By the definitions of $f_1$, $f_2$, and $f_3$, it is obvious that for $i \in [1,3]$, $f_i(x,y,\xi_1,\xi_2,\xi_3) = 0$ for all $x,y \in \R$.
Because $f_1$, $f_2$, and $f_3$ form a basis of $\mathcal{J}$,
  $f(\xi_1,\xi_2,\xi_3)\equiv 0$ because $f(\xi_1,\xi_2,\xi_3) = \sum_{i=1}^3 \alpha_i(p,q,\xi_1,\xi_2,\xi_3) f_i(p,q,\xi_1,\xi_2,\xi_3) = 0$, thereby establishing the correctness of the lemma.
	
\smallskip What remains to be proved is that $\mathcal{J}_2$ does not consist of the zero polynomial alone.  Let $\CC$ be the set of complex numbers.  Let $\CC[z_1,\ldots,z_d]$ denote the ring whose elements are sums of monomials in $z_1,\ldots,z_d$ with complex coefficients.	Consider the ideal $\J$ generated by $f_1$, $f_2$ and $f_3$ in $\CC[p,q,r,s,t]$.  That is, $\J = \bigl\{\sum_{i=1}^3 \beta_if_i :  \beta_1,\beta_2,\beta_3 \in \CC[p,q,r,s,t] \bigr\}$.  Let $\J_2$ be the second elimination ideal of $\J$, i.e., $\J_2=\J\cap \CC[r,s,t]$.  It is known that if we compute the reduced Gr\"{o}bner basis of $\mathcal{J}$ using Buchberger's algorithm~\cite{IVA}, the result is also the reduced Gr\"{o}bner basis of $\J$.  Accordingly, $\mathcal{J}_2$ consists of the zero polynomial alone if and only if $\J_2$ consists of the zero polynomial alone.  Thus, it reduces to showing that $\J_2$ does not consist of the zero polynomial alone.
	
	Let $V(\J)=\{ (a,b,x,y,z) \in \CC^5 :   \forall \, f \in \J, \, f(a,b,x,y,z)=0 \}$,
	i.e., the subset of $\CC^5$ at which all polynomials in $\J$ vanish.
	Similarly, let $V(\J_2)= \{(x,y,z) \in \CC^3 : \forall \, f \in \J_2, \, f(x,y,z)=0 \}$.
	Define the projection $\varphi : \CC^5 \rightarrow \CC^3$ such that $\varphi(a,b,x,y,z) = (x,y,z)$.
	Then, $\varphi(V(\J))$ is the image of $V(\J)$ under $\varphi$.
	Let $\overline{\varphi(V(\J))}$ be the Zariski closure of $\varphi(V(\J))$, i.e., the smallest affine algebraic variety containing $\varphi(V(\J))$~\cite{IVA}.
	
	Because $\J$ is generated by $r - g_1(p,q)$, $s -g_2(p,q)$, and $t-g_3(p,q)$,
	we have $r=g_1(p,q)$, $s=g_2(p,q)$, and $t=g_3(p,q)$ for every element $(p,q,r,s,t) \in V(\J)$.
	In other words, once $p$ and $q$ are fixed, the values of $r$, $s$ and $t$ are completely determined.
	Hence, $V(\J)$ is isomorphic to $\CC^2$, which implies that $\dim(V(\J))=\dim (\CC^2)=2$.
	
	By the Closure Theorem~\cite[p.~125,~Thm~3]{IVA}, $V(\J_2)=\overline{\varphi(V(\J))}$.
	Therefore, $\dim(V(\J_2))=\dim(\overline{\varphi(V(\J))})$.
	It is also known that $\dim(\overline{\varphi(V(\J))}) \leq \dim(V(\J))$~\cite{IVA}.
	
	Altogether  $\dim(V(\J_2))\leq \dim(V(\J))=2$. Therefore, $V(\J_2)\neq \CC_3$  as $\dim(\CC_3)=3$.  This completes the proof because if $\J_2$ consists of the zero polynomial alone, $V(\J_2)$ would be equal to $\CC_3$.

\subsection{Proof of Lemma~\ref{lemma:case-2-for-3-piers}}
\label{sec:case-2-for-3-piers}

\cancel{

\begin{lemma}\label{lemma:case-2-for-3-piers}
	For three non-constant and pairwise non-coupled input coordinates $\xi_1,\xi_2,\xi_3$ that are not in the same group,
	the distribution of $(\xi_1,\xi_2,\xi_3)$ is $d_1$-generic.
\end{lemma}

}

	As $\xi_1$, $\xi_2$, and $\xi_3$ are not in the same group in $\mathcal{G}$, we can assume without loss of generality that $\xi_3$ is in a group in $\mathcal{G}$ different from those to which $\xi_1$ and $\xi_2$ belong.  By Definition~\ref{df:generic}, we need to prove that for every non-zero trivariate polynomial $f$ in $\xi_1$, $\xi_2$, and $\xi_3$ that has real coefficients and degree no more than~$d_1$, $\Pr \bigl[f({\xi}_1,\xi_2,\xi_3) = 0 \bigr]= 0$.
	
	Express $f $ as $f_0(\xi_1,\xi_2) \cdot \xi_3^{d_1}+f_1(\xi_1,\xi_2) \cdot \xi_3^{d_1-1}+\ldots+f_{d_1-1}(\xi_1,\xi_2) \cdot \xi_3+f_{d_1}(\xi_1,\xi_2)$, where $f_i(\xi_1,\xi_2) \in \R[\xi_1,\xi_2]$ and $f_i$ has degree at most $i$.  Since $f$ is a non-zero polynomial, there exists $j \in [0,d_1]$ such that $f_j$ is a non-zero polynomial and $f_i$ is the zero polynomial for $i \in [0,j-1]$.
	
	Denote by $R$ the set of roots of the equation $f_j(\xi_1,\xi_2) = 0$.  By the assumption of the lemma, $\xi_1$ and $\xi_2$ are uncoupled, which means that the distribution of $(\xi_1,\xi_2)$ is $d_1$-generic.  Thus, it follows from Definition~\ref{df:generic} that  $\Pr[ f_j(\xi_1,\xi_2) = 0] = 0$. In other words,
	\begin{equation}
		\Pr [ (\xi_1,\xi_2) \in R] = 0.  \label{eq:final-0}
	\end{equation}
	Fix an arbitrary point $(r,s) \in \R^2 \setminus R$. Then, $f(r,s,\xi_3)$ is a polynomial $g(\xi_3)$ in $\xi_3$ with degree at most $d_1$.   Moreover, $g(\xi_3)$ is not the zero polynomial because the coefficient $f_j(r,s)$ of the monomial $\xi_3^{d_1-j}$ is non-zero by our choice of $(r,s) \in \R^2 \setminus R$.  By the fundamental theorem of algebraic, there are at most $d_1$ roots to the equation $g(\xi_3) = 0$.   Since $\xi_3$ is not in the same group as $\xi_1$ or $\xi_2$, the distribution of $\xi_3$ is independent from that of $(\xi_1,\xi_2)$.  As $\xi_3$ is a non-constant input coordinate, the probability of $\xi_3$ being a root of $g(\xi_3)= 0$ is zero. Hence,
	\begin{equation}
		\forall \, (r,s) \in \R^2 \setminus R, \quad \Pr \bigl[ f(\xi_1,\xi_2,\xi_3) = 0 \, | \, \xi_1 = r, \xi_2 = s \bigr] = 0.  \label{eq:final-1}
	\end{equation}
	Let $p : \R^2 \rightarrow \R$ be the joint probability density function of the distribution of $(\xi_1,\xi_2)$.  We have
	\begin{eqnarray*}
		& & \Pr \bigl[f(\xi_1,\xi_2,\xi_3) = 0 \bigr] \\
		& = &
		\Pr \bigl[f(\xi_1,\xi_2,\xi_3) = 0 \, \wedge \, (\xi_1,\xi_2) \not\in R\bigr] + \Pr \bigl[f(\xi_1,\xi_2, \xi_3) = 0 \, \wedge \, (\xi_1,
		\xi_2) \in R\bigr] \\
		& \stackrel{\eqref{eq:final-0}}{=} & \Pr \bigl[f(\xi_1,\xi_2,\xi_3) = 0 \, \wedge \, (\xi_1,\xi_2) \not\in R\bigr] \\
		& = & \iint_{\R^2 \setminus R} p(r,s) \cdot \Pr \bigl[ f(\xi_1,\xi_2,\xi_3) = 0 \, | \, \xi_1 = r, \xi_2 = s\bigr] \, \mathrm{d}r \, \mathrm{d}s \\
		& \stackrel{\eqref{eq:final-1}}{=} & 0.
	\end{eqnarray*}
	Hence, the distribution of $(\xi_1,\xi_2,\xi_3)$ is $d_1$-generic by Definition~\ref{df:generic}, establishing the lemma.

\subsection{Proof of Lemma~\ref{lemma:learn-partition-DT-by-linearity}}
\label{sec:learn-partition-DT-by-linearity}

We first define an extension of a vector in $\R^m$ as follows.  Given a positive integer $d$ and a vector $(r_1,\ldots,r_m)\in \R^m$, we can extend the vector to a longer one that consists of all possible monomials in $r_1,\ldots,r_m$ whose degrees are at most $d$.  Let $\kappa = {{m+d}\choose{m}}$.  There are $\kappa$ such monomials, and we list them out in lexicographical order, i.e., $r_1^{d_1} \cdots r_m^{d_m} < r_1^{d'_1} \cdots r_m^{d'_m}$ if and only if there exists $j \in [1,m]$ such that for $d_i = d'_i$ for $i \in [1,j-1]$ and $d_j < d'_j$.  We use ${\mathcal{E}}_d(r_1,\cdots,r_m)$ to denote the extended vector of $(r_1,\ldots,r_m)$ as defined above.
	
Let $\xi_1,\ldots,\xi_m$ be the input coordinates that we are interested in.  Draw a sample of $\kappa$ input instances.  For $i \in [1,\kappa]$, let $\bigl(\xi_1^{(i)},\ldots, \xi_m^{(i)}\bigr)$ denote the instance of $(\xi_1,\ldots,\xi_m)$ in the $i$-th input instance drawn.  For $i \in [1,\kappa]$, let $\pmb{q}_i = {\mathcal{E}}_{d}\bigl(\xi_1^{(i)}, \ldots, \xi_m^{(i)}\bigr)$.  We show that we can use $\pmb{q}_1,\ldots,\pmb{q}_\kappa$ to decide whether condition~(i) or~(ii) below is satisfied by $\xi_1,\xi_2,\ldots,\xi_m$, assuming that exactly one of conditions~(i) and~(ii) is satisfied.
\begin{itemize}
\item Condition~(i): There exists a non-zero $m$-variate polynomial $f$ of degree at most $d$ such that $f(\xi_1,\ldots,\xi_m)\equiv 0$.
\item Condition~(ii): The distribution of $(\xi_1,\ldots,\xi_m)$ is $d$-generic.
\end{itemize}

Our method is to test the linear dependence of $\pmb{q}_1,\ldots,\pmb{q}_{\kappa}$ by running Gaussian elimination on the $\kappa \times \kappa$ matrix with columns equal to $\pmb{q}_1,\ldots,\pmb{q}_{\kappa}$.  This takes $O(\kappa^3)$ time.  If $\pmb{q}_1,\ldots,\pmb{q}_{\kappa}$ are found to be linear dependent, we report that condition~(i) is satisfied.  Otherwise, we report that condition~(ii) is satisfied.
To show that our answer is correct almost surely, it suffices to prove the following statements:
\begin{itemize}
		\item If condition~(i) holds, then $\pmb{q}_1,\ldots,\pmb{q}_{\kappa}$ are linearly dependent.
		\item If condition~(ii) holds, then $\pmb{q}_1,\ldots,\pmb{q}_{\kappa}$ are linearly independent almost surely.
\end{itemize}
	
Suppose that condition~(i) holds. Then, there is a non-zero $m$-variate polynomial $f$ of degree $d$ such that $f(\xi_1,\ldots,\xi_m)\equiv 0$.  Observe that $f(\xi_1,\ldots,\xi_m)$ is equal to the inner product $\langle {\mathcal{E}}_{d}(\xi_1,\ldots,\xi_m),\pmb{a} \rangle$ for some non-zero vector $\pmb{a}\in \R^\kappa$.  It follows that for $i \in [1,\kappa]$, $\langle \pmb{q}_i,\pmb{a}\rangle = f(\xi_1^{(i)},\ldots,\xi_m^{(i)}) = 0$.  In other words, the vectors $\pmb{q}_1,\ldots,\pmb{q}_\kappa$ in $\R^{\kappa}$ are all orthogonal to $\pmb{a}$.  Since the orthogonal complement of $\pmb{a}$ has dimension $\kappa-1$, the vectors $\pmb{q}_1, \ldots, \pmb{q}_{\kappa}$ must be linearly dependent.
	
Suppose that condition~(ii) holds.  We prove by induction that $\pmb{q}_{1},\ldots,\pmb{q}_{j}$ are linearly independent almost surely for $j = 1, 2, \ldots,\kappa$.

Consider the base case of $j = 1$.  The vector $\pmb{q}_1$ is linearly independent because $\pmb{q}_1 \not= (0,\ldots,0)$.
    This is because the first dimension of $\pmb{q}_1$ is always $1$.
	
Assume the induction hypothesis for some $j=k < \kappa$.  Consider $j=k+1$.  Since $k < \kappa$, the $k$ vectors $\pmb{q}_1, \ldots, \pmb{q}_k$ cannot span $\R^{\kappa}$.  Therefore, there exists a non-zero vector $\pmb{a}$ that is orthogonal to $\pmb{q}_i$ for all $i \in [1,k]$.  Consider the equation $\langle {\mathcal{E}}_{d}\bigl(\xi_1, \ldots, \xi_m\bigr)\cdot \pmb{a} \rangle =0$.  We can write this equation as $f\bigl({\xi}_1,\ldots,\xi_m\bigr) = 0$, where $f$ is a sum of monomials in ${\xi}_1,\ldots,\xi_m$ with coefficients equal to the corresponding entries in $\pmb{a}$.  Since $\pmb{a}$ is non-zero vector, $f$ is a non-zero $m$-variate polynomial of degree at most $d$.  By condition~(ii), the distribution of $(\xi_1,\ldots,\xi_m)$ is $d$-generic, which gives $\Pr \bigl[f({\xi}_1,\ldots,\xi_m) = 0 \bigr]= 0$ according to Definition~\ref{df:generic}.  Therefore, $\Pr[\langle \pmb{q}_{k+1}, \pmb{a} \rangle = 0] = \Pr \bigl[f\bigl({\xi}^{(k+1)}_1,\ldots,\xi^{(k+1)}_m\bigr) = 0 \bigr] = 0$.
	
By our choice of $\pmb{a}$, we have $\langle \pmb{q}_i, \pmb{a} \rangle = 0$ for $i \in [1,k]$.  It implies that $\langle \pmb{q}_{k+1}, \pmb{a} \rangle = 0$  if $\pmb{q}_{k+1}$ is equal to some linear combination of $\pmb{q}_1,\ldots,\pmb{q}_k$.  Therefore, $\pmb{q}_{k+1}$ is not a linear combination of $\pmb{q}_1,\ldots,\pmb{q}_{k}$ almost surely because $\Pr[\langle \pmb{q}_{k+1}, \pmb{a} \rangle = 0] = 0$.  Together with the induction hypothesis that $\pmb{q}_1,\ldots,\pmb{q}_k$ are linear independent almost surely, we conclude that $\pmb{q}_1,\ldots,\pmb{q}_{k+1}$ are linearly independent almost surely.


\end{document}